\documentclass[twocolumn,pre,aps,floatfix,superscriptaddress]{revtex4-1}
\usepackage{amsmath,amssymb,graphicx,times,enumitem,amsthm,mathtools}
\usepackage[dvipsnames]{xcolor}
\usepackage{bibunits}
\usepackage{chngcntr}
\theoremstyle{definition}

\newtheorem{proposition}{Proposition}

\usepackage[normalem]{ulem}
\usepackage{hyperref}

\newcommand{\eps}{\varepsilon}

\begin{document}

\title{Sensitive Dependence of Optimal Network Dynamics on Network Structure}

\author{Takashi Nishikawa}
\email{E-mail: t-nishikawa@northwestern.edu}
\affiliation{Department of Physics and Astronomy, Northwestern University, Evanston, IL 60208, USA}
\affiliation{Northwestern Institute on Complex Systems, Northwestern University, Evanston, IL 60208, USA}

\author{Jie Sun}
\affiliation{Department of Mathematics, Clarkson University, Potsdam, NY 13699, USA}
\affiliation{Department of Physics, Clarkson University, Potsdam, NY 13699, USA}
\affiliation{Department of Computer Science, Clarkson University, Potsdam, NY 13699, USA}
\affiliation{Clarkson Center for Complex Systems Science, Clarkson University, Potsdam, NY 13699, USA}

\author{Adilson E. Motter}
\affiliation{Department of Physics and Astronomy, Northwestern University, Evanston, IL 60208, USA}
\affiliation{Northwestern Institute on Complex Systems, Northwestern University, Evanston, IL 60208, USA}

\date{\today}

\begin{bibunit}

\begin{abstract}
The relation between network structure and dynamics is determinant for the behavior of complex systems in numerous domains.  An important long-standing problem concerns the properties of the networks that optimize the dynamics with respect to a given performance measure.  Here we show that such optimization can lead to {\it sensitive dependence} of the dynamics on the structure of the network. 
Specifically, using diffusively coupled systems as examples, we demonstrate that the stability of a dynamical state can exhibit sensitivity to unweighted structural perturbations (i.e., link removals and node additions) for undirected optimal networks and to weighted perturbations (i.e., small changes in link weights) for directed optimal networks.  As mechanisms underlying this sensitivity, we identify discontinuous transitions occurring in the complement of undirected optimal networks and the prevalence of eigenvector degeneracy in directed optimal networks.  These findings establish a unified characterization of networks optimized for dynamical stability, which we illustrate using Turing instability in activator-inhibitor systems, synchronization in power-grid networks, network diffusion, and several other network processes.  Our results suggest that the network structure of a complex system operating near an optimum can potentially be fine-tuned for a significantly enhanced stability compared to what one might expect from simple extrapolation.  On the other hand, they also suggest constraints on how close to the optimum the system can be in practice.  Finally, the results have potential implications for biophysical networks, which have evolved under the competing pressures of optimizing fitness while remaining robust against perturbations.
\end{abstract}

\maketitle

%:Introduction
\section{Introduction}
Building on the classical fields of graph theory, statistical physics, and nonlinear dynamics, as well as on the increasing availability of large-scale network data, the field of network dynamics has flourished over the past 15 years~\cite{Newman2010,Chen:2015}.
Much of the current effort in this area is driven by the premise that understanding the structure, the function, and the relation between the two will help explain the workings of natural systems and facilitate the design of engineered systems with expanded capability, optimized performance, and enhanced 
robustness.
There have been extensive studies on this structure-dynamics relation~\cite{barrat2008dynamical,Porter:2016,Strogatz:2001il} in a wide range of contexts, such as synchronization~\cite{Arenas2008,Belykh:2006qr,PhysRevLett.93.254101,PhysRevE.74.066115,PhysRevLett.110.174102,Nishikawa:2003xr,Pecora:2014zr,Skardal2014,Wiley:2006fk,Restrepo2004,Restrepo2006}; reaction, diffusion, and/or advection dynamics~\cite{Colizza:2007uq,PhysRevLett.110.028701,PhysRevE.89.020801,Youssef:2013}; dynamical stability~\cite{Pomerance:2009fk,Bunimovich2012}; controllability~\cite{PhysRevE.75.046103,Whalen:2015}; and information flow~\cite{PhysRevE.75.036105,Sun2015}.
Many of these studies have led to systematic methods for enhancing the dynamics through network-structural modifications, with examples including network control~\cite{PhysRevE.87.032909,PhysRevE.81.036101,PhysRevLett.96.208701,RisauGusman200952} and synchronization 
enhancement~\cite{Hagberg:2008wd,Hart:2015,Nishikawa:2010fk,Watanabe2010lsg}, where the latter has been demonstrated in applications~\cite{PhysRevLett.110.064106,PhysRevLett.108.214101,PhysRevLett.107.034102}.

%: Fig. 1
\begin{figure*}[htbp]
\begin{center}
\includegraphics[width=0.7\textwidth]{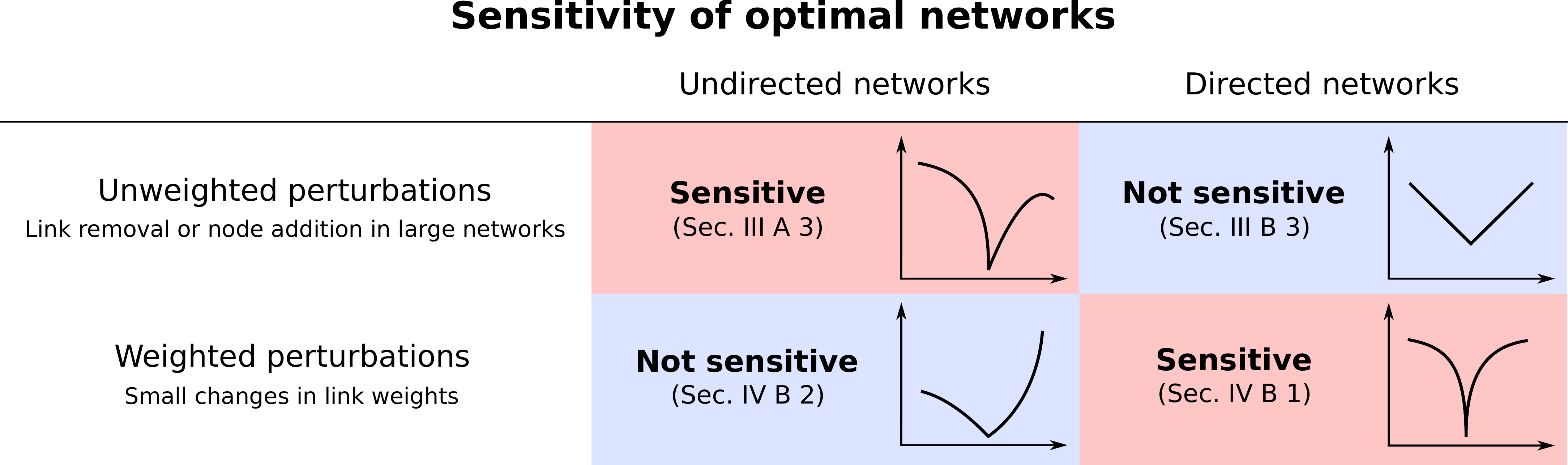}
\end{center}\vspace{-9pt}
\caption{\label{fig:summary}
Directed and undirected networks optimized for the stability of the network dynamics can be sensitive to weighted and unweighted structural perturbations, respectively.
The graphs schematically illustrate typical behavior for systems with sensitivity (in red boxes) and systems with no sensitivity (in blue boxes).
Under weighted perturbations, we actually show that all undirected networks (including non-optimal ones) are nonsensitive (lower blue box).
}
\end{figure*}

A fundamental question at the core of the structure-dynamics relation is that of optimization: which network structures optimize the dynamics of the system for a given function and what are the properties of such networks? The significance of addressing this question is twofold. 
First, 
knowledge of
the properties of optimized structures 
can inform
system architecture design.
For example, in power-grid networks, whose operation requires frequency synchronization among power generators, the structures that maximize 
synchronization stability could 
potentially be used to devise effective strategies for upgrading the system~\cite{Dobson:2007}.
Second, the identification of the network structures that guarantee the best fitness of natural complex systems can provide insights into the mechanisms underlying their evolution.
Examples of such systems include neuronal networks, whose (synaptic) connectivity structure is believed to have been optimized through evolution or learning for categorization tasks~\cite{Yamins10062014}, synchronization efficiency~\cite{Buzsaki:2004uq}, dynamical complexity~\cite{Bullmore:2009fk,Tononi24051994}, information transfer efficiency~\cite{10.1371/journal.pcbi.0030017,Bullmore:2009fk}, and/or wiring cost~\cite{Buzsaki:2004uq}. The question of 
optimizing the network structure 
can be conceptualized as the problem of maximizing or minimizing a measure of dynamical stability, robustness, or performance over all possible configurations of links connecting the dynamical units.

Here we demonstrate that optimized dynamics 
are
often highly sensitive to perturbations applied to the structure of the network. For concreteness, we focus on optimizing 
the linear stability of desired dynamical states over
all networks with a given number of nodes and links. We consider 
network
states in which the (possibly time-dependent) states of the individual nodes are identical across the network, such as 
in consensus dynamics, 
synchronized periodic or chaotic oscillations, and states of equilibrium in diffusion processes. 
We establish conditions under which the stability is sensitive or non-sensitive to structural perturbations, depending on the class of networks and the nature of the perturbations considered, as summarized in Fig.~\ref{fig:summary}.
In particular, we 
show that optimized stability can exhibit sensitivity under different types of perturbations for directed and undirected networks:
\begin{enumerate}[topsep=12pt, partopsep=0pt, leftmargin=12pt]
\item 
Sensitivity to link removals and node additions (unweighted perturbations) for undirected optimal networks in the limit of large network size 
(top left red box in Fig.~\ref{fig:summary}).\\[5pt]
We show that such sensitivity is observed for a class of optimal networks,
which we refer to as \underline{U}niform \underline{C}omplete \underline{M}ultipartite (UCM) 
networks. The UCM networks are composed of node groups of equal sizes that are fully connected to each other but have no internal links.
We prove that these networks
are the only networks that achieve the maximum stability possible for a given number of nodes and links.
The UCM networks are part of a larger class of networks,
characterized as having the \underline{M}inimum possible size of the largest \underline{C}omponents in their \underline{C}omplement (MCC) among all networks with a given number of nodes and links.
We provide a full analytical characterization of the MCC networks of arbitrary finite size and study their behavior as the network size approaches infinity.
\item 
Sensitivity 
to 
changes in
link weights (weighted perturbations) 
for finite-size 
directed optimal
networks
(bottom right red box in Fig.~\ref{fig:summary}).\\[5pt]
While specific examples can be found in the literature~\cite{Golub:2013,Eslami:1994,Chui:1997,Ottino-Loffler:2016}, no systematic study exists on general 
mechanisms and conditions for such sensitivity.
Here we provide such 
conditions
in terms of the spectral degeneracy of the 
network 
by
establishing the scaling relation between the stability and the perturbation size.
These conditions imply
that spectral degeneracy underlies such sensitivity to link-weight perturbations.
We 
expect this
sensitivity to be observed in many applications 
since
spectral degeneracy appears to be common in real networks~\cite{MacArthur:2009}.
Moreover, here we show that optimization tends to increase the incidence of spectral degeneracy, and we also show that the network exhibits approximately the same sensitivity even when the degeneracy (or the optimality) is only approximate.
\end{enumerate}
In addition to these two cases of sensitivity, we have results on the absence of sensitivity in the other two cases (blue boxes in Fig.~\ref{fig:summary}).
We illustrate the implications of our results using a general class of diffusively coupled systems for which the network spectrum is shown to determine the stability and other aspects of the
dynamics to be optimized. The specific cases we analyze include the rate of diffusion over networks, the critical threshold for Turing instability in networks of activator-inhibitor systems, and 
synchronization stability in power grids and in networks of chaotic oscillators.

The remainder of the article is organized as follows.
We first define the class of 
network dynamics
under consideration (Sec.~\ref{sec:class-systems}).
We then
present our results
on the two types of sensitivity anticipated above (Secs.~\ref{sec:opt} and \ref{sec:generic}), followed by 
examples of physical
systems exhibiting these types of sensitivity (Sec.~\ref{sec:phys-examples}).
We conclude 
with a discussion on further
implications of 
our
results (Sec.~\ref{sec:discussion}).

\section{Network dynamics considered}\label{sec:class-systems}

We aim to address a wide range of network dynamics in a unified way. For this purpose we consider the dynamics of 
networks
of coupled dynamical units governed by the following general equation with pairwise interactions:
\begin{equation}\label{eqn:coupled-syst}
\dot{\mathbf{x}}_i 
= \mathbf{F}\bigl(\mathbf{x}_i, \mathbf{H}_{i1}(\mathbf{x}_i,\mathbf{x}_1),\ldots,
 \mathbf{H}_{in}(\mathbf{x}_i,\mathbf{x}_n)\bigr)
\end{equation}
for $i = 1,\ldots,n$, where $n$ is the number of dynamical units (nodes), $\mathbf{x}_i = \mathbf{x}_i(t)$ is the column vector of state variables for the $i$th 
node
at time $t$, and $\dot{\mathbf{x}}_i$ denotes the time derivative of $\mathbf{x}_i$.
The function $\mathbf{F}(\mathbf{x},\mathbf{y}_1,\ldots,\mathbf{y}_n)$ is generally nonlinear and describes how the dynamics of 
node
$i$ 
are
influenced by the other 
nodes
through intermediate variables $\mathbf{y}_j = \mathbf{H}_{ij}(\mathbf{x}_i,\mathbf{x}_j)$, where $\mathbf{y}_j = \mathbf{0}$ indicates no interaction.
This means that the dynamics of an isolated 
node
are
described by $\dot{\mathbf{x}} = \mathbf{F}(\mathbf{x},\mathbf{0},\ldots, \mathbf{0})$.
We assume that the dependence of $\mathbf{F}$ on $\mathbf{y}_j$ is the same for all $j$ (or more precisely, that $\mathbf{F}$ is invariant under any permutation of $\mathbf{y}_1,\ldots,\mathbf{y}_n$).
Thus, the topology of the interaction network and the strength of individual pairwise coupling are not encoded in $\mathbf{F}$, but rather in the $(i,j)$-dependence of the coupling function $\mathbf{H}_{ij}$. This extends the framework introduced in Ref.~\cite{Pecora:1998zp} and can describe a wide range of dynamical processes on 
networks, including consensus protocol~\cite{4140748,4700861}, diffusion over networks~\cite{Newman2010}, emergence of Turing patterns in networked activator-inhibitor systems~\cite{Nakao:2010fk}, relaxation in certain fluid networks~\cite{maas1987transportation}, and synchronization of power generators~\cite{Motter:2013fk} as well as other coupled identical and non-identical 
oscillators~\cite{PhysRevE.61.5080,kuramoto1984chemical,Nishikawa:2006fk,Pecora:1998zp}.
Details on these examples can be found in Supplemental Material~\cite{sm}, Sec.~\ref{si:sec:example-systems}.

For the class of systems described by Eq.~\eqref{eqn:coupled-syst}, we consider {\it network-homogeneous states} given by 
\begin{equation}\label{eqn:sync-cond}
\mathbf{x}_1(t) = \cdots = \mathbf{x}_n(t) = \mathbf{x}^*(t),
\end{equation} 
where $\mathbf{x}^*$ satisfies the equation for an isolated 
node,
$\dot{\mathbf{x}}^* = \mathbf{F}(\mathbf{x}^*,\mathbf{0},\ldots, \mathbf{0})$.
Each of the example systems mentioned above exhibits such a state: uniform agreement in consensus protocols, 
synchronous dynamics
in oscillator networks, uniform occupancy in network diffusion, uniform concentration in coupled activator-inhibitor systems, and the equilibrium state in the fluid networks. Note that certain non-homogeneous states can also be represented 
using
such a solution by changing the frame of 
reference (demonstrated for specific examples of non-uniform phase-locked states in power grids and phase oscillator networks in Supplemental Material~\cite{sm}, Sec.~\ref{si:sec:example-systems}A).

To facilitate the stability analysis, we make two general assumptions on the nature of node-to-node interactions when the system is close to 
a network-homogeneous state.
Assumption~(A-1): The interactions are ``diffusive,'' in the sense that the coupling strength between two nodes, $\mathbf{H}_{ij}(\mathbf{u},\mathbf{v})$, is to first order proportional to the difference between their states, $\mathbf{v} - \mathbf{u}$.
In particular, we assume that the coupling strength vanishes as the node states become equal.
Assumption (A-2): There is a constant coupling matrix $A = (A_{ij})$ encoding the structure of the network of interactions, in the sense that the proportionality coefficient (the ``diffusion constant'') in assumption (A-1) can be written as $A_{ij} \cdot \mathbf{G}(t)$, where the scalar $A_{ij}$ represents the strength of coupling from node $j$ to node $i$, and the matrix-valued function $\mathbf{G}(t)$ is independent of $i$ and $j$.

Under these assumptions, we define a stability function $\Lambda(\alpha)$ for each complex-valued parameter $\alpha$ 
(derivation presented in Appendix~\ref{sec:stab_func}), 
which captures the factors determining the stability of the network-homogeneous state but is
independent of the network structure. This function, referred to as a master stability function in the literature, was originally
derived for a general class of systems that is different from the one we consider here~\cite{PhysRevE.61.5080,Pecora:1998zp}.
The influence of the network structure on the stability is only through the (possibly complex) eigenvalues of the Laplacian matrix $L$, defined by 
\begin{equation}\label{eqn:laplacian}
L_{ij} := d_i \delta_{ij} - A_{ij}, \quad d_i := \sum_{j=1}^n A_{ij}.
\end{equation}
Note that $L$ always has a null eigenvalue $\lambda_1 = 0$ associated with the eigenvector $(1,\ldots,1)^T$, which corresponds to the mode of instability that does not affect the condition 
in Eq.~\eqref{eqn:sync-cond}.
The maximum Lyapunov exponent measuring the stability of the network-homogeneous state is then given by
\begin{equation}\label{eqn:lambda-max}
\Lambda_{\max} := \max_{j\ge 2} \Lambda(\lambda_j),
\end{equation}
i.e., it is stable if $\Lambda_{\max} < 0$, and unstable if $\Lambda_{\max} > 0$.
In addition, $|\Lambda_{\max}|$ gives the asymptotic rate of exponential convergence or divergence.

As an example of stability optimization, we consider the following fundamental question:
\begin{quote}
\textit{For a given number of nodes representing dynamical units, and a given number of links with identical weights, what is the assignment of links that maximizes the rate of convergence to a network-homogeneous state?}
\end{quote}
In the context of this problem, we may assume $A_{ij}$ to be binary ($A_{ij} = 0$ or $1$) without loss of generality, 
since any link weight $\eps \neq 1$ can be factored out of $A_{ij}$ (making $A_{ij}$ binary) and absorbed into $\mathbf{G}(t)$, which is then accounted for by the stability function $\Lambda(\alpha)$.

\section{Sensitivity to unweighted perturbations\label{sec:opt}}

In this section, we demonstrate the sensitivity of the convergence rate to link removal and node addition in optimal undirected networks (Subsection A).
We then show that such sensitivity is not possible for optimal directed networks (Subsection B).

\subsection{Undirected networks\label{sec:sensitivity-undirected}}

\subsubsection{The optimization problem\label{sec:undirected-opt}}

For the class of networks with a fixed number of undirected links $m = \sum_i\sum_{j>i} A_{ij}$, we have the additional constraint that the matrix $A$ is symmetric.
This constraint can arise from the symmetry of the physical processes underlying the interaction represented by a link, such as the diffusion of chemicals through a  
channel connecting reactor cells in a chemical reaction network.
In this case, the maximization of the convergence rate can be succinctly formulated as the minimization of $\Lambda_{\max}$:
\begin{equation}\label{eqn:undirected-opt}
\begin{array}{ll}
\text{Minimize} & \Lambda_{\max}(A)\\[2mm]
\text{subject to} & A_{ij}\in\{0,1\}, \,\, A_{ii} = 0, \,\, A_{ij} = A_{ji}, \\[2mm]
& \displaystyle\sum_i\sum_{j > i} A_{ij} = m.
\end{array}
\end{equation}
If the stability function $\Lambda(\alpha)$ is strictly decreasing on the real line $\{\alpha\in\mathbb{C} \,\vert\, \text{Im}(\alpha) = 0 \}$ 
for $\text{Re}(\alpha) \le \bar{\lambda} := 2m/(n-1)$ (which is satisfied in most  
cases, as detailed in
Supplemental Material~\cite{sm}, Sec.~\ref{si:sec:example-systems}), {\it maximizing the convergence rate to the network-homogeneous state for undirected networks is equivalent to maximizing $\lambda_2$}, the smallest eigenvalue excluding the null eigenvalue that exists for any networks.
We note that the problem is also equivalent to minimizing a bound on the deviations from a network-homogeneous state in a class of networks of non-identical oscillators~\cite{Sun:2009hc}.
There have been a number of previous studies~\cite{PhysRevE.81.025202,PhysRevLett.95.188701,Nishikawa:2006fk,Nishikawa:2006kx,Wang:2007kx,6561538} on the related (but different) problem of maximizing the eigenratio $\lambda_2/\lambda_n$, which measures the synchronizability of the network structure for networks of coupled chaotic oscillators.

The maximization of $\lambda_2$ is generally a challenging task, except for the following particular cases.
For $m = n(n-1)/2$, the only network with $n$ nodes and $m$ links is the complete graph, resulting in the (maximum) value $\lambda_2 = n$.
For $m = n-1$ (implying that the network is a tree), the maximum possible value of $\lambda_2 = 1$ is achieved if and only if the network has the star configuration~\cite{maas1987transportation}. For other values of $m$ (assuming $m \ge n-1$ to ensure that the network is connected), it is challenging even numerically, mainly because each $A_{ij}$ is constrained to be either $0$ or $1$, which makes it a difficult non-convex combinatorial optimization.
The problem of maximizing $\lambda_2$
has been a subject of substantial interest in graph theory, with several notable results in the limit $n \to \infty$, assuming that each node in the network has the same degree and that this common degree is constant~\cite{Alon:1986uq,Friedman:1989:SER:73007.73063,Lubotzky:1988fk} or assuming a fixed maximum degree~\cite{Kolokolnikov:2015}.
In contrast to these bounded-degree results, below we address the maximization of $\lambda_2$ in a different limit, $n \to \infty$, keeping the link density $\phi := 2m/[n(n-1)]$ constant.

\subsubsection{Optimal networks: UCM and MCC\label{sec:opt-undirected}}

Here we define UCM and MCC networks, and then show that they provide analytical solutions of the optimization problem formulated in the previous section.
To define these networks, we first introduce two general quantities that characterize 
connected component sizes.
For a given $k$, let function $M(n,k)$ denote the maximum number of links allowed for any $n$-node network whose connected components have size $\le k$.
Given $m$, we define $k_{n,m}$ to be the smallest (necessarily positive) integer $k$ for which $m \le M(n,k)$, i.e., $k_{n,m}$ is the minimum size of the largest connected components of any network with $n$ nodes and $m$ links. 
We 
also use the notion of graph complements~\cite{MR2159259,Duan:2008fk,Nishikawa:2010fk}.  For a given network with adjacency matrix $A$, its complement
is defined as the network with the adjacency matrix $A^c$ given by 
\begin{equation}\label{eqn:def-comp}
A^c_{ij} = (1 - A_{ij})(1-\delta_{ij}).
\end{equation}
With these definitions and notations, we now
define an {\it MCC network} to be one whose largest connected component of the complement is of size $k_{n,m_c}$, where $m_c := n(n-1)/2 - m$ is the number of links in the complement.

To see how the definition of MCC networks relates to the maximization of $\lambda_2$, we note that the maximum Laplacian eigenvalue of any network is upper-bounded by its largest component size (stated and proved as Proposition~\ref{prop:max-lambda} in Supplemental Material~\cite{sm}, Sec.~\ref{si:sec:prop}B).
We also note that the nonzero Laplacian eigenvalues of a network and its complement are related through
\begin{equation}\label{eqn:comp-eig}
\lambda_{n-i+2}^c = n - \lambda_i, \quad i=2,\ldots,n,
\end{equation} 
where we denote the Laplacian eigenvalues of the network as $0 = \lambda_1 \le \lambda_2 \le\cdots\le \lambda_n$ (noting that the symmetry of $A$ constrains them to be real), and those of the complement as $0 = \lambda_1^c < \lambda_2^c \le \cdots \le \lambda_n^c$.
Thus, the smaller the largest component size in the complement, the smaller we expect the eigenvalue $\lambda_n^c$ to be, which would imply larger $\lambda_2$ according to Eq.~\eqref{eqn:comp-eig}.

For special combinations of $n$ and $m$, namely, $n = k\ell$ and $m = k^2 \ell(\ell - 1)/2$ with arbitrary positive integers $k$ and $\ell$, the complement of an MCC network necessarily consists of $\ell$ components, each fully connected and of size $k$ 
(stated and proved as Proposition~\ref{prop:UCM-unique} in Supplemental Material~\cite{sm}, Sec.~\ref{si:sec:prop}A).
We refer to this unique MCC network as the {\it UCM network} for the given $n$ and $m$.
Translating the structure of its complement to that of the network itself, the UCM network can be characterized as the one in which (i) the nodes are divided into $\ell$ groups of equal size $k$ (uniform), (ii) all pairs of nodes from different groups are connected (complete), and (iii) no pair of nodes within the same group are connected (multipartite).
Figure~\ref{fig:ucm} shows examples of UCM and MCC networks.

%: Fig. 2
\begin{figure}[t]
\begin{center}
\includegraphics[width=1.0\columnwidth]{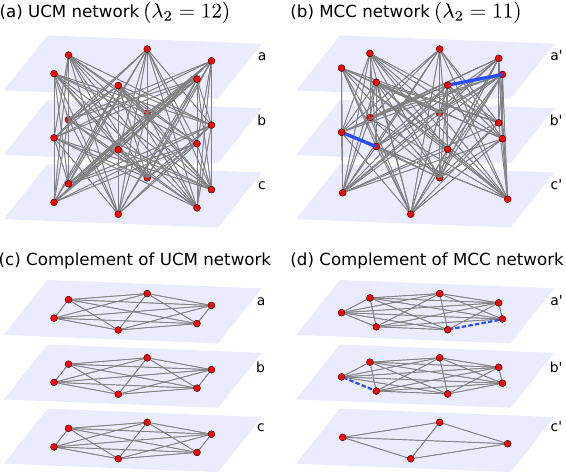}
\end{center}\vspace{-9pt}
\caption{\label{fig:ucm}
UCM and MCC networks with $n=18$ nodes.
(a) The UCM network with $\ell=3$ groups (labeled a, b, and c) of $k=6$ nodes each.
All pairs of nodes belonging to different groups are connected, while all pairs within the same group are not connected, leading to a total of $m=k^2 \ell(\ell - 1)/2=108$ links.
(b) An MCC network constructed with the same number of nodes but with one fewer link ($m=107$) and groups of unequal sizes (labeled $\text{a}'$, $\text{b}'$, and $\text{c}'$, and of sizes $7$, $7$, and $4$, respectively).
Note that in this case some nodes within the same group are connected (as indicated by solid blue lines).
(c) The complement of the UCM network in panel~(a), which has $\ell k(k-1)/2 = 45$ links.
In the complement, a node pair is connected if they are in the same group, and not connected if they are from different groups.
(d) The complement of the MCC network in panel~(b), which has $46$ links.
Since it has one more link than what can be accommodated by three isolated groups of size $6$ [as in panel~(c)], the minimum possible size of the largest component in the complement equals $7$ in this case.
Note that groups $\text{a}'$ and $\text{b}'$ have missing links (indicated by dashed blue lines), which correspond to the links within groups $\text{a}'$ and $\text{b}'$ in panel (b).
The required increase in the size of the largest component in the complement forces $\lambda_2$ to decrease by one.}
\end{figure}

To establish the optimality of UCM and MCC networks, we first prove the following general upper bound:
\begin{equation}\label{bound}
\lambda_2 \le \lfloor 2m/n \rfloor = \lfloor \phi(n-1) \rfloor
\end{equation}
for any $n$ and $m$ for which the link density $\phi = 2m/[n(n-1)] < 1$ (where $\lfloor x \rfloor$ denotes the largest integer not exceeding $x$).
We prove this bound using Proposition 3.9.3 of Ref.~\cite{brouwer2012spectra}, which states that 
\begin{equation}\label{prop393}
\lambda_n \ge d_\text{max} + 1
\end{equation}
holds true for any network (with at least one link), where $d_\text{max}$ denotes the maximum degree of the network.
Applying this proposition to the complement of the network (rather than the network itself)
gives
\begin{equation}\label{eqn:bound-proof}
\lambda_n^c \ge d_\text{max}^c + 1 \ge \lceil \bar{d}^c \rceil + 1 = \lceil 2m_c/n \rceil + 1,
\end{equation}
where $d_\text{max}^c$ and $\bar{d}^c$ denote the maximum and mean degree of the complement, 
respectively, and $\lceil x \rceil$ denotes the smallest integer larger than or equal to $x$.
Thus, we have
$\lambda_2 = n - \lambda_n^c 
\le n - (\lceil 2m_c/n \rceil + 1)
= n - (\lceil n - 1 - 2m/n \rceil + 1)
= \lfloor 2m/n \rfloor$, 
establishing
Eq.~\eqref{bound}.

The optimality of UCM networks can now be established for any combination of $n$ and $m$ for which the UCM network can be defined [i.e., $n = k\ell$ and $m = k^2 \ell(\ell - 1)/2$].
Indeed, since each connected component in the complement of such a UCM network is fully connected and of size $k$, it follows that the maximum Laplacian eigenvalue of the complement is $\lambda_n^c = k$.
(This is because the Laplacian spectrum of a network is the union of the Laplacian spectra of its connected components, which is a known fact presented as Proposition~\ref{prop:union} in Supplemental Material~\cite{sm}, Sec.~\ref{si:sec:prop}B.) 
We thus
conclude that $\lambda_2 = n-k = \lfloor 2m/n \rfloor$, implying that the UCM network attains the upper bound in Eq.~\eqref{bound} and has the maximum possible $\lambda_2$.
Moreover, the UCM network is actually the only optimizer among all networks with the same $n$ and $m$ 
(proved in Appendix~\ref{sec:ucm_optimality}).

For other MCC networks, we establish the formula
\begin{equation}\label{eqn:lambda2_pre}
\lambda_2 = n-k_{n,m_c}
\end{equation}
for any link density $\phi < 1$ and use it to show that MCC networks attain the upper bound in Eq.~\eqref{bound} and thus are optimal in several cases of lowest and highest link densities, as well as for a range of link density around each value corresponding to a UCM network.
We also show that each MCC network is locally optimal in the space of all networks with the same $n$ and $m$ in the sense that $\lambda_2 \le n-k_{n,m_c}$ holds true for any network obtained by rewiring a single link.
Proofs of these results 
can be found in Supplemental Material~\cite{sm}, Secs.~\ref{si:sec:prop}B and \ref{si:sec:prop}C.
The optimality of these networks, which have fully connected clusters in the complement, suggests potential significance of other, more general network motifs~\cite{Milo:2002}, whose statistics 
have
been studied in the context of network optimization~\cite{Kaluza:2007,Sporns:2004}.

These $\lambda_2$-maximizing networks can be explicitly constructed.
In fact, 
given any
$n$ and $m$, an MCC network with $n$ nodes and $m$ links can be constructed
by forming as many isolated, fully connected clusters of size $k_{n,m_c}$ as possible in the complement of the network.
Details on this procedure are described in Appendix~\ref{si:sec:construction_MCC}, and a MATLAB implementation is available for download~\cite{software}.
This procedure yields the (unique) UCM network if $n = k\ell$ and $m = k^2 \ell(\ell - 1)/2$.
Similar strategies that suppress the size of largest connected components, when incorporated into a network growth process, have been observed to cause discontinuous, or continuous but ``explosive'' percolation transitions~\cite{Achlioptas:2009ys,PhysRevLett.105.255701,DSouza:2015fk,Riordan:2011kx,Nagler:2011aa}.
The deterministic growth process defined in Ref.~\cite{Rozenfeld:2010uq} is particularly close to the definition of MCC networks because the process explicitly minimizes the product of the sizes of the components connected by the new link in each step.

\subsubsection{Sensitivity of optimal networks\label{sec:sens-opt-undirected}}

To demonstrate the sensitivity of UCM networks to link removals and node additions, we first study the dependence of $\lambda_2$ for MCC networks on the link density $\phi < 1$.
By deriving an explicit formula for $k_{n,m_c}$, we rewrite Eq.~\eqref{eqn:lambda2_pre} as
\begin{equation}\label{eqn:lambda2}
\lambda_2 = \lfloor C_{\ell,n}(\phi) \cdot n \rfloor,
\end{equation}
where 
\begin{equation}\label{eqn:C_ell_n}
C_{\ell,n}(\phi) = \frac{\ell^2 - \sqrt{\ell^2 - \phi\ell(\ell+1)\bigl(1-\frac{1}{n}\bigr)}}{\ell(\ell+1)},
\end{equation}
and $\ell$ depends on $\phi$ and is defined as the unique integer satisfying 
\begin{equation}\label{eqn:def-ell}
1 - \frac{1}{\ell} \le \Bigl(1 - \frac{1}{n}\Bigr) \phi < 1 - \frac{1}{\ell+1}
\end{equation}
(derivation presented in Supplemental Material~\cite{sm}, Sec.~\ref{si:sec:prop}D).
Equation~\eqref{eqn:lambda2} indicates that $\lambda_2$
experiences a series of sudden jumps as the link density increases from $\phi = 2/n$ (the minimum possible value for a connected network, corresponding to the star configuration) to $\phi = 1$ (corresponding to the fully connected network).
This behavior is better understood by considering the complement of the network as the number of links $m_c$ in the complement increases (corresponding to decreasing link density $\phi$), 
as illustrated for $n=20$ in Fig.~\ref{fig:construction}.
When the complement has exactly $M(n,k_{n,m_c})$ links, any additional link would force the maximum component size $k_{n,m_c}$ to increase by one, causing a jump in $\lambda_2 = n-k_{n,m_c}$.
In Fig.~\ref{fig:construction}, for example, when the network that 
has $m_c = M(20,4) = 30$, $k_{20,30} = 4$, and $\lambda_2 = 16$ gains one more link in its complement ($m_c = 31$), the component size jumps to $k_{20,31} = 5$ and $\lambda_2$ jumps down to $15$.
The 18-node UCM and MCC networks in Fig.~\ref{fig:ucm} also illustrate such a jump.
In the context of percolation problems, similar cascades of jumps in the maximum component size, called microtransition cascades, have been identified as precursors to global phase transitions~\cite{PhysRevLett.112.155701}.

%: Fig. 3
\begin{figure}[t]
\begin{center}
\includegraphics[width=0.97\columnwidth]{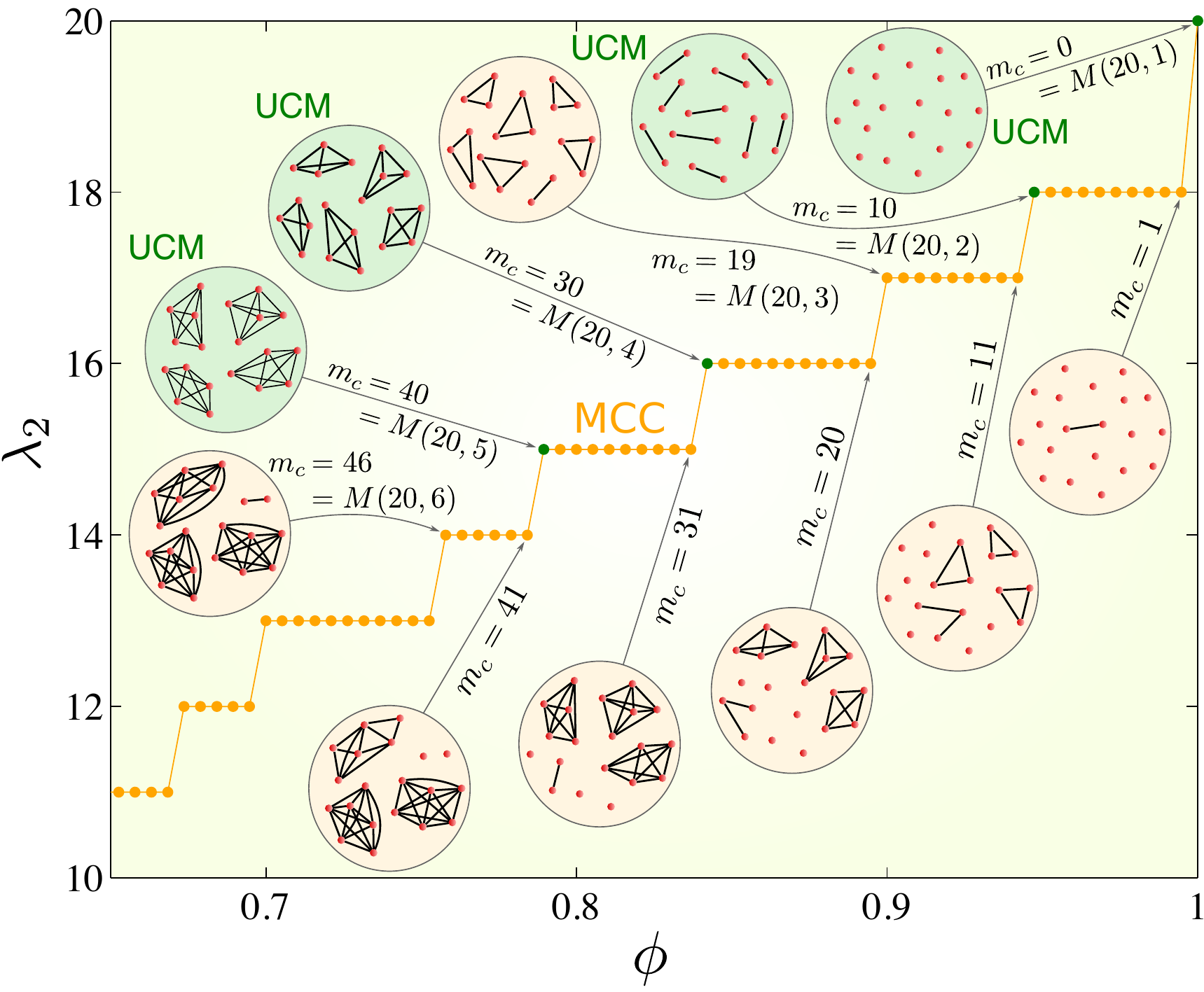}
\end{center}\vspace{-5mm}
\caption{\label{fig:construction}
UCM and MCC networks of size $n=20$ for the maximization of smallest non-zero Laplacian eigenvalue $\lambda_2$.
For a given link density $\phi$, the orange dot indicates $\lambda_2$ for MCC networks.
A UCM network, when possible for that value of $\phi$, is indicated by a green dot.
As $\phi$ increases, the eigenvalue $\lambda_2$ experiences discrete jumps, corresponding to sudden changes in the structure of the network.
The changes in the link configuration of the network's complement, as well as the associated jumps in the size of their largest clusters, are illustrated in the circles.
At $\phi$ values just above and below the jumps, the complement has $m_c = M(n, k_{n,m_c})$ links and $m_c = M(n, k_{n,m_c}) + 1$ links, respectively.}
\end{figure}

%: Fig. 4
\begin{figure}[t]
\begin{center}
\includegraphics[width=0.45\textwidth]{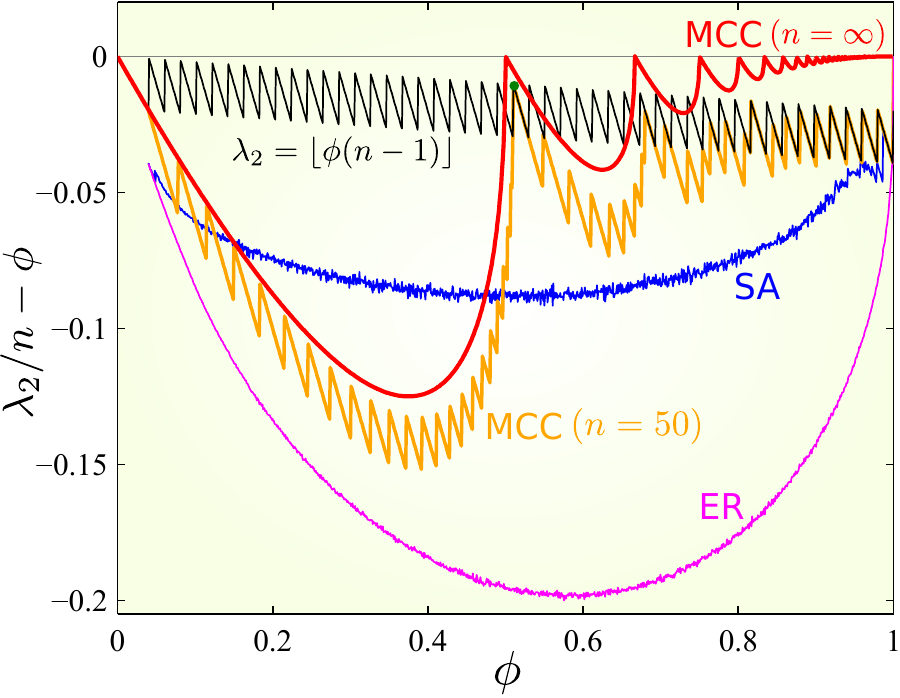}
\end{center}\vspace{-3mm}
\caption{\label{fig:comparison}
Sensitive dependence of the Laplacian eigenvalue $\lambda_2$ on link density $\phi$ for undirected networks of size $n=50$. 
Each curve indicates $\lambda_2$ normalized by the network size $n$, relative to the 
link density $\phi$. (Plots
of $\lambda_2/n$ itself can be found in Supplemental Material, Fig.~\ref{fig:comparisonA}~\cite{sm}.)
The blue curve shows the result of a single run of SA to maximize $\lambda_2$ with a fixed number of links.
Each point on the
magenta curve is the average over $1{,}000$ realizations of the ER random networks with connection probability $\phi$.
The orange and red curves indicate the MCC networks for $n = 50$ [Eq.~\eqref{eqn:lambda2}] and in the limit of $n\to\infty$ [Eq.~\eqref{eqn:lambda2inf}], respectively.
Notice the square-root singularity on the left of points $\phi = \phi_\ell = \frac{\ell-1}{\ell}$, $\ell = 2,3,\ldots,$ on the red curve.
The green dot near one of these singularity 
points
indicates the UCM network with $\ell=2$ and $k=25$, which achieves the upper bound $\lambda_2 \le \lfloor \phi(n-1) \rfloor$ shown by the black curve.
}
\end{figure}

Figure~\ref{fig:comparison} demonstrates that for a wide range of $\phi$, the MCC networks improve $\lambda_2$ significantly over the Erd\H{o}s-R\'enyi (ER) random networks, as well as those identified by direct numerical optimization of $\lambda_2$ using simulated annealing (SA).
The difference is particularly large for $\phi$ near certain special values such as $1/2$.
Note that the optimal value of $\lambda_2$ given by the upper bound (black curves) is achieved not only by the UCM network (for example, the one indicated by the green dot for $k=25$, $\ell=2$) at $\phi = 25/49 \approx 0.51$, but also by MCC networks (orange curves) for a finite range of $\phi$ around this value.
The optimal $\lambda_2$ value, however, is sensitive to changes in the link density $\phi$, and it departs quickly from its value at $\phi = 25/49$ as $\phi$ moves away from $25/49$, particularly for $\phi < 25/49$.

In fact, $\lambda_2$ has many points exhibiting such sensitivity, which
becomes more prominent for larger networks and turns into a singularity as $n\to\infty$ with fixed $\phi$.
To see this, we take the limit in Eq.~\eqref{eqn:lambda2} to obtain
\begin{equation}\label{eqn:lambda2inf}
\lim_{n\to\infty} \frac{\lambda_2}{n} 
= \frac{\ell^2 - \sqrt{\ell^2 - \phi\ell(\ell+1)}}{\ell(\ell+1)},
\end{equation}
where $\ell$ is the unique integer determined by 
$\phi_\ell \le \phi < \phi_{\ell+1}$, where we define $\phi_\ell := 1 - \frac{1}{\ell} = \frac{\ell-1}{\ell}$ for any positive integer $\ell$.
This function of $\phi$, shown in Fig.~\ref{fig:comparison} (red curve), has a cusp-like dependence on $\phi$ around $\phi = \phi_\ell$, at which it achieves the asymptotic upper bound 
$\lim_{n\to\infty} \lambda_2/n \le \phi$ [which follows directly from Eq.~\eqref{bound}]
and has a \emph{square-root singularity} on the 
left, i.e., the
derivative on the left 
diverges (while
the derivative on the right equals $1/2$).
This singularity is inherently different from the discrete jumps observed above for finite $n$.
Indeed, as the network size increases, the size of the jumps and the distance between consecutive jumps both tend to zero (as in the microtransition cascades \cite{PhysRevLett.112.155701} in percolation 
problems). The function thus becomes 
increasingly closer to a (piecewise) smooth function, while the square-root singularity becomes progressively more visible (verified numerically
in Fig.~\ref{fig:convergence}(a) of Supplemental Material~\cite{sm}).
For each singularity point $\phi = \phi_\ell$, there is a sequence of UCM networks with increasing $k$ (and thus increasing network size $n=k\ell$), for which the link density $\phi = (\ell - 1)/(\ell - \frac{1}{k})$ approaches $\phi_\ell$ as $k \to \infty$.

The UCM networks associated with these singularities also exhibit sensitivity to the removal of an {\it arbitrary} link. 
As shown in the previous section, the UCM networks are the only networks that attain the upper bound in Eq.~\eqref{bound} and satisfy $\lambda_2 = \lfloor 2m/n \rfloor = \lfloor k(\ell-1) \rfloor = k(\ell-1)$ [where the last equality holds because $k(\ell-1)$ is an integer].
The removal of any single link reduces the bound to $\lfloor 2(m-1)/n \rfloor = \lfloor k(\ell-1) - 2/n \rfloor =  k(\ell-1) - 1$ and thus the normalized eigenvalue $\lambda_2/n - \phi$ by at least $1/n$.
Since the link removal reduces $\phi$ by $2/[n(n-1)]$, the derivative of the normalized eigenvalue with respect to $\phi$ (in the limit of large $n$) is greater than or equal to
\begin{equation}
\lim_{n\to\infty} \frac{1/n}{2/[n(n-1)]} = \lim_{n\to\infty} \frac{n-1}{2} = \infty.
\end{equation}
In terms of the complement, this can be understood as coming from the unavoidable increase of the component size, since the link removal in the network corresponds to a link addition in the complement.
We note that the argument above is valid only for UCM networks, since the UCM network is the only one that attains the bound for any $\phi$ value at which the upper bound is discontinuous, i.e., when $2m/n$ is an integer 
(proof given in
Appendix~\ref{sec:ucm_optimality}).
In summary, we have the following result:
\begin{quote}
{\it The 
UCM networks, which maximize $\lambda_2$ and 
correspond to singularities in the $\lambda_2$ vs.\ $\phi$ curve for MCC 
networks,
are sensitive to link removals}.
\end{quote}

The UCM networks show similar sensitivity to node additions as well.
When $m$ is fixed, the expression for $\lambda_2$ given in Eq.~\eqref{eqn:lambda2}, considered now as a function of $n$, has a square-root dependence on the right of the points 
$n = \sqrt{2m/\phi_{\ell}}$, $\ell = 2,3,\ldots$
(corresponding to the UCM networks), as illustrated in Fig.~\ref{fig:n-dep-fixed-m} of Supplemental Material~\cite{sm}.
Similarly to the case of link removals, it can be shown that the bound in Eq.~\eqref{bound} suddenly drops from $k(\ell-1)$ to $k(\ell-1) - 1$ when a new node is connected to the network as long as the number of new links is less than $m/n$, and that this drop leads to an infinite derivative for $\lambda_2/n - \phi$ with respect to $\phi$ in the limit of large $n$.

\subsection{Directed networks}\label{sec:directed-nongeneric}

\subsubsection{The optimization problem}

For the class of networks with a fixed number of directed links $m_d=\sum_i\sum_{j\neq i} A_{ij}$, 
the matrix $A$ can be asymmetric in general.
In this case, the problem of maximizing the rate of convergence to the network-homogeneous state
can be expressed as
\begin{equation}
\begin{array}{ll}
\text{Minimize} & \Lambda_{\max}(A)\\[2mm]
\text{subject to} & A_{ij}\in\{0,1\}, \,\, A_{ii} = 0,\\[2mm]
&\displaystyle\sum_i\sum_{j\neq i} A_{ij} = m_d.
\end{array}
\end{equation}
The solution of this problem generally depends on the specific shape of the stability function.
However, the problem is equivalent to maximizing Re$(\lambda_2)$, the smallest real part among the eigenvalues of $L$ excluding the identically null eigenvalue $\lambda_1$, if the stability function $\Lambda(\alpha)$ is strictly decreasing in Re$(\alpha)$ and independent of Im$(\alpha)$ for Re$(\alpha) \le \bar{\lambda} := m_d/(n-1)$.
This condition is satisfied, e.g., for 
consensus and diffusion 
processes (details presented in Supplemental Material~\cite{sm}, Secs.~\ref{si:sec:example-systems}D and \ref{si:sec:example-systems}E, respectively).
This equivalence is a consequence of the upper bound, 
\begin{equation}\label{eqn:bound_dir}
\text{Re}(\lambda_2) \le \bar{\lambda}, 
\end{equation}
which follows from the fact that the sum of the eigenvalues equals the trace of $L$, which in turn equals 
$m_d$.  [We note that the tighter bound in Eq.~\eqref{bound} is not applicable to directed networks in general.]

\subsubsection{Optimal networks\label{sec:opt-dir}}

The optimization problem just formulated can be solved if $m_d$ is ``quantized,'' i.e., equals an integer multiple of $n-1$, in which case there are networks that satisfy $\lambda_2 = \cdots = \lambda_n = \bar{\lambda}$~\cite{Nishikawa:2010fk}.
Such networks attain the upper bound in Eq.~\eqref{eqn:bound_dir} and thus are optimal.
The class of directed networks satisfying $\lambda_2 = \cdots = \lambda_n = \bar{\lambda}$
has previously been studied within the context of network synchronization using 
objective functions that are not defined by Re$(\lambda_2)$ and different from
the convergence rate considered here~\cite{Nishikawa:2010fk,PhysRevLett.107.034102}.
If $m_d$ is not an integer multiple of $n-1$, the maximization of Re$(\lambda_2)$, like the maximization of $\lambda_2$ for undirected networks, is a hard combinatorial optimization problem.
Here we compute the Laplacian eigenvalues symbolically (and thus exactly) for all directed networks of size $n=3$, $4$, and $5$.
For the quantized values of $m_d$, we verify that the upper bound $\bar{\lambda}$ is indeed attained [Figs.~\ref{fig:geom-dist-dir}(a)--\ref{fig:geom-dist-dir}(c)], in which case $\lambda_2 = \cdots = \lambda_n = \bar{\lambda}$ is not only real but also an integer. For intermediate values of $m_d$, 
the maximum Re$(\lambda_2)$ does not appear to follow a simple rule; it can be strictly less than $\bar{\lambda}$, have nonzero imaginary part, and/or be non-integer.

%: Fig. 5
\begin{figure}[t]
\begin{center}
\includegraphics[width=\columnwidth]{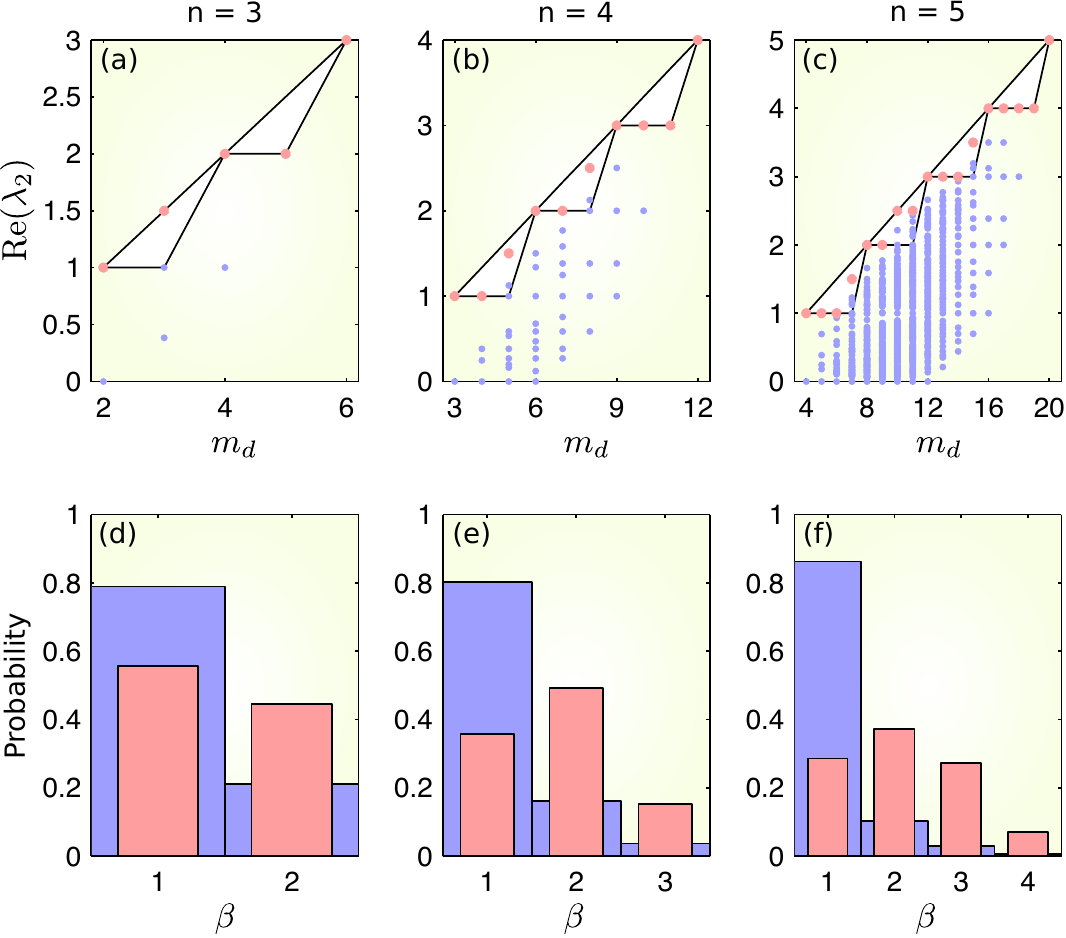}
\end{center}\vspace{-18pt}
\caption{\label{fig:geom-dist-dir}
Optimization induces spectral degeneracy.
The maximum $\text{Re}(\lambda_2)$ is computed exactly through symbolic calculation of eigenvalues for all directed networks of size $n=3$, $4$, and $5$. 
(a--c) Maximum values of $\text{Re}(\lambda_2)$ as a function of the number of directed links, $m_d$ (red dots).
All the other possible values of $\text{Re}(\lambda_2)$ are indicated by blue dots.
The black curves indicate the upper bound $\bar{\lambda}$ and lower bound $\lfloor \bar\lambda \rfloor$ for the maximum $\text{Re}(\lambda_2)$.
The upper bound comes from Eq.~\eqref{eqn:bound_dir}, while the
lower bound is derived in 
Sec.~\ref{sec:sensitivity-opt-dir}.
(d--f) Probability distribution of geometric degeneracy $\beta$ for $\lambda_2$ of all networks of a given size (blue bars) and of the $\text{Re}(\lambda_2)$-maximizing networks (red bars).
For each network size, the difference between the two distributions shows that the maximization tends to increase $\beta$.}
\end{figure}

\subsubsection{Non-sensitivity of optimal networks\label{sec:sensitivity-opt-dir}}

In the limit of large networks, however, there is a simple rule: we show below that the maximum value of Re$(\lambda_2)$, normalized by $n$, converges to the link density $\phi := m_d/[n(n-1)]$ as $n \to \infty$ with $\phi$ fixed.
This in particular implies that {\em the normalized maximum} Re$(\lambda_2)$ {\em has no sensitive dependence on $\phi$}, in sharp contrast to the sensitivity observed in the same limit for undirected networks~\cite{comment}. 

To establish this non-sensitivity result, we first note that $\phi = \bar{\lambda}/n$ is an upper bound for the maximum value of Re$(\lambda_2)/n$, which follows immediately from Eq.~\eqref{eqn:bound_dir}.
We show that the maximum value approaches the upper bound by showing that there is a lower bound that approaches the upper bound.
The lower bound is established by constructing a specific network with $n$ nodes and $m_d$ directed links.
To construct this network, we start with a variant of directed star networks, in which a core of $s$ fully connected nodes are all connected to all the other nodes, where we define
$s :=\lfloor \bar{\lambda} \rfloor$.
Since such a network involves exactly $s(n-1)$ links, the remaining $r$ links, where $r:=m_d-s(n-1)$, are added to the network.
The network can thus be constructed as follows:
1) For each $i=1,\ldots,s$, add $n-1$ links from node $i$ to all the other nodes.
2) Add $r$ links from node $s+1$ to nodes $1,\ldots,r$ if $r \le s$ and to nodes $1,\ldots,s$, $s+2,\ldots,r+1$ if $r > s$.
This
network satisfies
$\lambda_2 = s = \lfloor \bar{\lambda} \rfloor$
(proof given in Appendix~\ref{si:sec:non-sensitive-directed}),
which provides a lower bound for the maximum value of Re$(\lambda_2)$.
This lower bound, as well as the upper bound $\bar{\lambda}$, is indicated by black curves in Figs.~\ref{fig:geom-dist-dir}(a)--\ref{fig:geom-dist-dir}(c) for $3 \le n \le 5$.
Thus,
the maximum value of Re$(\lambda_2)/n$ 
is at least $s/n$, and this lower bound approaches the upper bound for large networks:
$s/n = \lfloor \bar{\lambda} \rfloor /n = \lfloor \phi n \rfloor/n \to \phi$ as $n \to \infty$.
This proves our claim that Re$(\lambda_2)/n$ for optimal networks is a smooth function of $\phi$ in the limit of large networks, thus establishing the absence of sensitivity.

\section{Sensitivity to weighted perturbations\label{sec:generic}}

To demonstrate the second type of sensitivity, we
now study how the convergence rate behaves when a small 
weighted
perturbation is applied to the network structure, particularly when the initial network is optimal or close to being optimal.
Since the convergence rate is determined by the Laplacian eigenvalues through the stability function $\Lambda(\alpha)$ and Eq.~\eqref{eqn:lambda-max}, it suffices to analyze how the Laplacian eigenvalues respond to 
such
perturbations, which we formulate as perturbations of the adjacency matrix in the form $A + \delta\Delta A$, 
where the small parameter $\delta$ is positive (unless noted otherwise) and $\Delta A$ is a fixed 
matrix.
This type of structural perturbations can represent imperfections in the strengths of couplings in real networks, such as power grids and networks of chemical~\cite{Showalter2015}, electrochemical~\cite{Kiss2002}, or optoelectronic~\cite{PhysRevLett.107.034102} oscillators.

\subsection{Eigenvalue scaling for arbitrary networks\label{sec:gen-scaling-eig}}

Here we show that for a given Laplacian eigenvalue $\lambda$ of a directed network and a generic choice of $\Delta A$, the change $\Delta\lambda$ of the eigenvalue due to the perturbation generally follows a scaling relation, $|\Delta\lambda| \sim \delta^\gamma$.
We also provide a rigorous bound for the scaling exponent $\gamma$. This scaling exponent determines the nature of the dependence of the perturbed eigenvalue on $\delta$: if $0 < \gamma < 1$, the dependence is sensitive and characterized by an infinite derivative at $\delta=0$, and if $\gamma \ge 1$, it is non-sensitive and characterized by a finite derivative.

\subsubsection{Bound on scaling exponent}

We provide an informative bound on $\gamma$ by proving the following general result on matrix perturbations.
Suppose $\lambda$ is an eigenvalue of an arbitrary matrix $M$ with geometric degeneracy $\beta$~\cite{PhysRevLett.107.034102}, defined as the largest number of repetitions of $\lambda$ associated with the same eigenvector (i.e., the size of the largest diagonal block associated with $\lambda$ in the Jordan canonical form of $M$).
For perturbations of the form $M + \delta \Delta M$ with an arbitrary matrix $\Delta M$, there exists a constant $C$ such that 
the corresponding change $\Delta\lambda = \Delta\lambda(\delta)$ in the eigenvalue, as a function of $\delta$, satisfies 
\begin{equation}\label{eqn:scaling_bound}
\limsup_{\delta\rightarrow{0}}\frac{|\Delta\lambda(\delta)|}{\delta^{1/\beta}} \leq C
\end{equation}
(proof given in Appendix~\ref{si:sec:mat-pert}).
Applying this result to an eigenvalue $\lambda$ of the Laplacian matrix $L$, we see that $\gamma \ge 1/\beta$, implying that the set of perturbed eigenvalues that converge to $\lambda$ as $\delta\to 0$ do so at a rate no slower than $\delta^{1/\beta}$.

\subsubsection{Typical scaling behavior\label{sec:typical-behavior}}

The bound established above suggests 
that the scaling $|\Delta\lambda|\sim \delta^{1/\beta}$ would be observed for all perturbed eigenvalues that converge to $\lambda$ as $\delta\to 0$.
In fact, our numerics supports a more refined statement for networks under generic weighted structural perturbations: for each eigenvector (say, the $j$th one) associated with $\lambda$, there is a set of $\beta_j$ perturbed eigenvalues that converge to $\lambda$ as $\delta\to 0$ and follows the scaling,
\begin{equation}\label{eqn:generic_scaling}
|\Delta\lambda| \sim \delta^{1/\beta_j},
\end{equation}
where $\beta_j$ is the number of repetitions of $\lambda$ associated with the $j$th eigenvector (i.e., the size of the $j$th Jordan block associated with $\lambda$).
We numerically verify this individual scaling for Laplacian eigenvalues using random perturbations applied to all off-diagonal elements of $A$.
We consider two examples of directed networks of size $n=6$, shown at the top of Fig.~\ref{fig:gen-pert}, which are both optimal because $\lambda_2=\cdots=\lambda_6$.
For each of these networks, the left column plots in the corresponding panel of Fig.~\ref{fig:gen-pert} show the distributions of the scaling exponent $\gamma$ in the relation $|\Delta\lambda| \sim \delta^\gamma$ for random choices of $\Delta A$, where $\gamma$ is estimated from fitting the computed values of perturbed $\lambda_i$ over different ranges of $\delta$.
We see that the 
distributions are
sharply peaked around $\beta_j$ (indicated by the gray inverted triangles) with smaller spread for narrower ranges of $\delta$, supporting the asymptotic scaling in Eq.~\eqref{eqn:generic_scaling} in the limit $\delta\to 0$.

%: Fig. 6
\begin{figure*}[tb]
\centering
\includegraphics*[width=0.9\textwidth]{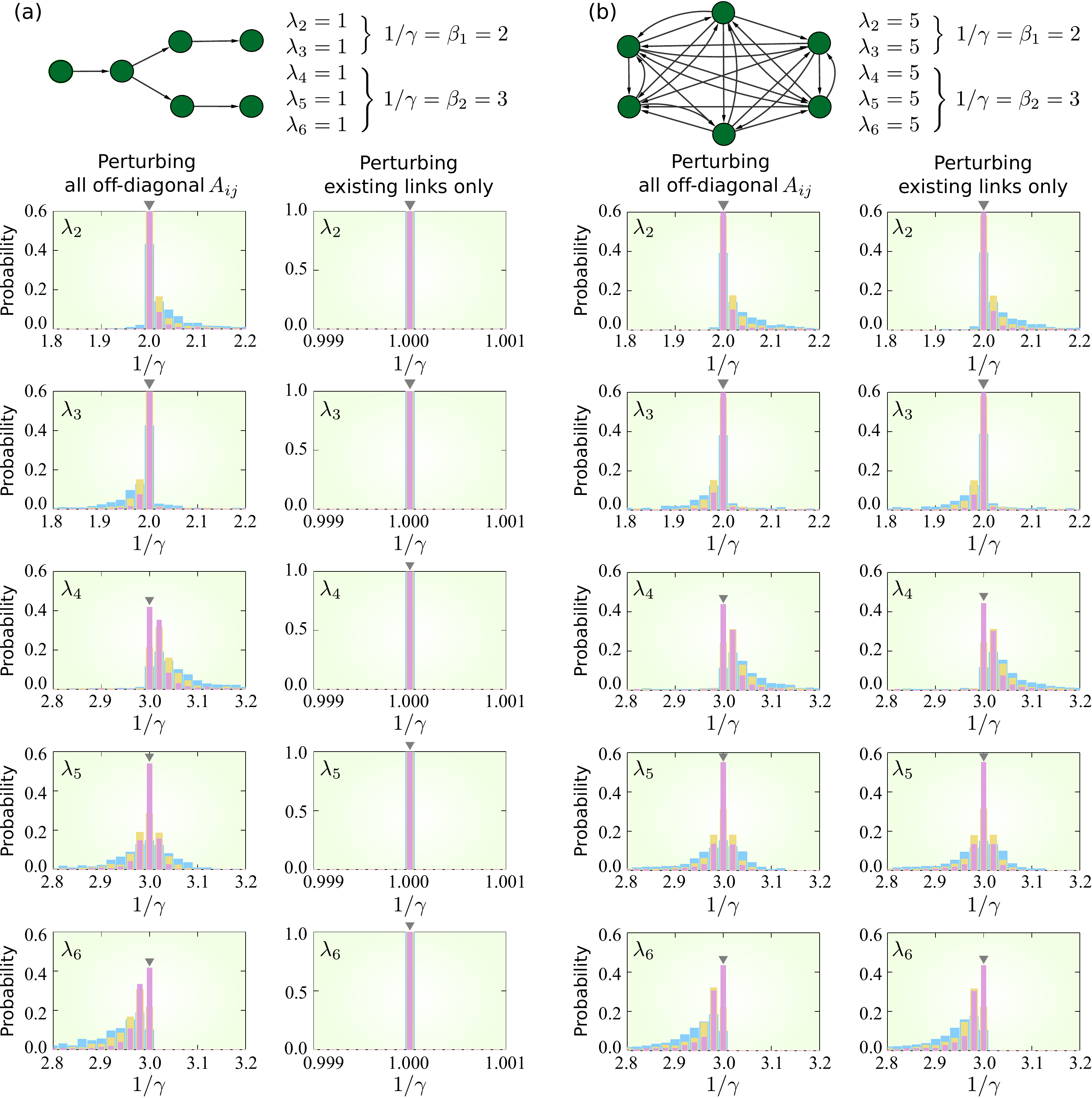}
\caption{\label{fig:gen-pert}
Distribution of scaling exponents for the Laplacian eigenvalues of $6$-node optimal directed networks under random structural perturbations.
(a) Example network (a directed tree) with $\lambda_2=\cdots=\lambda_6=1$.
(b) Example network with $\lambda_2=\cdots=\lambda_6=5$.
Each plot shows histograms of $1/\gamma$, where $\gamma$ is the scaling exponent numerically estimated
from $\lambda_i$ (the $i$th smallest perturbed eigenvalue) computed at
$1{,}000$ equally spaced values of $\delta$ in the intervals $[0,10^{-3}]$ (blue), $[0,10^{-4}]$ (yellow), and $[0,10^{-5}]$ (pink).
We determine $\gamma$ by applying MATLAB's built-in linear least-squares algorithm in log scale~\cite{MATLAB:2010}.
Each histogram is generated by estimating $\gamma$ for $10{,}000$ realizations of $\Delta A$, where each element of $\Delta A$ (corresponding to the perturbation of the weight of an existing link or the addition of a new link with small weight) is chosen randomly from the uniform distribution on $[-1,1]$.
When perturbing all the off-diagonal elements of the adjacency matrix $A$ (left column plots in each panel),
the results support $1/\gamma=\beta_j$ for both networks.
When perturbing only the existing links (right column plots in each panel), the scaling exponent depends on the initial network: the plots support $1/\gamma=1$ for the directed tree in panel (a) and $1/\gamma=\beta_j$ for the network in panel~(b).
}
\end{figure*}

We 
note that, for non-generic weighted perturbations (e.g., if the perturbation is constrained to a subset of the off-diagonal elements of $A$), the exponent may be different from $1/\beta_j$ in Eq.~\eqref{eqn:generic_scaling}.
For example, when perturbing only the existing links of a directed tree (which is optimal with $\lambda_2 = \cdots = \lambda_n = 1$), the exponent is one, and thus the network is not sensitive to this type of perturbations even if the degeneracy $\beta > 1$, as illustrated in Fig.~\ref{fig:gen-pert}(a) (right column plots).
This follows from the fact that the Laplacian matrix of a directed tree is triangular under appropriate indexing of its nodes, which remains true after perturbing the existing links.
This non-sensitivity result can be extended to certain other cases,
e.g., when $P^{-1} \Delta L P$ is a triangular matrix, where $P$ is the non-singular matrix in the Jordan decomposition of $L$ and $\Delta L$ is the perturbation of the Laplacian matrix 
(proof presented in Appendix~\ref{sec:non-sensitivity-constrained}).
In other cases, the scaling with exponent $1/\beta_j$, as in Eq.~\eqref{eqn:generic_scaling}, can be observed even when perturbing only the existing links, as illustrated in Fig.~\ref{fig:gen-pert}(b) (right column plots).

%: Fig. 7
\begin{figure*}[ht]
\begin{center}
\includegraphics[width=0.9\textwidth]{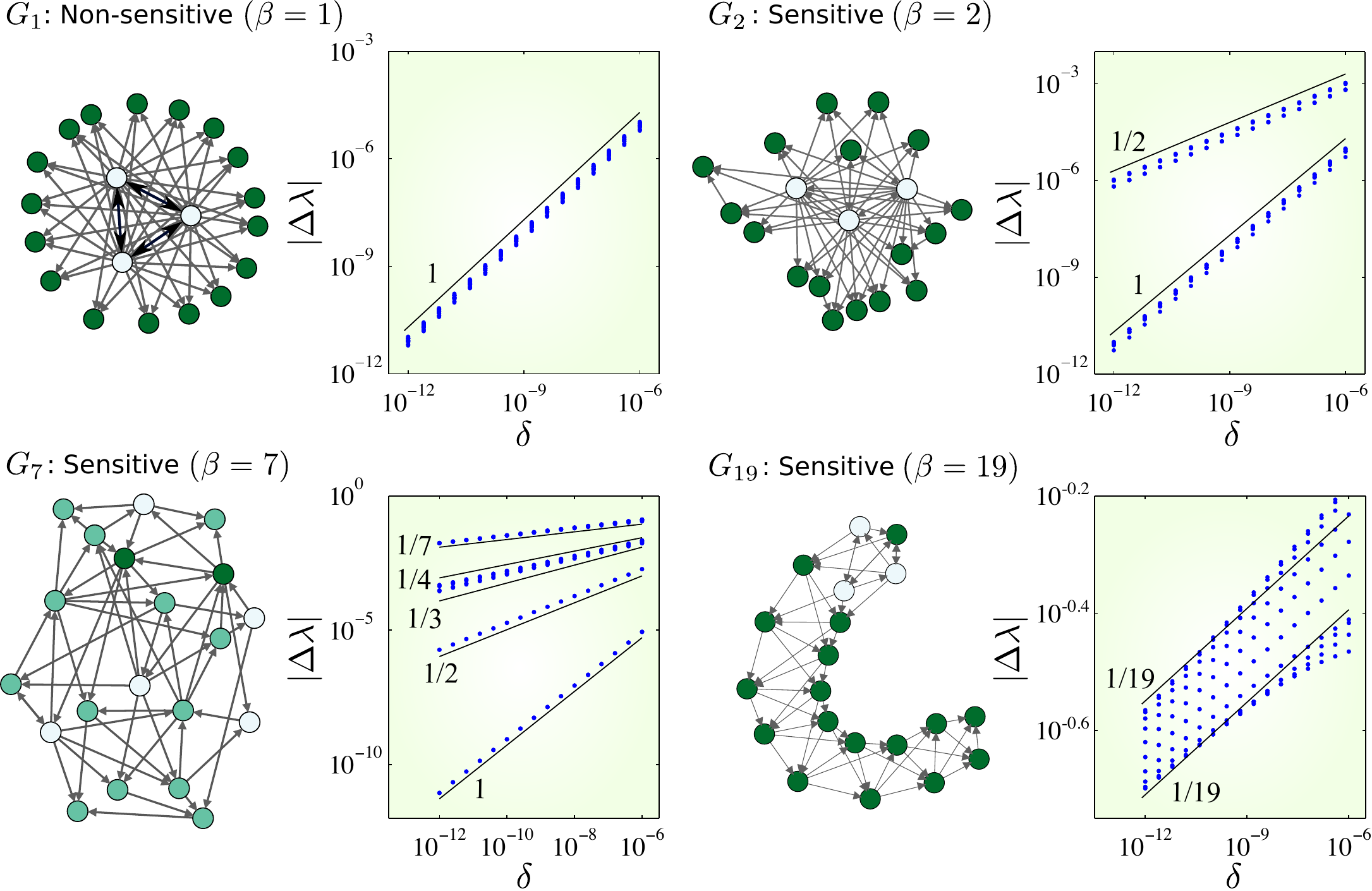}
\end{center}\vspace{-5mm}
\caption{\label{fig:generic_perturbation}
Optimal directed networks with various levels of sensitivity to generic weighted perturbations.
Example networks $G_1$, $G_2$, $G_7$, and $G_{19}$ (each with $20$ nodes and $57$ links) all satisfy $\lambda_2 = \cdots = \lambda_{20}=3$ (and thus are optimal) but have different geometric degeneracy. Perturbing the adjacency matrix of each network as $A + \delta\Delta A$, we plot in double logarithmic scale the resulting change $|\Delta\lambda(\delta)|=|\lambda(\delta)-3|$ for all 19 Laplacian eigenvalues (blue dots, many of which are overlapping). The same randomly chosen $\Delta A$ is used for all four networks, where each $\Delta A_{ij}$ is drawn uniformly from the interval $[-1,1]$ if the link exists from node $j$ to node $i$, and from $[0,1]$ otherwise. The perturbations thus allow small increase and decrease of the weight of existing links, as well as the addition of new links with small weight. In each network, nodes of the same color indicates the same in-degree, and a bidirectional arrow represents two directed links in opposite directions. The three nodes in the center of $G_1$ form a fully connected triangle and each of the other nodes has three in-links from these center nodes. 
In each plot, we observe the expected scaling behavior in Eq.~\eqref{eqn:generic_scaling}, indicated by black lines, each labeled with the corresponding scaling exponent, $1/\beta_j$
($G_1$: $\beta=1$ with $19$ eigenvectors and $\beta_1 = \cdots=\beta_{19}=1$; $G_{2}$: $\beta=2$ with $15$ eigenvectors and $\beta_1 = \cdots=\beta_{11}=1$, $\beta_{12}=\cdots=\beta_{15}=2$; $G_7$: $\beta=7$ with $7$ eigenvectors and $\beta_{1}=\beta_{2}=\beta_{3}=1$, $\beta_{4}=2$, $\beta_{5}=3$, $\beta_{6}=4$, $\beta_{7}=7$; $G_{19}$: $\beta=19$ with $1$ eigenvector and $\beta_1=19$).}
\end{figure*}

\subsection{Classification of networks by their sensitivity}

The general scaling results in the previous section indicate that the overall sensitivity of a Laplacian eigenvalue is determined by its geometric degeneracy $\beta$.
This is because larger $\beta_j$ means more sensitive dependence on $\delta$ in Eq.~\eqref{eqn:generic_scaling} and because $\beta$ is by definition the largest among all the associated $\beta_j$'s.
Thus, we summarize as follows:
\begin{quote}
{\it A Laplacian eigenvalue is sensitive to generic weighted perturbations if and only if the geometric degeneracy $\beta>1$, i.e., the associated eigenvector is degenerate}.
\end{quote}

\subsubsection{Sensitivity in directed networks\label{sec:sensitivity-directed}}

We now show that optimal directed networks are often sensitive to generic weighted perturbations.
Figure~\ref{fig:generic_perturbation} shows examples from the class of optimal networks satisfying $\lambda_2 = \cdots = \lambda_n$.
The geometric degeneracy $\beta$ can be different for different optimal networks in this class and provides a measure of how sensitive an eigenvalue is when $\beta>1$.
Some of these networks are non-sensitive, including simple cases such as the fully connected networks and directed star networks, as well as other networks with more complicated structure, such as the network $G_1$ in Fig.~\ref{fig:generic_perturbation}.
Other optimal networks in this class are sensitive, and
there is a hierarchy of networks having different levels of sensitivity, from $\beta=2$ (e.g., network $G_2$ in Fig.~\ref{fig:generic_perturbation}) all the way up to the maximum possible value $\beta=n-1$ (e.g., network $G_{19}$ in Fig.~\ref{fig:generic_perturbation}), including all intermediate cases (e.g., network $G_7$ in Fig.~\ref{fig:generic_perturbation}).
Such scaling behavior and the resulting sensitivity for $\beta>1$ are robust in the sense that they would be observed even if the associated eigenvector is only approximately degenerate 
(proved in Appendix~\ref{sec:approximate-optimal}).%

How often does an optimal network (including those not satisfying $\lambda_2 = \cdots = \lambda_n$) have $\beta > 1$ and thus exhibit sensitivity?
To study this systematically, we
compute $\beta_j$ symbolically and thus exactly for each Laplacian eigenvalue of all possible directed networks with $n\le 5$.
We find that a large fraction of the Re$(\lambda_2)$-maximizing networks are indeed sensitive due to geometric degeneracy: $44.4$\%, $64.3$\%, and $71.5$\% of them have $\beta>1$ for $n=3$, $4$, and $5$, respectively [red bars in Figs.~\ref{fig:geom-dist-dir}(d)--\ref{fig:geom-dist-dir}(f)].
These fractions are significantly higher than the corresponding fractions among all directed networks (including non-optimal ones): $21.1$\%, $19.7$\%, and $13.7$\%, respectively [blue bars in Figs.~\ref{fig:geom-dist-dir}(d)--\ref{fig:geom-dist-dir}(f)].
Since $\beta$ is bounded by the algebraic degeneracy (multiplicity) of $\lambda_2$, an interesting question is to ask how often $\beta$ attains this bound, giving the network the maximum possible level of sensitivity. Among those networks that are both optimal and sensitive, $74.5$\% and $60.0$\% achieve the maximal sensitivity for $n=4$ and $5$, respectively. 
(The fraction is trivially $100$\% for $n=3$.)
These results thus suggest that optimal directed networks are much more likely to exhibit higher sensitivity than non-optimal ones.

\subsubsection{Non-sensitivity in undirected networks\label{sec:generic-undirected}}

The situation is drastically different when the network is undirected.
For an arbitrary undirected network, for which we have the constraint that the matrix $A$ is symmetric, {\em all of its Laplacian eigenvalues are non-sensitive to any (generic or non-generic) perturbation of the form $A+\delta\Delta A$}, since symmetric matrices are diagonalizable~\cite{golub2013matrix} and thus $\gamma \ge 1/\beta = 1$. 
This in particular implies that there is no sensitivity even for optimal undirected networks, including the UCM and MCC networks.
However, this is not in contradiction with the results in Sec.~\ref{sec:sensitivity-undirected}, as they concern finite-size perturbations (i.e., addition or removal of whole links) in the limit of large networks, while here we consider infinitesimal perturbations on link weights for finite-size networks.

\subsection{Generality of the scaling}

The scaling bound in Eq.~\eqref{eqn:scaling_bound} is applicable to both directed and undirected networks, regardless of whether the links are weighted or unweighted. 
We also expect the scaling in Eq.~\eqref{eqn:generic_scaling} to generically hold true across these classes of networks.
Moreover, while the results for unweighted perturbations in Sec.~\ref{sec:opt} are specific to the Laplacian eigenvalue $\lambda_2$, Eq.~\eqref{eqn:scaling_bound} applies to any eigenvalue of an arbitrary matrix, 
including the adjacency matrix and any other matrix that may characterize a particular system. 
For example, the largest eigenvalue (in absolute value) of the adjacency matrix for a strongly connected (directed) network is non-degenerate 
(by, e.g., the Perron-Frobenious Theorem~\cite{Brualdi:2008})
and therefore non-sensitive.
In general, the degree to which the scaling holds is likely to be related to the normality of the matrix, which can range from completely normal matrices with orthogonal eigenvectors (as in undirected networks) to highly non-normal matrices with parallel, degenerate eigenvectors (as in many optimal networks)~\cite{Milanese2010:kgd,trefethen2005spectra}. 
The result in Appendix~\ref{sec:approximate-optimal} implies that the network does not need to be perfectly degenerate,
which opens the door for observing the sensitivity we identified in real-world applications where exact degeneracy is 
unlikely~\cite{PhysRevLett.107.034102}.
Combining all these with the tendency of optimization to cause geometric degeneracy and with
the wide range of systems that can be described by 
Eq.~\eqref{eqn:coupled-syst}, 
we expect to observe sensitivity to 
weighted
perturbations in many applications.

\section{Sensitivity in example physical systems}
\label{sec:phys-examples}

As summarized in Fig.~\ref{fig:summary}, we have established two cases in which sensitive dependence on network structure arises: undirected networks under 
unweighted
perturbations (Sec.~\ref{sec:sens-opt-undirected}) and directed networks under 
weighted
perturbations (Sec.~\ref{sec:sensitivity-directed}). 
Here we discuss implications of these cases for concrete examples of physical networked systems. 

%: Fig. 8
\begin{figure*}[t]
\begin{center}
\includegraphics[width=0.9\textwidth]{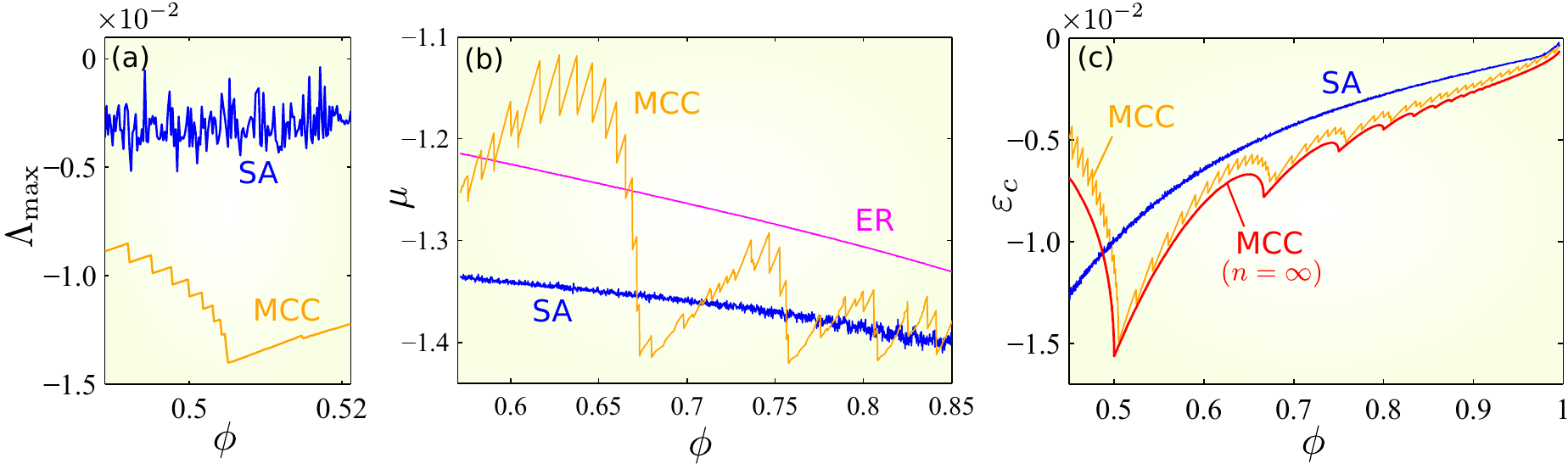}
\end{center}\vspace{-5mm}
\caption{\label{fig:example-systems}
Sensitive dependence on link density $\phi$ in physical examples of undirected networks.
(a) Exponential rate of convergence $\Lambda_\text{max}$ to a synchronous state of power-grid networks.
(b) Mean finite-time convergence rate $\mu$ toward synchronization in networks of optoelectronic oscillators.
(c) Critical diffusivity threshold $\eps_c$ for Turing instability in networks of activator-inhibitor systems.
Description of these three systems can be found in Supplemental Material~\cite{sm}, Secs.~\ref{si:sec:example-systems}A--\ref{si:sec:example-systems}C.
The orange curves indicate the values of these quantities for the MCC networks with $n=100$ constructed by the procedure described in  Appendix~\ref{si:sec:construction_MCC}, while the red curve in panel~(c) is the finite-$n$ approximation obtained from the asymptotic formula in Eq.~\eqref{eqn:lambda2inf}.
The blue curves indicate the corresponding values for networks found by SA.
The magenta curve in panel~(b) is the mean value for the ER random networks estimated from $1{,}000$ realizations, while in panels (a) and (c) the values are relative to the corresponding mean value for the ER random networks (and thus zero corresponds to the ER mean value).}
\end{figure*}

\subsection{Undirected networks under unweighted perturbations}

For undirected networks, the sensitivity of $\lambda_2$ observed for UCM networks is relevant for a wide range of networked systems, since the stability function formalism establishes that, in many systems, $\lambda_2$ determines the stability properties of relevant network-homogeneous states.
Typically the asymptotic rate of convergence $|\Lambda_\text{max}|$ is a smooth, monotonically increasing function of $\lambda_2$ 
(concrete examples given in Supplemental Material~\cite{sm}, Secs.~\ref{si:sec:example-systems}C--\ref{si:sec:example-systems}F), 
and thus the maximized convergence rate exhibits 
sensitivity.
Below we list specific cases in which sensitivity is observed in $|\Lambda_\text{max}|$ or a related quantity:
\begin{enumerate}[topsep=3pt, partopsep=0pt, leftmargin=12pt]
\item \textit{Convergence rate}. For networks of phase oscillators, including models of power-grid networks, the convergence rate to a frequency-synchronized, phase-locked state is a function of the Laplacian eigenvalue $\widetilde{\lambda}_2$ associated with an effective interaction matrix $\widetilde{A}$ for the system 
(details presented in Supplemental Material~\cite{sm}, Sec.~\ref{si:sec:example-systems}A). 
While $\lambda_2$ is generally different from $\widetilde{\lambda}_2$, it is strongly correlated with $\widetilde{\lambda}_2$, and hence 
with
$\Lambda_\text{max}$. We thus expect to observe sensitive dependence of $\Lambda_\text{max}$, which is indeed confirmed in Fig.~\ref{fig:example-systems}(a) for power-grid networks with a prescribed network topology and realistic parameters for the generators and other electrical components in the system.
\item \textit{Transient dynamics}. In addition to the asymptotic convergence rate $\Lambda_\text{max}$, sensitive dependence can be observed for the convergence rate in the transient dynamics of the network, which depends  not only on $\lambda_2$ but on all Laplacian eigenvalues. This is illustrated in Fig.~\ref{fig:example-systems}(b) using the example of coupled optoelectronic oscillator networks 
(system details described in Supplemental Material~\cite{sm}, Sec.~\ref{si:sec:example-systems}B).
\item \textit{Critical coupling threshold}. Another physical quantity that can exhibit sensitive dependence is the critical coupling threshold for the stability of the network-homogeneous state in systems with a global coupling strength $\eps$. In such systems, the functions $\mathbf{H}_{ij}$ are proportional to $\eps$.
For identical oscillators capable of chaotic synchronization, the minimum coupling strength for stable synchronization is inversely proportional to $\lambda_2$.
For the activator-inhibitor systems~\cite{Nakao:2010fk}, the parameter $\eps$ is interpreted as the common diffusivity constant associated with the process of diffusion over individual links.
As $\eps$ is decreased from a value sufficiently large for the uniform concentration state to be stable, there is a critical diffusivity, $\eps = \eps_c$, corresponding to the onset of Turing instability. This $\eps_c$ is inversely proportional to $\lambda_2$ 
(derivation given in Supplemental Material~\cite{sm}, Sec.~\ref{si:sec:example-systems}C).
Such a critical threshold thus depends sensitively on the link density of the network [as illustrated 
in Fig.~\ref{fig:example-systems}(c)] as well as on the number of nodes.
\end{enumerate}

%: Fig. 9
\begin{figure*}[tb]
\begin{center}
\includegraphics[width=0.8\textwidth]{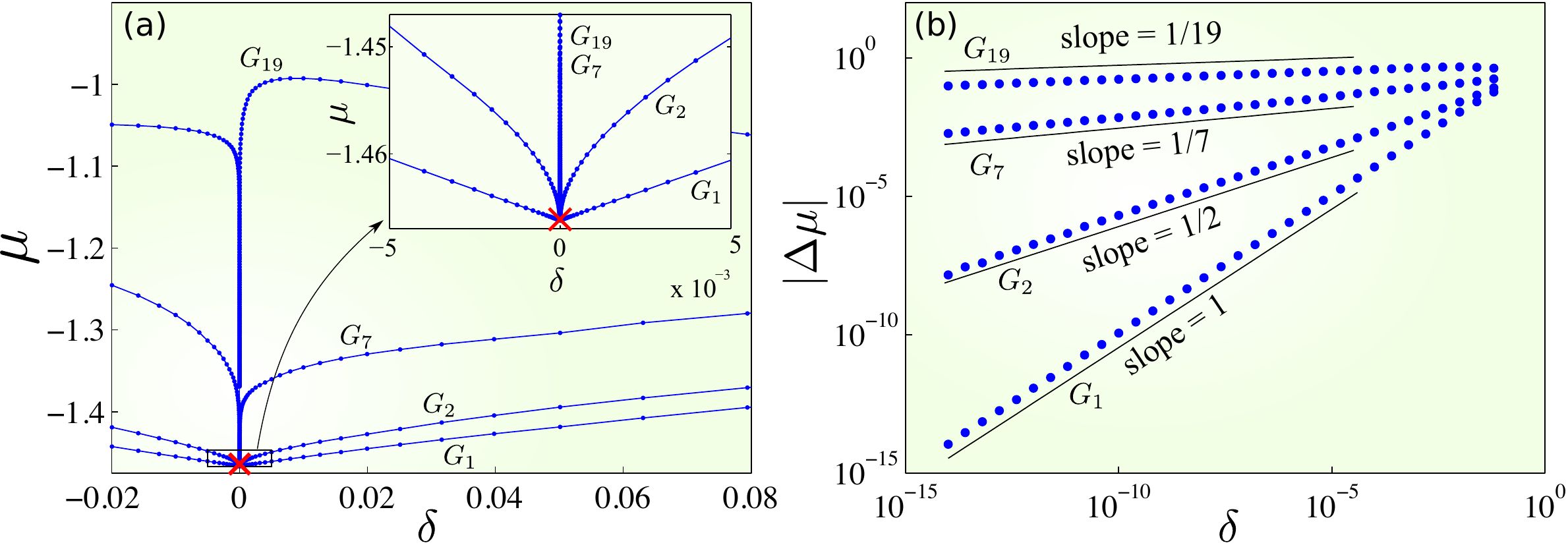}
\end{center}\vspace{-4mm}
\caption{\label{fig:generic_perturbation_phys}
Sensitivity to 
weighted
perturbations in directed networks of optoelectronic oscillators.
(a) Mean convergence rate $\mu$ as a function of $\delta$, illustrating the qualitative difference
between sensitive networks ($G_2$, $G_7$, and $G_{19}$ from Fig.~\ref{fig:generic_perturbation}) and non-sensitive networks ($G_1$ from Fig.~\ref{fig:generic_perturbation}).
The red cross symbol (\emph{``$\times$''}) indicates the value of $\mu$ at $\delta = 0$ corresponding to the case of no perturbation, which is the same for all four networks.
The inset shows a zoom-in plot of the marked rectangular region surrounding the red cross.
The perturbation matrix $\Delta A$ was chosen randomly following the same procedure used in Fig.~\ref{fig:generic_perturbation}.  To facilitate visualization, in this figure we allow negative $\delta$, which corresponds to considering a perturbation term of the form $|\delta|(-\Delta A)$.
(b) Log-log plot of the change in convergence rate $|\Delta\mu|$ versus $\delta$, which confirms the scaling $|\Delta\mu| \sim \delta^{1/\beta}$ for small $\delta$.}
\end{figure*}

\subsection{Directed networks under weighted perturbations}

For directed networks, the sensitivity of Laplacian eigenvalues under generic perturbations
is typically inherited by the convergence rate $\Lambda_\text{max}$
for many systems and processes governed by Eq.~\eqref{eqn:coupled-syst}, including most of the examples described in Supplemental Material~\cite{sm}, Sec.~\ref{si:sec:example-systems}.
In fact, $\Lambda_\text{max}$ would have the same sensitivity as the Laplacian eigenvalue $\lambda_j$ whenever $\Lambda_\text{max}$ has a smooth (non-constant) dependence on $\lambda_j$ near the unperturbed values of $\lambda_j$.
Figure~\ref{fig:generic_perturbation_phys} illustrates the sharp contrast between sensitive and non-sensitive cases using the example of synchronization in networks of chaotic optoelectronic oscillators~\cite{PhysRevLett.107.034102} 
(system details described in Supplemental Material~\cite{sm}, Sec.~\ref{si:sec:example-systems}B).

\section{Discussion}
\label{sec:discussion}

The sensitive dependence of collective dynamics on the network structure, characterized here by a derivative that diverges at an optimal point, has several implications.  On the one hand, it implies that the dynamics can be manipulated substantially by small structural adjustments, which we suggest has the potential to lead to new control approaches based on modifying the effective structure of the network in real time; indeed, the closer the system is to being optimal, the larger the range of manipulation possible with the same amount of structural adjustment.  On the other hand, the observed cusp-like behavior imposes constraints on how close one can get to the ultimate optimum in practice, given unavoidable parameter mismatches, resolution
limits, and numerical uncertainty. 
 
It is insightful to interpret our results in the context of living systems.  The apparent conundrum that follows from this study is that biological networks (such as genetic, neuronal, and ecological ones) are believed to have evolved under the pressure to both optimize fitness and be robust to structural perturbations~\cite{kitano2004biological}.  The latter means that the networks would not undergo significant loss of function (hence, of optimality) when perturbed.  For example, a mutation in a bacterium (i.e., a structural change to a genetic network) causes the resulting strain to be nonviable in only a minority of cases~\cite{keio_collection}. A plausible explanation is that much of the robustness of living systems comes from the plasticity they acquire from optimizing their fitness under varying conditions~\cite{Stelling:2004,Yang:2015}.  In the case of bacterial organisms, for example, it is believed that the reason most of their genes are not essential for a given environmental condition is because they are required under different conditions.  Bacteria kept under stable conditions, such as 
those that live inside other living cells (i.e., intracellular bacteria),
have evolved to virtually have only those genes essential under that 
condition~\cite{Glass2006} and are thus sensitive to gene removals;
they are a close analog of the optimization of a fixed objective function considered here~\cite{comment2}.  
While there is therefore no conflict between our results and the optimization-robustness trade-off expected for biological networks, investigating the equivalent of the sensitive dependence on network structure in the case of varying conditions or varying objective function would likely provide further insights.

In general, the optimization-robustness relation may depend on the type of robustness 
considered.
In this study we focused on how stable a state is, and hence on how resistant the network is to small changes in its dynamical 
state, which can be regarded as a form of robustness (terminology used, for example, in Ref.~\cite{Bar-Yam30032004}).
It is quite remarkable that, in seeking to optimize 
the network for 
this ``dynamical'' robustness, 
the network
would lose ``structural'' 
robustness, where the latter is a measure of how resistant the stability of the network state is to changes in the network structure.
But is the observed sensitive dependence on network structure really a sign of non-robustness?  The answer is both yes and no.  It is ``yes'' in the sense that, because of the non-differentiability of this dependence, small parameter changes cause stability to change significantly.  It is ``no'' in the sense that, because the cusps appear at valleys rather than at peaks, the stability in the vicinity of the local best parameter choices are still generally better than at locations farther away
(that is, specific parameters lead to significant improvement but not to significant deterioration).  By considering both the dynamical and the structural robustness in the sense above, we can interpret our results as a manifestation of the ``robust-yet-fragile'' property that has been suggested as a general feature of complex systems~\cite{carlson1999highly}.

Finally, it is instructive to compare 
sensitive dependence on network structure with the phenomenon of chaos, which can exhibit multiple forms of sensitive dependence~\cite{OttChaosBook}. Sensitive dependence on initial conditions, where small changes in the initial state lead to large changes in the subsequent evolution of the state, is a phenomenon that concerns trajectories in the phase space of a fixed system. Sensitive dependence on parameters may concern a similar change in trajectories across different systems even when the initial conditions are the same, as in the case of the map $\theta_{n+1} = 2\theta_n + c$ (mod $2\pi$) when  $c$ rather than $\theta_0$ is changed. But sensitive dependence on parameters may also concern a change in the nature of the dynamics, which has a qualitative rather than 
merely
quantitative impact on the trajectories; this is the case for the logistic map $x_{n+1} = rx_n(1-x_n)$, whose behavior can change from chaotic to periodic by arbitrarily small changes in $r$ and, moreover, whose Lyapunov exponent exhibits a cusp-like dependence on $r$ within each periodic window.  The latter concerns sensitive dependence of the stability (or the level of stability) of the states under consideration, and therefore is a low-dimensional analog of the sensitive dependence of network dynamics on network structural parameters investigated here.  In the case of networks, however, they emerge not from bifurcations but instead from optimization.  Much in the same way the discovery of sensitive dependence on initial conditions in the context of (what is now known as) chaos sets constraints on long-term predictability and on the reliability of simple models for weather forecast~\cite{Lorenz63}, the sensitive dependence on network structure calls for a careful evaluation of the constraints it sets on predictability and model reliability~\cite{Babtie1:2014} in the presence of noise and uncertainties in real network systems.
We thus believe that the interplay between network structure, optimization, sensitivity, and robustness is a promising topic of future research that can offer fundamental insights into the 
properties
of complex systems.

%:Acknowledgments
\begin{acknowledgments}
This work was supported by the U.S. National Science Foundation under Grant No.~DMS-1057128, by the U.S. Army Research Office under Grants No.~W911NF-15-1-0272, No.~W911NF-12-1-0276, and No.~W911NF-16-1-0081, and by the Simons Foundation under Grant No.~318812.
\end{acknowledgments}

%:Appendices
\appendix

%\section{Derivation of the stability function $\boldsymbol{\Lambda(\alpha)}$}\label{sec:stab_func}
\section{\texorpdfstring{Derivation of the stability function $\boldsymbol{\Lambda(\alpha)}$}{Derivation of the stability function Lambda(alpha)}}\label{sec:stab_func}

The two assumptions we make in Sec.~\ref{sec:class-systems} regarding the coupling functions $H_{ij}$ can be mathematically formulated as follows:
\begin{itemize}[topsep=6pt, partopsep=0pt, leftmargin=26pt]
\item {\it Formulation of assumption (A-1).}
$D_{\mathbf{u}}\mathbf{H}_{ij}(\mathbf{x}^*,\mathbf{x}^*) = - D_{\mathbf{v}}\mathbf{H}_{ij}(\mathbf{x}^*,\mathbf{x}^*)$, where $D_{\mathbf{u}}\mathbf{H}_{ij}$ and $D_{\mathbf{v}}\mathbf{H}_{ij}$ denote the derivatives with respect to the first and second argument, respectively, of the function $\mathbf{H}_{ij}(\mathbf{u}, \mathbf{v})$. We also assume $\mathbf{H}_{ij}(\mathbf{u},\mathbf{u}) = 0$ for all $\mathbf{u}$, which ensures that the network-homogeneous state is a valid solution of Eq.~\eqref{eqn:coupled-syst}.
Together, these assumptions are equivalent to 
assuming that $\mathbf{H}_{ij}$ can be approximated as $\mathbf{H}_{ij}(\mathbf{u},\mathbf{v}) \approx D_{\mathbf{v}}\mathbf{H}_{ij}(\mathbf{x}^*,\mathbf{x}^*) \cdot (\mathbf{v} - \mathbf{u})$ to the first order in $\mathbf{v} - \mathbf{u}$.
\item {\it Formulation of assumption (A-2).} 
$D_{\mathbf{v}}\mathbf{H}_{ij}(\mathbf{x}^*,\mathbf{x}^*) = A_{ij} \cdot \mathbf{G}(t)$, where the scalar $A_{ij}$ is independent of $t$, and the function $\mathbf{G}(t)$ is independent of $i$ and $j$.
\end{itemize}
Under these assumptions, the variational equation of the system~\eqref{eqn:coupled-syst} around a given network-homogeneous state $\mathbf{x}_1 = \cdots = \mathbf{x}_n = \mathbf{x}^*(t)$ becomes
\begin{equation}\label{eqn:var-eqn}
\dot{\boldsymbol{\xi}}_i 
= D_{\mathbf{x}}\mathbf{F} \cdot \boldsymbol{\xi}_i 
 - D_{\mathbf{y}}\mathbf{F} \cdot
 \sum_{j=1}^n L_{ij} \mathbf{G}(t) \cdot \boldsymbol{\xi}_j,
\end{equation}
where $\boldsymbol{\xi}_i$ is the perturbation to the state of 
node
$i$, $D_{\mathbf{x}}\mathbf{F}$ and $D_{\mathbf{y}}\mathbf{F}$ are the derivatives of the function $\mathbf{F}$ with respect to the first and any of the other arguments, respectively, evaluated at $(\mathbf{x},\mathbf{y}_1,\ldots,\mathbf{y}_n) = (\mathbf{x}^*,\mathbf{0},\ldots,\mathbf{0})$, and $L = (L_{ij})$ is the Laplacian matrix of the network given by Eq.~\eqref{eqn:laplacian}.
An argument based on the Jordan canonical form of $L$ similar to the one used in Ref.~\cite{Nishikawa:2006kx} then leads to a stability function $\Lambda(\alpha)$, defined for given (complex-valued) auxiliary parameter $\alpha$ as the maximum Lyapunov exponent of the solution $\boldsymbol{\eta} = \mathbf{0}$ of
\begin{equation}\label{eqn:mse}
\dot{\boldsymbol{\eta}}
= [D_{\mathbf{x}}\mathbf{F} -\alpha D_{\mathbf{y}}\mathbf{F} \cdot \mathbf{G}(t)]\boldsymbol{\eta}.
\end{equation}
The
exponential rate of convergence or divergence is then given by $\Lambda(\lambda_j)$ for the perturbation mode corresponding to the $j$th (possibly complex) eigenvalue $\lambda_j$ of the Laplacian matrix $L$.
Thus, the perturbation mode with the slowest convergence (or fastest divergence) determines the stability of the network-homogeneous state through $\Lambda_{\max}$ defined in Eq.~\eqref{eqn:lambda-max}.
A key aspect of this approach is that the functional form of $\Lambda(\alpha)$ does not depend on the network structure, implying that the network structure influences the stability only through the Laplacian eigenvalues~\cite{Pecora:1998zp}. 

For a system with a global coupling strength parameter $\eps$, such as the networks of identical oscillators and networks of activator-inhibitor systems described in 
Secs.~\ref{si:sec:example-systems}B and \ref{si:sec:example-systems}C of Supplemental Material~\cite{sm}, respectively,
the derivative $D_{\mathbf{v}}\mathbf{H}_{ij}(\mathbf{x}^*,\mathbf{x}^*)$ in the condition (A-2) above is proportional to $\eps$, and $\mathbf{G}(t)$ can be chosen to include the factor $\eps$ [thus making the stability function $\Lambda(\alpha)=\Lambda_{\eps}(\alpha)$ dependent on $\eps$].
We note that the class of systems treated in Ref.~\cite{Pecora:1998zp} is an important special case of our formulation in which $\mathbf{F}(\mathbf{x},\mathbf{y}_1,\ldots,\mathbf{y}_n)$ depends linearly on the $\mathbf{y}$ variables and the coupling function $\mathbf{H}_{ij}$ is proportional to the difference in (some function of) the state of the nodes 
(details presented in Sec.~\ref{si:sec:example-systems}B of Supplemental Material~\cite{sm}).
We also note that the same stability condition $\Lambda_{\max} \le 0$ is derived in Ref.~\cite{PhysRevE.61.5080} for a general class of systems that is different from the class of systems treated here.
An advantage of our formulation is that the assumptions on the nature of pairwise interactions encoded in the coupling functions $\mathbf{H}_{ij}$ are intuitive and have clear relation to the network structure encoded in the adjacency matrix $A$.

\section{Uniqueness of networks attaining the bound\label{sec:ucm_optimality}}

Here we show that, if the mean degree $\bar{d} := 2m/n$ of the network is a (non-negative) integer, the UCM network is the only one that attains the bound in Eq.~\eqref{bound} among all networks with the same $n$ and $m$.
For $n = k\ell$ and $m = k^2 \ell(\ell - 1)/2$, this claim implies that the UCM network is the only $\lambda_2$ optimizer.
For other combinations of $n$ and $m$, no UCM network exists, and the claim implies that there is no network that can achieve the upper bound.

To prove the claim, we assume that the network attains the bound, i.e., $\lambda_2 = \lfloor 2m/n \rfloor  = \bar{d}$, and aim to show that it must be a UCM network.
We first observe that $\lambda_n^c = n - \lambda_2 = n - \bar{d}$.
Also, since $\bar{d}$ is an integer, so is the mean degree of the complement, $\bar{d}^c = (n-1) - \bar{d}$, and thus Eq.~\eqref{eqn:bound-proof} becomes
\begin{equation}\label{eqn:bound-proof2}
n - \bar{d} = \lambda_n^c \ge d_\text{max}^c + 1 \ge \bar{d}^c + 1 = n - \bar{d}.
\end{equation}
Since this implies that the maximum and the mean degree of the complement match, i.e., $d_\text{max}^c = \bar{d}^c =: d^c$, all nodes must have the same degree $d^c$ in the complement.
Equation~\eqref{eqn:bound-proof2} also implies $\lambda_n^c = d^c + 1$.
Next we consider an arbitrary connected component of the complement and show that its maximum Laplacian eigenvalue equals $d^c + 1$.
On the one hand, 
since the Laplacian spectrum of any network is the union of the Laplacian spectra of its connected components (stated and proved as Proposition~\ref{prop:union} in Supplemental Material~\cite{sm}, Sec.~\ref{si:sec:prop}B),
we see that the maximum Laplacian eigenvalue of this component is at most $\lambda_n^c$ ($= d^c + 1$).
On the other hand, by applying Eq.~\eqref{prop393} to the component and noting that its maximum degree is $d^c$, we see that its maximum Laplacian eigenvalue is at least $d^c + 1$.
Combining these, we conclude that the maximum Laplacian eigenvalue of this component equals $d^c + 1$. 
We now
use the part of Proposition 3.9.3 in Ref.~\cite{brouwer2012spectra} stating that the equality in Eq.~\eqref{prop393} holds true only if $d_\text{max} + 1 = n$.
Applying this to the component and combining with the result above, we see that the component size must be $k := d^c + 1$.
Since each node has degree $d^c$, the component must be fully connected. 
Since the choice of the component was arbitrary, the same holds true for all components in the complement, implying that they form $\ell$ isolated, fully connected clusters of size $k$ (for some positive integer $\ell$).
Therefore, the network 
must be
a UCM network.

\section{Explicit construction of MCC networks}\label{si:sec:construction_MCC}

To construct an MCC network for given $n$ and $m$,
we first compute the function 
$M(n,k)$, which we recall is 
the maximum number of links possible for a network of size $n$ when the largest size of connected components is $\le k$. 
For a given $k$, 
the maximum number of fully connected clusters of size $k$ that one can form with $n$ nodes is 
$\ell := \lfloor n/k\rfloor$. 
Forming 
$\ell$ 
such clusters requires 
$\ell\cdot k(k-1)/2$ 
links, and completely connecting the remaining 
$n_r := n - k\ell$ 
nodes requires $n_r(n_r - 1)/2$ links. Since any additional link would necessarily make the size of some component greater than $k$, this network has the maximum possible number of links, and we thus have
\begin{equation}\label{eqn:Mnk}
M(n,k) = \ell\cdot\frac{k(k-1)}{2} + \frac{n_r(n_r - 1)}{2}
\end{equation}
(proof given in Supplemental Material~\cite{sm}, Sec.~\ref{si:sec:prop}A).
This formula allow us to compute $M(n,k)$ for each $k = 1,\ldots,n$.  The computed $M(n,k)$ can then be used to
determine $k_{n,m_c}$ for 
the
given $m$ 
directly from the definition: $k_{n,m_c}$ is
the smallest integer $k$ for which $m_c \le M(n,k)$, where $m_c = \frac{n(n-1)}{2} - m$. 

The complement of an MCC network 
is then constructed so as to have as many fully connected clusters of size $k_{n,m_c}$ as possible using all the $m_c$ available links. If one or more links remain, we recursively apply the procedure to these links and the set of remaining isolated nodes. If no cluster of size $k_{n,m_c}$ can be formed (which occurs only when $k_{n,m_c} \ge 3$), we first construct a fully connected cluster of size $k_{n,m_c}-1$, which is always possible since $m_c > M(n,k_{n,m_c}-1) \ge (k_{n,m_c}-1)(k_{n,m_c}-2)/2$ by the definition of $k_{n,m_c}$. We then connect the remaining links arbitrarily while ensuring that the size of the largest connected component is $k_{n,m_c}$. The resulting Laplacian eigenvalues are independent of the configuration of these links, since all possible configurations are equivalent up to permutation of node indices. 
The procedure thus generates an MCC network with the given number of nodes and links, $n$ and $m$, respectively.
Note that, in the special case of $n = k\ell$ and $m = k^2 \ell(\ell - 1)/2$ with given positive integers $\ell$ and $k$, the procedure described here results in the UCM network with $\ell$ groups of size $k$, as it is the only MCC network in that case.
A MATLAB implementation for the procedure [including the relevant functions such as $M(n,k)$ and $k_{n,m}$] is available for download~\cite{software}.

\section{\texorpdfstring{Lower bound for maximum $\text{Re}\boldsymbol{(\lambda_2)}$}{Lower bound for maximum Re(lambda2)}\label{si:sec:non-sensitive-directed}}

Here we show that the network constructed in Sec.~\ref{sec:sensitivity-opt-dir} to establish the lower bound satisfies
\begin{equation}\label{eq:kDSeigs}
\begin{split}
&\lambda_1 = 0, \quad \lambda_2 =\dots=\lambda_{n-r}=s,\\
&\lambda_{n-r+1} =\dots=\lambda_{n}=s+1,
\end{split}
\end{equation}
which in particular implies that $\lambda_2 = s$.
We first note that 
$\bar{\lambda}-1 < s \le \bar{\lambda}$, since we have $s =\lfloor \bar{\lambda} \rfloor$ by definition.
From the definition of $r$, we can write $r = m_d-s(n-1) = \phi n(n-1)-s(n-1) = (\bar{\lambda}-s)(n-1)$.
Combining these, we see that $0 \le r \le n-2$.
We thus divide the proof into two cases: $0\leq r\leq s-1$ and $s\leq r\leq n-2$.
In the following, we use the notation $O_{n_1 \times n_2}$ for the zero matrix of size $n_1 \times n_2$ and $I_{n_1 \times n_1}$ for the identity matrix of size $n_1$.

\smallskip\noindent
\textbf{Case 1:}\, If $0\leq r\leq s-1$, the matrix $L$ has the lower block triangular form
\begin{equation}\label{eqn:block-L-1}
L=\begin{pmatrix}
L'_{n_1 \times n_1} & O_{n_1 \times n_2} \\
B_{n_2 \times n_1} & s I_{n_2 \times n_2}
\end{pmatrix},
\end{equation}
where we use the notations $n_1=s+1$ and $n_2=n-s-1$.
Here $L'_{n_1 \times n_1}$ and $B_{n_2 \times n_1}$ are matrices of size $n_1 \times n_1$ and $n_2 \times n_1$, respectively.
The set of eigenvalues of $L$ is thus the union of the set of eigenvalues of $L'$ and the set $\{s,\dots,s\}$ (repeated $n_2$ times, owing to the diagonal block $s I_{n_2 \times n_2}$). 
To obtain the eigenvalues of $L'_{n_1 \times n_1}$, we apply a sequence of row operations to the matrix $L'_{n_1 \times n_1}-\lambda I_{n_1 \times n_1}$.
Denoting the $i$th row of this matrix by $R_i$, we first replace $R_i$ with $R_i-R_{s+1}$ for each $i=1,\ldots,s$, and then replace $R_{s+1}$ with $R_{s+1}+\sum_{i=1}^{r}R_i/(s+1-\lambda)+\sum_{i=r+1}^{s}R_i/(s-\lambda)$ [or with $R_{s+1}+\sum_{i=1}^{s}R_i/(s-\lambda)$, if $r=0$].
Because of the specific form of $L'_{n_1 \times n_1}-\lambda I_{n_1 \times n_1}$, this results in an upper triangular matrix whose diagonal elements are $s+1-\lambda$ (first $r$), $s-\lambda$ (next $s-r$), and $-\lambda$.
Since none of these row operations involve switching two rows or multiplying a row by a nonzero constant, the determinant is invariant, and hence the eigenvalues of $L'_{n_1 \times n_1}$ are $s+1$ (repeated $r$ times), $s$ (repeated $s-r$ times), and $0$ (simple).
Combining with the $n_2$ repetitions of $s$ from the block $s I_{n_2 \times n_2}$ in Eq.~\eqref{eqn:block-L-1}, the eigenvalues of $L$ are $0$ (simple), $s$ (repeated $s-r+n_2=n-r-1$ times), and $s+1$ (repeated $r$ times), satisfying Eq.~\eqref{eq:kDSeigs}.

\smallskip\noindent
\textbf{Case 2:}\, If $s\leq r\leq n-2$, the matrix $L$ has the lower block triangular form
\begin{equation}
L=\begin{pmatrix}
K_{n_1 \times n_1} & O_{n_1 \times n_2} & O_{n_1 \times n_3} \\
B'_{n_2 \times n_1} & (s+1) I_{n_2 \times n_2} & O_{n_2 \times n_3}\\
B''_{n_3 \times n_1} & O_{n_3 \times n_2} & sI_{n_3 \times n_3}
\end{pmatrix},
\end{equation}
where we use the notations $n_1=s+1$ and $n_2=r-s$, and $n_3=n-r-1$.
Here, $K_{n_1 \times n_1}$ is the $n_1 \times n_1$ Laplacian matrix of a complete graph of $n_1$ nodes, with eigenvalues $0$ (simple) and $n_1 = s+1$ (repeated $n_1-1$ times). Therefore, the eigenvalues of matrix $L$ are $0$ (simple), $s$ (repeated $n_3=n-r-1$ times), and $s+1$ (repeated $n_1-1+n_2=r$ times), satisfying Eq.~\eqref{eq:kDSeigs}.

\section{Scaling for eigenvalues with geometric degeneracy\label{si:sec:mat-pert}}

We can
establish Eq.~\eqref{eqn:scaling_bound} for an arbitrary eigenvalue
of an arbitrary matrix. 
Given an $n \times n$ matrix $M$, its Jordan decomposition can be written as
\begin{equation}\label{eqn:sim-trans}
	M = PJP^{-1},
\end{equation}
where $P$ is an invertible matrix and 
\begin{equation}
	J = 
 \begin{pmatrix}
  J^{(1)} & \cdots & 0 \\
  \vdots  & \ddots & \vdots  \\
  0 & \cdots & J^{(p)}
 \end{pmatrix}
\end{equation}
is the block-diagonal Jordan matrix with $p$ Jordan blocks~\cite{Filippov1971}. 
The $j$th Jordan block is of size $\beta_j \times \beta_j$ and has the form
\begin{equation}
	J^{(j)} = 
	 \begin{pmatrix}
	  \lambda_j & 1 & 0 & \cdots & 0 \\
	   0 & \lambda_j & 1 & \cdots & 0 \\
	   \vdots & \vdots & \ddots & \ddots & \vdots \\	   
	   0 & 0 & \cdots & \lambda_j & 1 \\
	   0 & 0 & \cdots & 0 & \lambda_j
\end{pmatrix}.
\end{equation}
Since Eq.~\eqref{eqn:sim-trans} is a similarity transformation, the eigenvalues of $M$ are  the same as those of $J$, which are the diagonal elements $\lambda_1,\lambda_2,\dots,\lambda_p$ of $J$ with corresponding multiplicities $\beta_1, \beta_2,\ldots,\beta_p$, respectively.
Note that $\beta_j$ can be smaller than the algebraic multiplicity of $\lambda_j$, since we may have $\lambda_j = \lambda_{j'}$ for some $j \neq j'$.

As in the main text, we
consider the matrix perturbation of the form $\hat{M}(\delta) = M + \delta\Delta M$, where $\delta>0$ and $\Delta M$ is an $n\times n$ matrix. 
For a given eigenvalue $\lambda$ of $M$, let $\alpha$ and $\beta$ denote its algebraic and geometric degeneracy, respectively.
The geometric degeneracy is defined as the size of the largest Jordan block associated with $\lambda$, or equivalently, as the largest number of repetitions of $\lambda$ associated with the same eigenvector.
Since the roots of a polynomial depend continuously on the coefficients, each eigenvalue of a matrix changes continuously as the elements of that matrix change~\cite{Horn1985}. 
Therefore, there are exactly $\alpha$ eigenvalues of the matrix $\hat{M}(\delta)$ that approach $\lambda$ as $\delta\rightarrow{0}$.
Below we prove that there exists a constant $C\geq{0}$ such that Eq.~\eqref{eqn:scaling_bound} holds true for each eigenvalue $\hat\lambda(\delta)$ of $\hat{M}(\delta)$ that converges to $\lambda$, where we denote $\Delta\lambda(\delta) := \hat\lambda(\delta)-\lambda$.

We
first use the same $P$ that transforms $M$ into $J$ in Eq.~\eqref{eqn:sim-trans} to transform $\hat{M}(\delta)$ for each $\delta$ as
\begin{equation}
	P^{-1}\hat{M}(\delta)P = J + \delta Q,
\end{equation}
where $Q$ is the matrix given by $Q:=P^{-1}\Delta M P$. 
Thus, the eigenvalues of $\hat{M}(\delta)$ are the same as those of $J+\delta Q$. 
To further transform the matrix, consider the block-diagonal matrix
\begin{equation}
T = 
 \begin{pmatrix}
  T^{(1)} & \cdots & 0 \\
  \vdots  & \ddots & \vdots  \\
  0 & \cdots & T^{(p)}
 \end{pmatrix},
\end{equation}
where the $j$th block $T^{(j)}$ is a $\beta_j \times \beta_j$ diagonal matrix with elements $T^{(j)}_{ii} = \delta^{-1 + i/\beta_j}$, $1\leq{i}\leq\beta_j$.
The matrix $T$ is invertible for all $\delta\neq{0}$. Therefore, the eigenvalues of $\hat{M}(\delta)$ are the same as those of the matrix
\begin{equation}\label{eqn:transformed}
	T^{-1}P^{-1}\hat{M}(\delta)PT = T^{-1}JT + \delta T^{-1}QT.
\end{equation}
From the definition of $T$, it follows that the matrix $T^{-1}JT$ has the same block-diagonal structure as $J$ and $T$, and the $j$th diagonal block is the matrix
\begin{equation}\label{eqn:T-block}
	\begin{pmatrix}
	  \lambda_j & \delta^{1/\beta_j} & 0 & \cdots & 0 \\
	   0 & \lambda_j & \delta^{1/\beta_j} & \cdots & 0 \\
	   \vdots & \vdots & \ddots & \ddots & \vdots \\	   
	   0 & 0 & \cdots & \lambda_j & \delta^{1/\beta_j} \\
	   0 & 0 & \cdots & 0 & \lambda_j
\end{pmatrix}.
\end{equation}
It also follows that the $(i,k)$-element of the matrix $\delta T^{-1}QT$ is upper-bounded by $|Q_{ik}|\delta^{1/\beta_j}$, where $j$ is the index for the Jordan block that intersects with the $k$th column of the matrix $J$.
Applying the Gershgorin Theorem~\cite{Horn1985} to the right-hand side of Eq.~\eqref{eqn:transformed}, we see that each eigenvalue of $\hat{M}(\delta)$ must be contained in the disk centered at $\lambda_j$ with radius $C\delta^{1/\beta_j}$ for some $j = 1,\ldots,p$, where $C := 1 + \max_k \sum_i |Q_{ik}|$. [The first term in the expression for $C$ comes from the off-diagonal elements in Eq.~\eqref{eqn:T-block}.]

Now the algebraic and geometric multiplicity of the given eigenvalue $\lambda$ of $M$ can be expressed as $\alpha = \sum_{j} \beta_{j}$ and $\beta = \max_{j} \beta_{j}$, respectively, where the sum and the maximum are both taken over all $j$ for which $\lambda_{j} = \lambda$.
Choose $\hat\lambda(\delta)$ to be any of the $\alpha$ eigenvalues of $\hat{M}(\delta)$ that converge to $\lambda$ as $\delta \to 0$.
Also choose a fixed $\delta$ value sufficiently small to ensure that any two disks with different centers among those mentioned above in connection with the Gershgorin Theorem are disjoint (which can be achieved if $\max_j C\delta^{1/\beta_j}$ is less than half the minimum distance between distinct eigenvalues of $M$).
With this choice, the disk centered at $\lambda$ with radius $C\delta^{1/\beta}$ is  disjoint from all the others and must contain $\hat\lambda(\delta)$; otherwise $\hat\lambda(\delta)$ would have to jump discontinuously from another disk as $\delta \to 0$ since it must remain in at least one of these disks, and this would violate the continuity of $\hat\lambda(\delta)$ with respect to $\delta$.
Having $\hat\lambda(\delta)$ in the disk centered at $\lambda$ with radius $C\delta^{1/\beta}$ immediately gives the 
inequality~\eqref{eqn:scaling_bound}.%

\section{Non-sensitivity under weighted constrained perturbations\label{sec:non-sensitivity-constrained}}

We can
show that all eigenvalues are non-sensitive under a certain class of weighted perturbations even when $\beta>1$.
If the matrix $P$ for the Jordan decomposition of $M$ in Eq.~\eqref{eqn:sim-trans} transforms the perturbation matrix $\Delta M$ into an upper triangular matrix, then the matrix $T^{-1}QT$ in Eq.~\eqref{eqn:transformed} is also upper triangular. In this case, we have the stronger result that the perturbed eigenvalues are given precisely by $\hat{\lambda}(\delta) = \lambda + \delta Q_{ii}$, where $i$ is the index for any column of $J$ that intersects with a Jordan block associated with the eigenvalue $\lambda$. 
The change of each eigenvalue is thus proportional to $\delta$, i.e., the scaling exponent is one, independently of $\beta$ [which is consistent with the general 
result in Eq.~\eqref{eqn:scaling_bound}
since $\beta \ge 1$].
The result for non-generic perturbations in Sec.~\ref{sec:typical-behavior}
follows 
from this if
$M$ is replaced by the Laplacian matrix $L$ and $\Delta M$ by $\Delta L$.
In particular, the result applies to the case of a directed tree with each link having equal weight and $\Delta L$ representing a perturbation of the weights of the existing links.

\section{Scaling for approximately degenerate networks\label{sec:approximate-optimal}}

Here we show that the scaling in Eqs.~\eqref{eqn:scaling_bound} and \eqref{eqn:generic_scaling} is observed even when the eigenvector is only approximately degenerate.
More precisely, we show that, when the matrix is close to one with exact degeneracy, the scaling remains valid over a range of $\delta$ much larger than the distance between the two matrices.

Suppose that a matrix $M_0$ has an eigenvalue $\lambda(M_0)$ with exact geometric degeneracy $\beta$.
We consider a perturbation of $M_0$ in the form $M_1=M_0+\eps\Delta M_1,$ where $\Delta M_1$ is a fixed matrix satisfying $||\Delta M_1||=1$.
Thus, the distance between $M_0$ and $M_1$ is $\eps$, and for small $\eps$ (and a generic choice of $\Delta M_1$) the matrix $M_1$ is approximately degenerate.
We now apply a perturbation of size $\delta$ to $M_1$ in the form $M_2=M_1+\delta\Delta M_2$, where $\Delta M_2$ is another fixed matrix satisfying $||\Delta M_2||=1$.
Denoting $\eta:=\eps/\delta$, we can write $M_2$ as a perturbation of $M_0$ rather than $M_1$, namely, $M_2=M_0+\delta(\eta\Delta M_1+\Delta M_2)$.

When taking the limit $\delta\to 0$ with $\eta$ fixed, matrices $\Delta M_1$ and $\eta\Delta M_1+\Delta M_2$ are both fixed, so we can apply the result in 
Eq.~\eqref{eqn:scaling_bound}.
We thus have
\begin{equation}\label{eqn:limsup-cond}
\begin{split}
\limsup_{\delta\rightarrow{0}}\frac{|\lambda(M_1)-\lambda(M_0)|}{(\eta\delta)^{1/\beta}} &\leq C_1,\\
\limsup_{\delta\rightarrow{0}}\frac{|\lambda(M_2)-\lambda(M_0)|}{\delta^{1/\beta}} &\leq C_2,
\end{split}
\end{equation}
for some constants $C_1,C_2\ge 0$, where $\lambda(M_1)$ and $\lambda(M_2)$ denote eigenvalues of $M_1$ and $M_2$, respectively, that approach $\lambda(M_0)$ as $\delta\to 0$.
This means that for an arbitrary $\xi>0$, we can find $\delta_1>0$ and $\delta_2>0$ (which can depend on $\eta$) such that
\begin{equation}\label{eqn:u-bound}
\begin{split}
\frac{|\lambda(M_1)-\lambda(M_0)|}{(\eta\delta)^{1/\beta}} &< C_1+\frac{\xi}{2\eta^{1/\beta}},\quad\text{if $\delta<\delta_1$},\\
\frac{|\lambda(M_2)-\lambda(M_0)|}{\delta^{1/\beta}} &< C_2+\frac{\xi}{2},\quad\text{if $\delta<\delta_2$}.
\end{split}
\end{equation}
Then,
\begin{equation}\label{eqn:u-bound-2}
\begin{split}
\MoveEqLeft\frac{|\lambda(M_2)-\lambda(M_1)|}{\delta^{1/\beta}}\\
&=\frac{|[\lambda(M_2)-\lambda(M_0)]-[\lambda(M_1)-\lambda(M_0)]|}{\delta^{1/\beta}}\\
&\le \frac{|\lambda(M_2)-\lambda(M_0)|}{\delta^{1/\beta}}+\frac{|\lambda(M_1)-\lambda(M_0)|}{(\eta\delta)^{1/\beta}}\cdot\eta^{1/\beta}\\
&<\left(C_2+\frac{\xi}{2}\right)+\left(C_1+\frac{\xi}{2\eta^{1/\beta}}\right)\cdot\eta^{1/\beta}\\
&=C_2+C_1\eta^{1/\beta}+\xi,
\end{split}
\end{equation}
if $\delta<\min(\delta_1,\delta_2)$.
Since $\xi$ can be made arbitrarily small by making $\delta$ sufficiently small, we have
\begin{equation}
\limsup_{\delta\rightarrow{0}}\frac{|\lambda(M_2)-\lambda(M_1)|}{\delta^{1/\beta}} \leq C_2+C_1\eta^{1/\beta}.
\end{equation}
Thus, Eq.~\eqref{eqn:scaling_bound} and the corresponding bound on the scaling exponent, $\gamma \ge 1/\beta$, remain valid for any fixed $\eta$ (i.e., with $\eps\to 0$ as $\delta\to 0$ while holding $\eta=\eps/\delta$ constant).
For finite $\eps$ and $\delta$, this result suggests that we should observe the scaling $|\lambda(M_2)-\lambda(M_1)|\sim \delta^\gamma$ with $\gamma \ge 1/\beta$ when $\eps \ll \delta \ll 1$.

Now consider the stronger scaling property in Eq.~\eqref{eqn:generic_scaling}, which can be formalized for $M_1$ and $M_2$ as
\begin{equation}\label{eqn:stronger-scaling-cond}
\begin{split}
\lim_{\delta\rightarrow{0}}\frac{|\lambda(M_1)-\lambda(M_0)|}{(\eta\delta)^{1/\beta}} = C_1,\\
\lim_{\delta\rightarrow{0}}\frac{|\lambda(M_2)-\lambda(M_0)|}{\delta^{1/\beta}} = C_2.
\end{split}
\end{equation}
Replacing Eq.~\eqref{eqn:limsup-cond} with Eq.~\eqref{eqn:stronger-scaling-cond} and using the resulting lower bounds analogous to those in Eq.~\eqref{eqn:u-bound}, we obtain a lower bound analogous to that in Eq.~\eqref{eqn:u-bound-2}. Combining this with Eq.~\eqref{eqn:u-bound-2}, we obtain
\begin{equation}\label{eqn:u-bound-3}
\left|\frac{|\lambda(M_2)-\lambda(M_1)|}{\delta^{1/\beta}} - C_2\right| < C_1\eta^{1/\beta}+\xi.
\end{equation}
Since $\xi$ can be made arbitrarily small by making $\delta$ sufficiently small, we see that 
%$\limsup_{\delta\rightarrow{0}}\Bigl|\frac{|\lambda(M_2)-\lambda(M_1)|}{\delta^{1/\beta}} - C_2\Bigr| \le C_1\eta^{1/\beta}$.
$$\limsup_{\delta\rightarrow{0}}\Bigl|\frac{|\lambda(M_2)-\lambda(M_1)|}{\delta^{1/\beta}} - C_2\Bigr| \le C_1\eta^{1/\beta}.$$
This implies the scaling in Eq.~\eqref{eqn:generic_scaling}, or more precisely, $|\lambda(M_2)-\lambda(M_1)| = C\delta^{1/\beta}$ with a prefactor $C$ that can vary with $\delta$ but is bounded between $C_2 \pm C_1\eta^{1/\beta}$ as $\delta\to 0$.
The ratio $\eta$ of perturbation sizes thus determines the range of variation of this scaling prefactor.
In the limit of both $\eta\to 0$ and $\delta\to 0$, Eq.~\eqref{eqn:u-bound-3} implies $\lim_{\eta,\delta\rightarrow{0}}\frac{|\lambda(M_2)-\lambda(M_1)|}{\delta^{1/\beta}}=C_2$.
Therefore, we have the scaling $|\lambda(M_2)-\lambda(M_1)| \approx C_2 \delta^{1/\beta}$ when $\eps \ll \delta \ll 1$.

Altogether, we have shown that the scaling properties in Eqs.~\eqref{eqn:scaling_bound} and \eqref{eqn:generic_scaling} are observed for the eigenvalues of $M_2$ when the size $\delta$ of the perturbation applied to $M_1$ is much larger compared to the distance $\eps$ between $M_1$ and the exactly degenerate matrix $M_0$.

%:References

\end{bibunit}

%:Supplementary Material
\clearpage
\begin{bibunit}[plain]

\setcounter{page}{1}
\onecolumngrid

\begin{center}
\textbf{\Large Supplemental Material}\\[3mm]
\textit{Sensitive Dependence of Optimal Network Dynamics on Network Structure}\\[2pt] 
Takashi Nishikawa, Jie Sun, and Adilson E. Motter
\end{center}

% Macros to make the appearance of equation numbers within this document. 
\counterwithout{equation}{section} % Requires chngcntr package
\setcounter{equation}{0}
\renewcommand{\theequation}{S\arabic{equation}}

% Macros for sectioning within this document.
\setcounter{section}{0}
\renewcommand{\thesection}{S\arabic{section}}

% Change the numbering of references within this document.
\renewcommand*{\citenumfont}[1]{S#1}
\renewcommand*{\bibnumfmt}[1]{[S#1]}

\bigskip\noindent
In the following sections we provide descriptions of several 
examples of
systems and processes 
to which our analysis applies
(Sec.~\ref{si:sec:example-systems}),
proofs of key properties of 
MCC networks (Sec.~\ref{si:sec:prop}), 
as well as 
supplemental figures (Figs.~\ref{fig:comparisonA}--\ref{fig:n-dep-fixed-m}).

\medskip
\section{Example systems and processes}
\label{si:sec:example-systems}

\subsection{Power grids and networks of non-identical phase oscillators}\label{sec:pg}
Given the initial state of a power grid (defined by the voltage phase and magnitude as well as 
active
and reactive power generation/consumption at each node) 
obtained from the standard power flow calculation, the short-term dynamics of generators 
are
governed by the so-called swing equation~\cite{Anderson:1986fk}.
This equation can be expressed as
\begin{equation}\label{eqn:pg}
\frac{2H_i}{\omega_R}\ddot{\theta}_i = -D_i\dot{\theta}_i
+ P_{m,i} - E_i^2 G_{ii} - \sum_{\substack{j=1\\ j \neq i}}^n E_i E_j[B_{ij}\sin(\theta_i - \theta_j) 
+ G_{ij}\cos(\theta_i - \theta_j)],
\end{equation}
where, for each generator $i$, the angle $\theta_i$ is the voltage phase relative to a reference phase rotating at frequency $\omega_R$, $H_i$ is the inertia constant, $D_i$ is the effective damping coefficient (representing various damping and damping-like effects, including 
the actions of power system stabilizers), 
and $P_{m,i}$ is the constant mechanical power provided 
to 
the generator.
Each generator $i$ is modeled as a voltage source of constant magnitude $E_i$ connected to the rest of the network through a transient reactance $x'_{d,i}$ and a terminal node, where $E_i$ is determined from the initial state of the system obtained through power flow calculation. 
The interactions between generators are represented by an effective admittance matrix, whose real and imaginary components are denoted by $G_{ij}$ and $B_{ij}$, respectively.
This matrix is obtained by reducing the matrix representing the physical network of transmission lines and transformers, which accounts for the network topology encoded by the adjacency matrix $A$, the node-to-node impedances, and the nodes' power consumption modeled as equivalent constant impedances to the ground~\cite{Motter:2013fk}.
While Eq.~\eqref{eqn:pg} does not directly fit into the framework of Eq.~\eqref{eqn:coupled-syst} in the main text, it can be transformed into the same form by rewriting it in terms of the deviations from a frequency-synchronized solution, as we will show below.

Equation~\eqref{eqn:pg}, which describes the dynamics of power generators, belongs to the class of coupled second-order phase oscillators, as it can be expressed as
\begin{equation}\label{eqn:second-order-kuramoto}
\ddot{\theta}_i + \beta_i\dot{\theta}_i = \Omega_i
  + \sum_{j=1}^n \hat{H}_{ij}(\theta_i - \theta_j),
\end{equation}
where
\begin{align}
&\beta_i = \frac{D_i\omega_R}{2H_i},\\
&\Omega_i = \frac{P_{m,i}\omega_R}{2H_i},\\
&\hat{H}_{ij}(\theta) = - \frac{\omega_R E_i E_j}{2H_i}[ B_{ij}\sin(\theta) + G_{ij}\cos(\theta)].
\end{align}
Besides modeling power-grid dynamics, the general class of systems described by Eq.~\eqref{eqn:second-order-kuramoto} can be used to model disordered Josephson junction arrays~\cite{PhysRevE.71.016215}.
Note that this class of systems does not directly fit into the framework of Eq.~\eqref{eqn:coupled-syst} because the oscillators are not identical and the coupling function $\hat{H}_{ij}$ generally does not satisfy the requirement $\mathbf{H}_{ij}(\mathbf{u},\mathbf{u}) = 0$.
This is indeed the case with the power-grid equation~\eqref{eqn:pg} because of the cosine term.
The difficulty, however, can be overcome 
in any system of the form in Eq.~\eqref{eqn:second-order-kuramoto} by changing the frame of reference.  Specifically, we rewrite
the equation in terms of the deviations from a frequency-synchronized solution $\theta_i = \theta^*_i + \Omega_s t$, satisfying 
\begin{equation}
\beta_i \Omega_s = \Omega_i + \sum_{j=1}^n \hat{H}_{ij}(\theta^*_i - \theta^*_j)
\end{equation}
for each $i$.
Writing $\theta'_i = \theta_i - (\theta^*_i + \Omega_s t)$, we obtain 
\begin{equation}
\ddot{\theta}'_i + \beta_i\dot{\theta}'_i = \sum_{j=1}^n \widetilde{H}_{ij}(\theta'_i - \theta'_j),
\end{equation}
 where $\widetilde{H}_{ij}(\theta) 
= \hat{H}_{ij}(\theta^*_i - \theta^*_j + \theta)
 - \hat{H}_{ij}(\theta^*_i - \theta^*_j)$.
If we assume that $\beta_i = \beta$ is independent of $i$ (which can hold even if the individual generator parameters vary widely across the network), then this equation fits in the framework of Eq.~\eqref{eqn:coupled-syst} with 
\begin{gather}
\mathbf{x}_i = \begin{pmatrix}\theta'_i\\ \dot{\theta}'_i \end{pmatrix}, \\
\mathbf{F}( \mathbf{x},\mathbf{y}_1,\ldots,\mathbf{y}_n) = \begin{pmatrix} \dot{\theta}\\ -\beta\dot{\theta} \end{pmatrix} + \sum_{j=1}^n \mathbf{y}_j, \quad \mathbf{x} = \begin{pmatrix} \theta\\ \dot{\theta} \end{pmatrix}, \\
\mathbf{H}_{ij}\bigl( \mathbf{u}, \mathbf{v} \bigr) = \begin{pmatrix} 0\\ \widetilde{H}_{ij}(u_1 - v_1) \end{pmatrix},
\quad \mathbf{u} = \begin{pmatrix} u_1\\ u_2 \end{pmatrix},
\quad \mathbf{v} = \begin{pmatrix} v_1\\ v_2 \end{pmatrix},
\end{gather} 
and this $\mathbf{H}_{ij}$ satisfies 
the assumptions (A-1) and (A-2) in the main text.
In particular, we have 
$D_{\mathbf{v}}\mathbf{H}_{ij} = \widetilde{A}_{ij}\mathbf{G}$, 
where $\widetilde{A}_{ij} = - \hat{H}'_{ij}(\theta^*_i - \theta^*_j)$ and $\mathbf{G} = \bigl({0 \atop 1} {0 \atop 0}\bigr)$.
For the network-homogeneous state $\theta'_i = \theta^*$, $\forall i$, Eq.~\eqref{eqn:mse} in Appendix~\ref{sec:stab_func} thus becomes
\begin{equation}
\dot{\boldsymbol{\eta}}
= [D_{\mathbf{x}}\mathbf{F} -\alpha \mathbf{G}]\boldsymbol{\eta} = \begin{pmatrix} 0 & 1\\ -\alpha & -\beta \end{pmatrix}\boldsymbol{\eta},
\quad D_{\mathbf{x}}\mathbf{F} = 
\begin{pmatrix} 0 & 1\\ 0 & -\beta \end{pmatrix}.
\end{equation}
The stability function can then be explicitly calculated as 
\begin{equation}
\Lambda(\alpha) = \text{Re}\left(\frac{-\beta + \sqrt{\beta^2 - 4\alpha}}{2}\right),
\end{equation}
 which shows that the  stability function derived in Ref.~\cite{Motter:2013fk} remains the same for the general class of second-order phase oscillator 
networks in Eq.~\eqref{eqn:second-order-kuramoto}
and that it is independent of the choice of the coupling function $\hat{H}_{ij}$.
The stability function is strictly decreasing as a function of $\text{Re}(\alpha)$ for $\text{Re}(\alpha) \le \beta^2/4$ and strictly increasing as a function of $|\text{Im}(\alpha)|$.
The convergence rate to the frequency-synchronized solution $\theta_i = \theta^*_i + \Omega_s t$ can then be determined to be $\Lambda_{\max} = \max_{j\ge 2} \Lambda(\widetilde{\lambda}_j)$, where $\widetilde{\lambda}_j$ are the eigenvalues of 
the (weighted) Laplacian 
matrix associated with the network of effective interactions represented by 
the (weighted) effective adjacency 
matrix $\widetilde{A} = (\widetilde{A}_{ij})$.

In the case of power-grid dynamics~\eqref{eqn:pg}, we have 
\begin{equation}\label{eqn:Aij-power-grid}
\widetilde{A}_{ij} = \frac{\omega_R E_i E_j}{2H_i}[ B_{ij}\cos(\theta^*_i - \theta^*_j) - G_{ij}\sin(\theta^*_i - \theta^*_j)].
\end{equation}
For the results shown in Fig.~\ref{fig:example-systems}(a) of the main text on the dependence of $\Lambda_{\max}$ on the network topology, we assume for simplicity that all generators, transformers, and transmission lines have identical parameters.
The parameters are taken from the components in the 9-bus example system discussed in Ref.~\cite{Anderson:1986fk}, and we stressed the system (power demand and consumption uniformly increased by a factor of $8$), while using a smaller value of generators' transient reactances ($x'_{d,i} = 0.02$ per unit) to ensure stability.  
To construct a power-grid network for a given adjacency matrix $A$, we first connect the terminal node of each generator to a unique load node through a transformer.
These $n$ load buses are then connected to each other by transmission lines according to the topology given by $A$.
Note that the eigenvalues to be inserted into $\Lambda(\alpha)$ to determine $\Lambda_{\max}$ are the Laplacian eigenvalues $\widetilde{\lambda}_j$ corresponding to $\widetilde{A}$, which are not generally equal to $\lambda_j$.
However, the variations of $E_i$ and $\theta^*_i$ are both small for a typical power grid, 
making $\widetilde{A}$ approximately equal to
a constant multiple of 
$B = (B_{ij})$ [see Eq.~\eqref{eqn:Aij-power-grid}].
Since $B$ is tightly related to $A$ through the network reduction process mentioned 
after Eq.~\eqref{eqn:pg},
we expect $\lambda_j$ to be strongly correlated with $\widetilde{\lambda}_j$, and therefore to the convergence rate $\Lambda_{\max}$.
Note that if 
we assume 
that $\beta_i$ is independent of $i$ and the transmission lines are lossless 
(i.e., $G_{ij}=0$ for $i\neq j$), 
then Eq.~\eqref{eqn:second-order-kuramoto} 
would take the form of a network of coupled second-order Kuramoto-type phase oscillators~\cite{PhysRevLett.110.218701}, which is the form used to model power grids in Ref.~\cite{Filatrella2008}. If we take the limit of strong damping (large $D_i$) instead of assuming $\beta_i = \beta$, the node dynamics would reduce to that of a network of first-order Kuramoto-type oscillators~\cite{fd-fb:09z}.

We also note that the technique used above to put Eq.~\eqref{eqn:second-order-kuramoto} in the form of Eq.~\eqref{eqn:coupled-syst} can also be used to treat the phase-reduced model for networks of weakly coupled limit-cycle 
oscillators~\cite{hoppensteadt1997weakly}
\begin{equation}\label{eqn:first-order-kuramoto}
\dot{\theta}_i = \Omega_i + \sum_{j=1}^n \hat{H}_{ij}(\theta_i - \theta_j),
\end{equation}
which can be regarded as a first-order version of Eq.~\eqref{eqn:second-order-kuramoto}.
In this case, Eq.~\eqref{eqn:mse} becomes $\dot{\boldsymbol{\eta}} = -\alpha\boldsymbol{\eta}$, and the stability function is 
$\Lambda(\alpha) = - \text{Re}(\alpha)$ [which is strictly decreasing in $\text{Re}(\alpha)$], 
again independent of the choice of the coupling function $\hat{H}_{ij}$.
The convergence rate to the frequency-synchronized solution can be computed as 
$\Lambda_{\max} = \max_{j\ge 2} \Lambda(\widetilde{\lambda}_j)$, where $\widetilde{\lambda}_j$ are the eigenvalues of the Laplacian matrix associated with the effective adjacency matrix defined by $\widetilde{A}_{ij} = - \hat{H}'_{ij}(\theta^*_i - \theta^*_j)$.
In the special case of the Kuramoto model with arbitrary network structure $A_{ij}$ and pairwise frustration parameters $\delta_{ij}$, the coupling function is $\hat{H}_{ij}(\theta) = A_{ij}\sin(\theta + \delta_{ij})$, and the effective interaction matrix is given by $\widetilde{A}_{ij} = A_{ij} \cos(\theta^*_i - \theta^*_j + \delta_{ij})$.

\subsection{Networks of identical oscillators}
The class of coupled oscillator networks considered in Ref.~\cite{Pecora:1998zp} can be described by Eq.~\eqref{eqn:coupled-syst} with 
\begin{gather}
\mathbf{F}(\mathbf{x}, \mathbf{y}_1, \ldots, \mathbf{y}_n) = \mathbf{F}(\mathbf{x}) + \sum_{j=1}^n \mathbf{y}_j, \\
\mathbf{H}_{ij}(\mathbf{u},\mathbf{v}) = \eps A_{ij} [\mathbf{H}(\mathbf{v}) - 
\mathbf{H}(\mathbf{u})], \label{eqn:Hij-coupled-osc}
\end{gather}
where the function $\mathbf{F}(\mathbf{x})$ describes the dynamics of a single oscillator in isolation, the function $\mathbf{H}$ represents the signal that a 
node
sends to other 
nodes,
and $\eps$ is the global coupling strength.
We note that $\mathbf{H}_{ij}$ defined in Eq.~\eqref{eqn:Hij-coupled-osc} satisfies the assumptions (A-1) and (A-2) in the main text for any (differentiable) $\mathbf{H}$.
In particular, we have $D_{\mathbf{v}}\mathbf{H}_{ij}(\mathbf{x}^*,\mathbf{x}^*) = A_{ij} \cdot \mathbf{G}(t)$, where $\mathbf{G}(t) = \eps D\mathbf{H}(\mathbf{x}^*(t))$.
In this case, the network-homogeneous state 
represents completely
synchronized periodic or chaotic motion of the oscillators.
The stability function $\Lambda(\alpha) = \Lambda_{\eps}(\alpha)$ derived in Appendix~\ref{sec:stab_func} is determined from Eq.~\eqref{eqn:mse}, which can be written as 
$\dot{\boldsymbol{\eta}}
= [D\mathbf{F} -\alpha\eps D\mathbf{H}]\boldsymbol{\eta}$
in this case~\cite{Pecora:1998zp}.
For many choices of $\mathbf{F}$ and $\mathbf{H}$, the stability function has a range of $\text{Re}(\alpha)$ for which it is a strictly decreasing function of $\text{Re}(\alpha)$.

The experimental system of coupled optoelectronic oscillators studied in Ref.~\cite{PhysRevLett.107.034102} does not directly fit into the class of systems described by Eq.~\eqref{eqn:coupled-syst} because non-negligible time delay is 
present
in the dynamics of individual oscillators, as well as in the coupling between them.
However, the convergence rate toward synchronization in this system 
is determined 
by the Laplacian eigenvalues through a stability function, which is slightly different from $\Lambda(\alpha)$ defined in Appendix~\ref{sec:stab_func} but can be derived in a similar fashion.
Indeed, the experimentally observed (finite-time) convergence rate toward synchronous state can be estimated as 
\begin{equation}
\mu := \frac{\sum_{j=2}^n \mu_j e^{-\mu_j T}}{\sum_{j=2}^n e^{-\mu_j T}},
\end{equation}
where $\mu_j = \widetilde\Lambda(\eps\lambda_j/d)$ is the convergence rate for the $j$th eigenmode of perturbations, $\widetilde\Lambda$ is the stability function, $\lambda_j$ is the $j$th Laplacian eigenvalue of the network with adjacency matrix $A_{ij}$, the constant $T = 2.0$ ms is the time scale for the experimental system to converge to the synchronous state, $\eps$ is the global coupling strength, and $d$ is the normalization factor defined as the average coupling per node, $\sum_i\sum_{j\neq i} A_{ij}/n$.
We use the stability function $\widetilde\Lambda$ measured experimentally in Ref.~\cite{PhysRevLett.107.034102} and choose the value of $\eps$ for which an optimal network with $\lambda_2 = \cdots = \lambda_n = \bar{\lambda}$ achieves the maximum possible convergence rate.
In Fig.~\ref{fig:generic_perturbation_phys}, which shows our results for directed networks, we take only the real part of $\lambda_j$, since $\widetilde\Lambda$ is available from the experiments only for real values of its argument. 
This approximation is justified by the fact that the dependence of $\widetilde\Lambda$ on the imaginary part is quadratic, and that the deviation of $\lambda_j$ from $\bar{\lambda}$, and thus the imaginary part of $\lambda_j$, is small for small perturbation size $\delta$.
In this system, the 
elements
of the perturbation matrix $\Delta A$ represents imperfection in prescribing the coupling strengths between dynamical units in experiments.

\subsection{Networks of activator-inhibitor systems}
We consider a network of coupled activator-inhibitor systems studied in Ref.~\cite{Nakao:2010fk}, which can be described by Eq.~\eqref{eqn:coupled-syst} with $\mathbf{x}_i = (z_i,w_i)^T$, where $z_i$ and $w_i$ denote the activator and inhibitor concentrations at node $i$, respectively, and 
\begin{gather}
\mathbf{F}(\mathbf{x},\mathbf{y}_1,\ldots,\mathbf{y}_n) = 
\begin{pmatrix} f(z,w)\\ g(z,w) \end{pmatrix} + \sum_{j=1}^n \mathbf{y}_j,\\
f(z,w) = [(a+bz-z^2)/c-w]z,\quad
g(z,w) = [z-(1+ew)]w,\\
\mathbf{H}_{ij}(\mathbf{u},\mathbf{v}) = \eps A_{ij} [\mathbf{H}(\mathbf{v}) - 
\mathbf{H}(\mathbf{u})],\quad
\mathbf{H}(\mathbf{x}) = \begin{pmatrix} z\\ \sigma w \end{pmatrix}. \label{eqn:Hij-activator-inhibitor}
\end{gather}
Here $\eps$ and $\sigma\eps$ are the diffusivity constants for the activator and inhibitor, respectively. 
The symmetry of the diffusion of these species implies that the network given by $A_{ij}$ must be undirected (and unweighted).
For the result shown in Fig.~\ref{fig:example-systems}(b), we used the node parameter values from Ref.~\cite{Nakao:2010fk} ($a=35$, $b=16$, $c=9$, and $e=2/5$) and set $\sigma=16$.
For any value of $\eps$ (and $\sigma$), this network has a state of uniform concentration given by $\mathbf{x}_i = \mathbf{x}^* =(5,10)^T$, $\forall i$.
The stability function is determined from Eq.~\eqref{eqn:mse} in Appendix~\ref{sec:stab_func}, which in this case becomes 
\begin{equation} \label{eqn:S21}
\dot{\boldsymbol{\eta}}
= [D_{\mathbf{x}}\mathbf{F} - \alpha\eps D\mathbf{H}]\boldsymbol{\eta},\quad
D_{\mathbf{x}}\mathbf{F} = \begin{pmatrix} f_z & f_w\\ g_z & g_w \end{pmatrix},\quad
D\mathbf{H} = \begin{pmatrix} 1 & 0\\ 0 & \sigma \end{pmatrix},
\end{equation}
where the partial derivatives of $f$ and $g$ evaluated at $\mathbf{x}^*$ are denoted by $f_z$, $f_w$, $g_z$, and $g_w$.
We note that $\mathbf{H}_{ij}$ defined in Eq.~\eqref{eqn:Hij-activator-inhibitor} has the same form as Eq.~\eqref{eqn:Hij-coupled-osc}.
It thus satisfies the assumptions (A-1) and (A-2) in the main text, and we have $D_{\mathbf{v}}\mathbf{H}_{ij}(\mathbf{x}^*,\mathbf{x}^*) = A_{ij} \mathbf{G}$, where $\mathbf{G} = \eps D\mathbf{H}(\mathbf{x}^*) = \eps\bigl({1 \atop 0} {0 \atop \sigma}\bigr)$.
Since $D_{\mathbf{x}}\mathbf{F} - \alpha\eps D\mathbf{H}$ in Eq.~\eqref{eqn:S21} is a $2 \times 2$ matrix that is independent of time, computing its eigenvalues leads to an explicit formula for the  stability function:
\begin{equation}
\Lambda(\alpha) = \Lambda_{\eps}(\alpha) \\
= \frac{1}{2}\left\{ f_z + g_w - (1+\sigma)\alpha\eps + \sqrt{4 f_w g_z + (f_z - g_w - (1-\sigma)\alpha\eps)^2} \right\}.
\end{equation}
For the parameters used, we find this function to be strictly decreasing for $0 \le \alpha \le (f_z  - g_w - 2\sqrt{-f_z g_w})/[(1-\sigma)\eps]$.
The function has two values of $\alpha$ for which $\Lambda(\alpha) = 0$, the larger of which can be written as $\alpha = \alpha_c/\eps$, where $\alpha_c$ is a constant (which equals $11/6$ for the parameter values we used).
Thus, as $\eps$ decreases from a sufficiently large value, the critical value of the diffusivity constant $\eps=\eps_c$ at the onset of Turing instability is determined by the condition $\lambda_2 = \alpha_c/\eps_c$, thus giving $\eps_c = \alpha_c/\lambda_2$, which is a strictly decreasing function of $\lambda_2$.

\subsection{Consensus protocol}
The continuous-time linear consensus 
protocol 
is described by Eq.~\eqref{eqn:coupled-syst} with 
\begin{equation}
\mathbf{F}(\mathbf{x},\mathbf{y}_1,\ldots,\mathbf{y}_n) = \sum_{j=1}^n \mathbf{y}_j,\quad
\mathbf{H}_{ij}(\mathbf{u},\mathbf{v}) = A_{ij}(\mathbf{v} - \mathbf{u}).
\end{equation}
This $\mathbf{H}_{ij}$ satisfies the assumptions (A-1) and (A-2) in the main text, and we have $D_{\mathbf{v}}\mathbf{H}_{ij}(\mathbf{x}^*,\mathbf{x}^*) = A_{ij} \mathbf{G}$, where $\mathbf{G}$ equals the identity matrix in this case.
Equation~\eqref{eqn:mse}
in Appendix~\ref{sec:stab_func} reads $\dot{\boldsymbol{\eta}}= -\alpha \boldsymbol{\eta}$, and the stability function is given by 
$\Lambda(\alpha) = -\text{Re}(\alpha)$ [which is strictly decreasing with respect to $\text{Re}(\alpha)$].
If there is a directed spanning tree embedded in the network, we have $\text{Re}(\lambda_j) > 0$ for all $j\ge 2$, and the system converges from an arbitrary initial condition to a 
network-homogeneous state,
$\mathbf{x}_i = \mathbf{x}^*$, $\forall i$~\cite{4140748} at a rate given by $\Lambda_{\max} = -\min_{j\ge 2} \text{Re}(\lambda_j)=-\text{Re}(\lambda_2)$.
If the network is strongly connected and balanced in the sense that the in- and out-degrees are equal for all nodes (i.e., $\sum_j A_{ij} = \sum_i A_{ij}$), then this state corresponds to the solution of the average consensus problem, $\mathbf{x}^* = \sum_i \mathbf{x}_i(0)/n$~\cite{4140748}.

\subsection{Diffusion over networks}
Assuming the Fick's law for the diffusive process over each link of the network, the flux from node $j$ to node $i$ ($\neq j$) is given by $\eps (x_j - x_i)$, where $\eps$ is the diffusivity constant and $x_i$ is the density of the diffusing species at node $i$.
The dynamics of the system 
are
then governed by Eq.~\eqref{eqn:coupled-syst} with 
\begin{equation}
\mathbf{F}(x,y_1,\ldots,y_n) = \sum_{j=1}^n y_j, \quad
\mathbf{H}_{ij}(u,v) = \eps A_{ij}(v - u).
\end{equation}
This $\mathbf{H}_{ij}$ satisfies the assumptions (A-1) and (A-2) in the main text, and we have $D_{\mathbf{v}}\mathbf{H}_{ij}(\mathbf{x}^*,\mathbf{x}^*) = A_{ij}\mathbf{G}$ with $\mathbf{G} = \eps$.
The symmetry of the diffusive process requires that the network be undirected, and hence that $\lambda_j$ be all real.
For a connected network (which has $\lambda_2 > 0$), the system has a 
network-homogeneous state,
$x_i(t) = x^* = \sum_i x_i(0)/n$, $\forall i,t$, for which Eq.~\eqref{eqn:mse} in Appendix~\ref{sec:stab_func} becomes $\dot{\boldsymbol{\eta}}= -\alpha\eps \boldsymbol{\eta}$, leading to the stability function 
$\Lambda(\alpha) = -\eps\text{Re}(\alpha)$ [which is strictly decreasing in $\text{Re}(\alpha)$].
The rate of exponential convergence to this state is thus given by $\Lambda_{\max}=\Lambda(\lambda_2) = -\eps\lambda_2$.

\subsection{Fluid networks}
The equation of motion for a system of vertical pipes partially filled with liquid and connected by horizontal pipes is given by Eq.~\eqref{eqn:coupled-syst}, where $\mathbf{x}_i = (z_i, \dot{z}_i)^T$, variable $z_i$ represents the liquid level of the $i$th vertical pipe, and
\begin{gather}
\mathbf{F}(\mathbf{x},\mathbf{y}_1,\ldots,\mathbf{y}_n) = \begin{pmatrix} \dot{z}\\ -2\nu \dot{z} \end{pmatrix} + \sum_{j=1}^n \mathbf{y}_j,\\
\mathbf{H}_{ij}(\mathbf{u},\mathbf{v}) = A_{ij} [\mathbf{H}(\mathbf{v}) - 
\mathbf{H}(\mathbf{u})], \quad \mathbf{H}(\mathbf{x}) = \begin{pmatrix} 0\\ z \end{pmatrix}, \label{eqn:Hij-fluid}
\end{gather}
 and $2\nu$ is the coefficient for the damping due to friction in the pipes~\cite{maas1987transportation}.
The difference in the pressure force in two connected vertical pipes with different liquid height drives the system to an equilibrium state of equal $z_i$.
Since the same pressure force acts on the liquid levels in the two pipes with opposite signs, the interaction matrix $A_{ij}$ is symmetric, implying that we have an undirected network.
Equation~\eqref{eqn:mse} in Appendix~\ref{sec:stab_func} becomes 
\begin{equation}\label{eqn:S27}
\dot{\boldsymbol{\eta}}
= [D_{\mathbf{x}}\mathbf{F} -\alpha D\mathbf{H}]\boldsymbol{\eta},\quad
D_{\mathbf{x}}\mathbf{F} = \begin{pmatrix}
0 & 1\\
0 & -2\nu
\end{pmatrix},\quad
D\mathbf{H} = 
\begin{pmatrix}
0 & 0\\
1 & 0
\end{pmatrix}.
\end{equation}
We note that $\mathbf{H}_{ij}$ defined in Eq.~\eqref{eqn:Hij-fluid} satisfies the assumptions (A-1) and (A-2) in the main text, and we have $D_{\mathbf{v}}\mathbf{H}_{ij}(\mathbf{x}^*,\mathbf{x}^*) = A_{ij} \cdot \mathbf{G}$ with $\mathbf{G} = D\mathbf{H}$.
From Eq.~\eqref{eqn:S27},
the stability function is derived to be $\Lambda(\alpha) = -\nu+\sqrt{\nu^2 - \alpha}$ if $\alpha \le \nu^2$ and $\Lambda(\alpha) = -\nu$ if $\alpha > \nu^2$, which is a strictly decreasing function of $\alpha$ for $\alpha \le \nu^2$.

\section{\texorpdfstring{Laplacian eigenvalue {\normalsize$\boldsymbol{\lambda_2}$} of MCC networks}{Laplacian eigenvalue lambda2 of MCC networks}}\label{si:sec:prop}

Here we first prove some basic properties of the function $M(n,k)$ in subsection A.  
We then derive the relation $\lambda_2 = n - k_{n,m_c}$ for all MCC networks for which $\phi < 1$ (subsection B) and establish the local $\lambda_2$-optimality for 
all MCC networks
(subsection C).
We also establish the global optimality for certain specific cases in subsection C.  Finally, we derive the formula in Eq.~\eqref{eqn:lambda2} of the main text in subsection D.

\subsection{\texorpdfstring{Basic properties of $\boldsymbol{M(n,k)}$}{Basic properties of M(n,k)}}
Recall the definition of the class of MCC networks for given $n$ and $m$ as the set of all $n$-node connected networks with $m$ links for which the largest connected component of the complement is of size $k_{n,m_c}$, where $m_c = n(n-1)/2 - m$ is the number of links in the complement.
Here, as in the main text, $k_{n,m}$ is the smallest integer $k$ for which $m \le M(n,k)$ for given $n$ and $m$, where $M(n,k)$ is the maximum number of links allowed for any $n$-node network whose connected components have size $\le k$.
The following proposition establishes basic properties of the function $M(n,k)$.
\begin{proposition}\label{prop:Mnk}
Given $n$ and $k$ with $1 \le k \le n$, the only  
network with $n$ nodes and $m = M(n,k)$ links whose connected components have size $\le k$
is a network consisting of $\ell$ isolated, fully connected clusters of size $k$ and an additional isolated, fully connected cluster of size $n_r := n-k\ell$, where $\ell=\lfloor n/k\rfloor$. It follows that 
\begin{equation}\label{eq:Mnk}
	M(n,k) = \ell\cdot\frac{k(k-1)}{2}+\frac{n_r(n_r-1)}{2}.
\end{equation}
Furthermore,
\begin{equation}\label{eq:Mnk-property}
	0 = M(n,1)<M(n,2)<\dots<M(n,n) = \frac{n(n-1)}{2}.
\end{equation}
\end{proposition}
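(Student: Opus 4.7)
The proof strategy is to reduce the problem to an extremal combinatorial partition question and then apply a convexity/rearrangement argument. The plan is to first observe that in any network attaining the maximum number of edges subject to the constraint that all connected components have size at most $k$, every component must itself be a complete graph: if some component were not complete, one could add a missing intra-component edge without changing the set of components or violating the size bound, contradicting extremality. Consequently, the problem reduces to choosing a partition $n = s_1 + s_2 + \cdots + s_p$ with each $s_i \le k$ so as to maximize $\sum_{i=1}^{p}\binom{s_i}{2}$, and the resulting optimal network is the disjoint union of complete graphs of those sizes.

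The heart of the argument is a rearrangement lemma showing that at most one of the parts $s_i$ can be strictly less than $k$. I would prove this by contradiction: suppose two parts satisfy $1 \le s_i \le s_j < k$. Transferring a single vertex from the smaller part to the larger one replaces the pair $(s_i, s_j)$ by $(s_i - 1, s_j + 1)$, and changes the edge count by $-(s_i - 1) + s_j = s_j - s_i + 1 \ge 1 > 0$, a strict improvement (and if $s_i = 1$ the smaller part vanishes, but the net change is still $s_j \ge 1$). Hence exactly $\ell := \lfloor n/k \rfloor$ parts equal $k$ and one remaining part equals $n_r = n - k\ell$, giving Eq.~\eqref{eq:Mnk} together with uniqueness up to relabelling of vertices. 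The case $n_r = 0$ is consistent with the stated formula since $n_r(n_r-1)/2 = 0$.

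For the chain in Eq.~\eqref{eq:Mnk-property}, the boundary values $M(n,1) = 0$ and $M(n,n) = n(n-1)/2$ are immediate. For $1 \le k < n$, weak monotonicity $M(n,k) \le M(n,k+1)$ follows because every configuration valid for the $k$-constraint is also valid for the $(k+1)$-constraint. Strictness can be verified by modifying the optimum for the $k$-constraint: if a remainder part of size $n_r \ge 1$ is present, then moving one vertex from it into a complete cluster of size $k$ produces a complete cluster of size $k+1$ with net gain $k - (n_r - 1) \ge 1$; if $n_r = 0$, then $\ell \ge 2$ (since $k < n$), and merging one vertex from one $k$-cluster into another yields net gain $k - (k-1) = 1$.

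The main obstacle is the rearrangement step, where one must handle the degenerate cases $s_i = 1$ (vanishing component) and $n_r = 0$ (no remainder) uniformly; once these are dispatched, the argument reduces to a clean application of the strict convexity of $s \mapsto \binom{s}{2}$, with no nontrivial graph-theoretic machinery required.
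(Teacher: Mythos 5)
Your proof is correct and follows essentially the same route as the paper's: first forcing every component to be complete, then using an exchange/relocation step (moving one vertex from a smaller sub-$k$ component to a larger one, with net gain $s_j - s_i + 1 \ge 1$, which is exactly the paper's $k_2 - (k_1-1) \ge 1$) to pin down the partition as $\ell$ parts of size $k$ plus one remainder part, and finally reusing the same relocation to get strict monotonicity of $M(n,k)$ in $k$. The convexity framing is a cosmetic repackaging of the same computation, and your handling of the degenerate cases ($s_i = 1$, $n_r = 0$) is sound.
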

\begin{proof}
Given a network of $n$ nodes and $m$ links, suppose that its largest component size is smaller than or equal to $k$.
By definition, we have $m\leq M(n,k)$. We shall show that 
if $m=M(n,k)$, 
the number of $k$-node, fully connected clusters in the network must equal $\lfloor n/k\rfloor$, which is the maximum number of isolated $k$-node clusters allowed in such a network, and the remaining nodes must form a single, fully connect cluster.

Suppose that $m=M(n,k)$.
We first show that the individual components of the network must all be fully connected.
If they are not, then we can add links to a component that is not fully connected and increase the total number of links to be larger than $M(n,k)$ while keeping the component sizes fixed (and hence less than or equal to $k$).
Since this contradicts the definition of $M(n,k)$ that it is the maximum number of links allowed given the number of nodes $n$ and the largest component size $k$, all components in the network must be fully connected.

We now show that $\ell=\lfloor n/k\rfloor$, where $\ell$ denotes the number of fully connected components of size $k$.
If $\ell<\lfloor n/k\rfloor$, then the number of remaining nodes 
$n_r=n-k\ell>k$.
Hence, there
must be at least two (fully connected) components of size $k_1$ and $k_2$ among these $n_r$ nodes which satisfy $k_1\leq k_2<k$.
Thus, we can
``relocate'' a node from the component of size $k_1$ to the component of size $k_2$, while rewiring and adding links to ensure that the two components are both fully connected. 
This ``relocation'' increases the total number of links by $k_2-(k_1-1)\geq 1$
while $k$ remains the largest component size of the 
network, which contradicts the definition of $M(n,k)$. 
Therefore, we must have $\ell=\lfloor n/k\rfloor$.

The $n_r$ remaining nodes must form a single, fully connected cluster, since we can otherwise increase the number of links by adding links to that part of the network, again contradicting the definition of $M(n,k)$.
Equation~\eqref{eq:Mnk} then follows immediately from counting the total number of links in the $\ell$ components of size $k$ and one component of size $n_r$, all of which are fully connected.

To prove Eq.~\eqref{eq:Mnk-property}, suppose
that $k < n$ and apply the same ``relocation'' argument to the network with $m=M(n,k)$ links.
This involves a cluster of size $k$ increasing its size to $k+1$ by incorporating a new node from another cluster (noting that $k < n$ guarantees at least two clusters) and adding links to keep the cluster fully connected.
We thus see that the total number of links must strictly increase.
Since $M(n,k+1)$ is larger than or equal to this number of links by definition, we have $M(n,k)<M(n,k+1)$ for $k=1,\ldots,n-1$.
The equalities in Eq.~\eqref{eq:Mnk-property} follow directly from Eq.~\eqref{eq:Mnk}.
\end{proof}

The class of MCC networks for given $n$ and $m$ contains the MCC network generated by the procedure described in Appendix~\ref{si:sec:construction_MCC}, but it generally contains more.
However, for certain combinations of $n$ and $m$, we can show that the class contains only one MCC network. 
Indeed, if $m_c = M(n,k)$ for some $1 \le k \le n$, then Proposition~\ref{prop:Mnk} applied to the complement implies that there is only one MCC network in the class.
Note that in this case we have $k_{n,m_c} = k$.
If $m_c$ increases by one, $m_c \le M(n,k)$ would no longer be satisfied, which forces $k_{n,m_c}$ to jump from $k$ to $k+1$ and thus causes a jump in $\lambda_2 = n - k_{n,m_c}$, as observed, e.g., for the orange curve in Fig.~\ref{fig:construction}.
In the case where $n_r = 0$ is satisfied in addition to $m_c = M(n,k)$, we can show that the unique MCC network is actually a UCM network, as stated in the following proposition.

\begin{proposition}\label{prop:UCM-unique}
Suppose that $n = k\ell$ and $m = k^2 \ell(\ell - 1)/2$ for positive integers $\ell$ and $k$.
Then, the only MCC network with $n$ nodes and $m$ links is the network whose complement consists of $\ell$ isolated, fully connected clusters of size $k$.
\end{proposition}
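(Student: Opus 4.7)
The plan is to reduce the claim to a direct application of Proposition~\ref{prop:Mnk} applied to the complement of the network. First I would compute the number of links in the complement: since $m_c = n(n-1)/2 - m$ with $n = k\ell$ and $m = k^2\ell(\ell-1)/2$, a short algebraic simplification gives
\begin{equation*}
m_c = \frac{k\ell(k\ell - 1)}{2} - \frac{k^2\ell(\ell-1)}{2} = \ell \cdot \frac{k(k-1)}{2}.
\end{equation*}
The key observation is that this value coincides exactly with $M(n,k)$: applying Proposition~\ref{prop:Mnk} with $\lfloor n/k \rfloor = \ell$ and $n_r = n - k\ell = 0$ yields $M(n,k) = \ell \cdot k(k-1)/2$, so $m_c = M(n,k)$.

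Next I would verify that $k_{n,m_c} = k$. The inequality $k_{n,m_c} \le k$ is immediate from the definition, since $m_c = M(n,k)$ satisfies $m_c \le M(n,k)$. For the reverse inequality, the strict monotonicity in Eq.~\eqref{eq:Mnk-property} gives $M(n,k-1) < M(n,k) = m_c$, so any $k' < k$ fails the condition $m_c \le M(n,k')$. Hence $k_{n,m_c} = k$.

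Finally I would apply the uniqueness part of Proposition~\ref{prop:Mnk} to the complement. By definition, an MCC network with the given $n$ and $m$ is one whose complement has exactly $m_c$ links and largest connected component of size $k_{n,m_c} = k$. The complement therefore realizes $M(n,k)$ links while having all components of size at most $k$, so Proposition~\ref{prop:Mnk} forces its structure to be $\lfloor n/k \rfloor = \ell$ isolated, fully connected clusters of size $k$ together with one additional fully connected cluster of size $n_r = 0$ (i.e., no leftover component). This is precisely the structure asserted in the proposition.

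There is no real obstacle here beyond careful bookkeeping: the whole proof hinges on recognizing that the specific divisibility and counting conditions $n = k\ell$ and $m = k^2\ell(\ell-1)/2$ make $m_c$ land exactly at the saturation value $M(n,k)$ with no remainder nodes, at which point Proposition~\ref{prop:Mnk} delivers uniqueness for free.
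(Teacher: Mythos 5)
Your proof is correct and follows essentially the same route as the paper's: compute $m_c = \ell\, k(k-1)/2 = M(n,k)$ and invoke the uniqueness part of Proposition~\ref{prop:Mnk} on the complement. Your explicit verification that $k_{n,m_c} = k$ via the strict monotonicity in Eq.~\eqref{eq:Mnk-property} is a small extra step the paper leaves implicit, but it does not change the argument.
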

\begin{proof}
The assumption $n = k\ell$ is equivalent to $n_r = 0$ and implies $M(n,k) = \ell k(k-1)/2$.
Since $m_c = n(n-1)/2 - m = \ell k(k-1)/2$, we have $m_c = M(n,k)$.
We can thus apply Proposition~\ref{prop:Mnk} to the complement of an MCC network with $n$ nodes and $m$ links to conclude that the complement of any such network must consist of $\ell$ isolated, fully connected clusters of size $k$.
\end{proof}%

\subsection{\texorpdfstring{Proof of {\small$\boldsymbol{\lambda_2 = n - k_{n,m_c}}$}}{Proof of lambda2 = n - k n mc}}

For a given MCC network with $n$ nodes and $m$ links,
assume $m < n(n-1)/2$ (i.e., $\phi < 1$), which guarantees that $k_{n,m_c} \ge 2$.
(We note that if $\phi=1$ the MCC network must be the complete graph, for which $\lambda_2=n$, even though $k_{n,m_c}=1$ in that case.)
Below we show that $\lambda_2 = n - k_{n,m_c}$, or equivalently, $\lambda_n^c = k_{n,m_c}$, using the following propositions.
\medskip
\begin{proposition}\label{prop:union}
The Laplacian spectrum of a network is the union of the Laplacian spectra of its connected components.
\end{proposition}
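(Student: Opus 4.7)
The plan is to reduce the statement to the standard linear-algebra fact that the spectrum of a block-diagonal matrix is the disjoint union (with multiplicity) of the spectra of its diagonal blocks. First, I would relabel the nodes so that the nodes belonging to the same connected component are contiguous in the indexing. This relabeling amounts to conjugating the Laplacian by a permutation matrix, so it preserves the spectrum and lets me work with the reordered matrix without loss of generality.

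Next, I would use the fact that no edges exist between distinct components to conclude that under this ordering the adjacency matrix has the block-diagonal form $A = \mathrm{diag}(A_1,\ldots,A_p)$, where $A_k$ is the adjacency matrix of the $k$-th component $C_k$. Since the degree of a node in $C_k$ involves only links internal to $C_k$, the diagonal degree matrix likewise decomposes as $\mathrm{diag}(D_1,\ldots,D_p)$ with $D_k$ the degree matrix of $C_k$. Then the definition $L = D - A$ from Eq.~\eqref{eqn:laplacian} immediately gives $L = \mathrm{diag}(L_1,\ldots,L_p)$, where $L_k := D_k - A_k$ is precisely the Laplacian of $C_k$.

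Finally, I would invoke the multiplicativity of determinants for block-diagonal matrices to obtain $\det(L - \lambda I) = \prod_{k=1}^{p} \det(L_k - \lambda I)$, so the characteristic polynomial of $L$ factors as the product of those of the $L_k$. This identifies the multiset of eigenvalues of $L$ with the multiset union of the eigenvalues of $L_1,\ldots,L_p$, which is the claim. I do not anticipate a genuine obstacle here; the only small care required is to track algebraic multiplicities correctly in the union and to note explicitly that the initial permutation step is harmless because similarity transformations preserve spectra.
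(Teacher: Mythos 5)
Your proposal is correct and follows essentially the same route as the paper's proof: reorder the node indices so that the Laplacian becomes block-diagonal with one block per connected component, then invoke the fact that the spectrum of a block-diagonal matrix is the union of the spectra of its blocks. You simply spell out the intermediate steps (decomposition of $A$ and $D$, and the factorization of the characteristic polynomial) in more detail than the paper does.
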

\begin{proof}
This well-known result follows immediately from reordering the indices to make the Laplacian matrix block-diagonal (with blocks corresponding to connected components) and using the fact that the eigenvalue spectrum of a block diagonal matrix is the union of the spectra of its diagonal blocks.
\end{proof}

\begin{proposition}\label{prop:max-lambda}
The largest Laplacian eigenvalue of a network is at most the size of the largest connected components.
\end{proposition}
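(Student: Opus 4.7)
The plan is to reduce the problem to the connected case via Proposition~\ref{prop:union} and then prove the sharp bound $\lambda_n \le k$ for a connected graph on $k$ nodes using the Laplacian--complement identity. First, by Proposition~\ref{prop:union}, the Laplacian spectrum of an arbitrary graph is the union of the Laplacian spectra of its connected components, so the largest Laplacian eigenvalue equals the maximum over components of their respective largest eigenvalues. Hence it suffices to show that for any connected graph on $k$ nodes, its largest Laplacian eigenvalue is at most $k$; the component achieving the maximum has $k \le k_{\max}$, the size of the largest component, which would then give the claim.

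For the connected case, my plan is to invoke the identity
\begin{equation}
L + L^c = kI - J,
\end{equation}
where $J$ is the $k\times k$ all-ones matrix and $L^c$ is the Laplacian of the complement (on the same $k$ nodes). This identity is immediate from the definition in Eq.~\eqref{eqn:def-comp}, since $A + A^c = J - I$ and the degree matrices sum to $(k-1)I$. The matrix $kI - J$ has eigenvalue $0$ on the span of $\mathbf{1} := (1,\ldots,1)^T$ and eigenvalue $k$ on $\mathbf{1}^\perp$. Since both $L$ and $L^c$ annihilate $\mathbf{1}$ and are positive semidefinite (the Laplacian of any graph is PSD), for any unit vector $v \in \mathbf{1}^\perp$ we have
\begin{equation}
v^T L v = k - v^T L^c v \le k.
\end{equation}
By the Courant--Fischer variational characterization applied on $\mathbf{1}^\perp$ (a $(k-1)$-dimensional subspace containing all nonzero eigenvectors of $L$), this yields $\lambda_n \le k$.

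Combining the two steps, if the largest component has size $k_{\max}$, then its largest Laplacian eigenvalue is at most $k_{\max}$, and every other component contributes an eigenvalue bounded by its own (smaller) size. Taking the maximum gives the desired bound $\lambda_n \le k_{\max}$. I expect no substantive obstacle: the main conceptual point is recognizing that, although the naive Gershgorin bound $\lambda_n \le 2d_{\max}$ can exceed $k$, the complement identity tightens it to $k$ by exploiting the shared null direction $\mathbf{1}$. The only care needed is in the reduction step, to ensure the Laplacian identity is applied within each component (where $\mathbf{1}$ restricted to that component is the relevant null vector) rather than globally on the full graph, which is precisely what Proposition~\ref{prop:union} allows.
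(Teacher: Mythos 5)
Your proof is correct and follows essentially the same route as the paper: reduce to connected components via Proposition~\ref{prop:union} and bound the largest Laplacian eigenvalue of each component by its size. The only difference is that where the paper simply cites the bound $\lambda_n \le n$ from Ref.~\cite{brouwer2012spectra}, you prove it directly via the identity $L + L^c = kI - J$ together with positive semidefiniteness of $L^c$ on $\mathbf{1}^\perp$, which is a correct, self-contained argument.
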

\begin{proof}
First note that the largest Laplacian eigenvalue of any network is bounded by the size of that network 
(a proof can be found, e.g., in Sec.~3.9 of Ref.~\cite{brouwer2012spectra}).
Applying this to each connected component of a given network, we see that the largest Laplacian eigenvalue of the component is at most the size of that component.
The conclusion now follows directly from Proposition~\ref{prop:union} by considering the union of the Laplacian spectra of the components.
\end{proof}

\begin{proposition}\label{prop:n}
For any network with $n \ge 2$ nodes and more than $(n-1)(n-2)/2$ links, the largest Laplacian eigenvalue is $n$.
\end{proposition}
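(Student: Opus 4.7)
\smallskip

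The plan is to work with the complement of the network. A graph on $n$ vertices with more than $(n-1)(n-2)/2$ links has a complement with
\[
m_c = \frac{n(n-1)}{2} - m < \frac{n(n-1)}{2} - \frac{(n-1)(n-2)}{2} = n-1
\]
links. Since any connected graph on $n$ vertices must contain a spanning tree and therefore at least $n-1$ links, the complement cannot be connected, so it has at least two connected components.

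Next I would invoke Proposition~\ref{prop:union} (the Laplacian spectrum of a graph is the union of the Laplacian spectra of its connected components). Each connected component contributes exactly one zero eigenvalue (associated with the constant vector on that component), so having at least two components in the complement implies that $0$ appears with multiplicity at least $2$ in the Laplacian spectrum of the complement; in particular, $\lambda_2^c = 0$.

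Finally, I would apply the complement eigenvalue identity from Eq.~\eqref{eqn:comp-eig} with $i=n$, which gives
\[
\lambda_n = n - \lambda_2^c = n - 0 = n,
\]
establishing the claim. (Alternatively, the upper bound $\lambda_n \le n$ from Proposition~\ref{prop:max-lambda} applied to the whole connected network together with the lower bound $\lambda_n \ge n$ obtained from the same identity would yield the same conclusion.) The only nontrivial step is the link-count inequality $m_c < n-1$; everything else is direct assembly of results already proved in the excerpt, so I do not anticipate a genuine obstacle.
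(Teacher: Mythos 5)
Your proof is correct and follows essentially the same route as the paper's: both count links to conclude $m_c < n-1$, deduce that the complement is disconnected so that $\lambda_2^c = 0$, and then apply the complement identity $\lambda_n = n - \lambda_2^c$. The only difference is that you make the intermediate appeal to Proposition~\ref{prop:union} explicit, which the paper leaves implicit.
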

\begin{proof}
Suppose that a given network has $n$ nodes and $m$ links, where $m > (n-1)(n-2)/2$.
From the identity $(n-1)(n-2)/2 = n(n-1)/2 - (n-1)$, the number of links in its complement  satisfies $m_c < n-1$.
Since the minimum number of links required for a connected $n$-node network is $n-1$, the complement must be disconnected.
This implies that the second smallest Laplacian eigenvalue $\lambda^c_2$ of the complement is zero.
Therefore, the largest Laplacian eigenvalue of the network is $\lambda_n = n - \lambda^c_2 = n$.
\end{proof}

To prove $\lambda_n^c = k_{n,m_c}$,
we separately show $\lambda_n^c \le k_{n,m_c}$ and $\lambda_n^c \ge k_{n,m_c}$.
First, by the definition of MCC networks, the size of each connected component in its complement is at most $k_{n,m_c}$.
By Proposition~\ref{prop:max-lambda}, we immediately obtain $\lambda_n^c \le k_{n,m_c}$.
Next, to show $\lambda_n^c \ge k_{n,m_c}$, suppose hypothetically that all connected components in the complement of the network have at most $(k_{n,m_c} - 1)(k_{n,m_c} - 2)/2$ links.
Then, the links in each component of size $k_{n,m_c}$ can be rewired to form a connected component of size $k_{n,m_c} - 1$ with an additional node that is isolated.
Doing this, the maximum component size becomes $k_{n,m_c} - 1$, and the total number of links forming these components is at most $M(n, k_{n,m_c}-1)$.
This implies that the original network must have $m_c \le M(n, k_{n,m_c}-1)$, which contradicts with the definition of $k_{n,m_c}$.
Therefore, the initial hypothesis is false, and there must be at least one connected component with more than $(k_{n,m_c} - 1)(k_{n,m_c} - 2)/2$ links.
Since $k_{n,m_c} \ge 2$, it follows from Proposition~\ref{prop:n} that the largest Laplacian eigenvalue of this component is $k_{n,m_c}$. 
By Proposition~\ref{prop:union}, we conclude that $\lambda_n^c \ge k_{n,m_c}$.
Altogether, we have $\lambda_n^c = k_{n,m_c}$, and hence $\lambda_2 = n - k_{n,m_c}$ for all MCC networks 
with
$\phi < 1$.

\subsection{Local and global optimality}

We now prove that the class of MCC networks optimizes $\lambda_2$.
We divide the proof into cases based on the link density, establishing the global optimality for certain cases (of lowest- and highest-density as well as of density close to that of 
a UCM network) 
and the local optimality for all other cases. For convenience we define $\Phi(n,k) := \frac{2M(n,k)}{n(n-1)}$, the link density of 
any
network having $n$ nodes and $M(n,k)$ links.

When a given network has $m = n-1$ links, we have $\phi = 1-\Phi(n,n-1) = 2/n$, the lowest possible link density for a connected network. Any such network is necessarily a tree. We have $k_{n,m_c} = n-1$, which implies that in this case there is only one MCC network, namely, a star network. This network has $\lambda_2 = 1$ and is the unique maximizer of $\lambda_2$ among all tree networks~\cite{maas1987transportation}. This establishes the global optimality of the MCC network in this case.

For a network with the highest possible link density, $\phi = 1-\Phi(n,1) = 1$, we have $m_c = M(n,1) = 0$ and $k_{n,m_c} = 1$.
The only such network is the fully connected one, which has $\lambda_2 = n$. Thus, the MCC network is trivially the global optimum in this case.

For the next highest range of link density, $1-\Phi(n,2) \le \phi < 1$, we have $0 < m_c \le M(n,2)$ and $k_{n,m_c} = 2$.
In this case, the complement of the network has at least one link, which can be regarded as a subnetwork whose largest Laplacian eigenvalue is $2$.
Since the largest Laplacian eigenvalue of a network 
is larger
than or equal to the largest Laplacian eigenvalue of any of its subnetworks (i.e., induced subgraphs)~\cite{brouwer2012spectra}, we have $\lambda_n^c \ge 2$. This implies $\lambda_2 \le n-2 = n - k_{n,m_c}$, which is the value of $\lambda_2$ for any MCC network.
This establishes the global optimality in this case as well.

For $\phi$ in the next highest range, $1-\Phi(n,3) \le \phi < 1-\Phi(n,2)$, we have $M(n,2) < m_c \le M(n,3)$ and $k_{n,m_c} = 3$.
In this case, the complement of the network has three nodes connected by two links as a subnetwork; otherwise all links would be isolated from one another, which would imply $m_c \le M(n,2)$.
Since the largest Laplacian eigenvalue of this subnetwork is $3$, we have $\lambda_n^c \ge 3$, and hence $\lambda_2 \le n-3$.
This shows that the eigenvalue $\lambda_2 = n - k_{n,m_c} = n-3$ for the MCC networks is globally optimal. 

In the neighborhood of each 
$\phi=1-\Phi(n,k)$,
which corresponds
to a UCM network, we can also prove global optimality, which is summarized in the following proposition.
\medskip
\begin{proposition}\label{prop:UCM-global-opt}
Suppose that $n=k\ell$ for some positive integers $k$ and $\ell$, and that the link density $\phi$ satisfies
\begin{equation}\label{eq:Phi-global-opt}
	1-\Phi(n,k+1) \le \phi < 1-\Phi(n,k)+\frac{1}{n-1}.
\end{equation}
Then, the MCC networks achieve the global maximum value of $\lambda_2$.
\end{proposition}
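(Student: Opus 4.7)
The plan is to show that any MCC network in the prescribed density range attains the global upper bound $\lambda_2 \le \lfloor \phi(n-1)\rfloor$ from Eq.~\eqref{bound} of the main text; combined with the formula $\lambda_2 = n - k_{n,m_c}$ established in Sec.~\ref{si:sec:prop}B, this will yield global optimality. The central algebraic observation is that since $n = k\ell$ forces $M(n,k) = \ell k(k-1)/2$ via Eq.~\eqref{eq:Mnk}, we have $(n-1)\Phi(n,k) = 2M(n,k)/n = k-1$. I would then split the range $[1-\Phi(n,k+1),\, 1-\Phi(n,k)+\tfrac{1}{n-1})$ at $\phi = 1-\Phi(n,k)$ and treat each subinterval separately.

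In the upper subinterval $1-\Phi(n,k) \le \phi < 1-\Phi(n,k) + \tfrac{1}{n-1}$, multiplying through by $n-1$ yields $n-k \le \phi(n-1) < n-k+1$, so the global upper bound evaluates to $\lfloor \phi(n-1)\rfloor = n-k$. The condition $m_c \le M(n,k)$ in this subinterval gives $k_{n,m_c} \le k$ directly from the definition of $k_{n,m_c}$, hence $\lambda_2 = n-k_{n,m_c} \ge n-k$ for any MCC network; combining with the upper bound then forces $\lambda_2 = n-k$, and the MCC attains it. In the lower subinterval $1-\Phi(n,k+1) \le \phi < 1-\Phi(n,k)$, the inequalities $M(n,k) < m_c \le M(n,k+1)$ pin down $k_{n,m_c} = k+1$ exactly, so any MCC has $\lambda_2 = n-k-1$. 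The upper endpoint gives $\phi(n-1) < n-k$ and hence $\lfloor \phi(n-1)\rfloor \le n-k-1$; matching the MCC value requires the reverse inequality $\phi(n-1) \ge n-k-1$, which by the lower endpoint reduces to showing $M(n,k+1) \le nk/2$.

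This last inequality is the single nontrivial step, and I expect it to be the main obstacle. I would handle it by writing $n = (k+1)\ell' + n_r'$ with $0 \le n_r' \le k$ and applying Eq.~\eqref{eq:Mnk} to expand $M(n,k+1) = (n-n_r')k/2 + n_r'(n_r'-1)/2$, which yields
\begin{equation}
\frac{nk}{2} - M(n,k+1) = \frac{n_r'\,(k-n_r'+1)}{2} \ge 0,
\end{equation}
with non-negativity following immediately from $0 \le n_r' \le k$. Everything else is routine bookkeeping among $\phi$, $m$, and $m_c$ together with floor-function manipulations, so once this identity is in hand the argument closes in both subintervals.
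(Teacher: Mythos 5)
Your proof is correct and follows essentially the same route as the paper's: both compare the MCC value $\lambda_2 = n - k_{n,m_c}$ against the global upper bound $\lfloor \phi(n-1)\rfloor = \lfloor 2m/n\rfloor$ from Eq.~\eqref{bound}, splitting the density range at $m_c = M(n,k)$ (the paper separates out the boundary case $m_c = M(n,k)$ and invokes UCM uniqueness there, but your unified treatment via $k_{n,m_c}\le k$ covers it equally well). One remark: the inequality $M(n,k+1)\le nk/2$ that you single out as the main obstacle is actually dispensable --- in the lower subinterval it suffices that every network obeys $\lambda_2 \le \lfloor\phi(n-1)\rfloor \le n-k-1$ while the MCC network attains $n-k-1$, so the reverse inequality $\lfloor\phi(n-1)\rfloor \ge n-k-1$ is an automatic consequence of the MCC network itself satisfying the bound rather than something that needs separate proof (the paper's Case~3 proceeds exactly this way).
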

\begin{proof}
Consider an MCC network having $n = k\ell$ nodes and $m$ links, with $\phi$ satisfying the inequalities in Eq.~\eqref{eq:Phi-global-opt}.
First note that the inequalities~\eqref{eq:Phi-global-opt} can be expressed in terms of the number of links $m_c$ in the complement as
\begin{equation}\label{eq:Mmc}
	M(n,k)-\frac{n}{2} < m_c \leq M(n,k+1),
\end{equation}
Since $n=k\ell$, we have $M(n,k)=M(k\ell,k)=\frac{1}{2}\ell k(k-1)$.
Also recall that $\lfloor 2m/n\rfloor$ is an upper bound for $\lambda_2$ for all networks having $n$ nodes and $m$ links 
[see Eq.~\eqref{bound} of the main text].
We divide the proof into three distinct cases.

\smallskip\noindent
{\it Case 1}: $m_c=M(n,k)$. In this case, the upper bound $\lfloor 2m/n \rfloor=\lfloor k(\ell-1)\rfloor=k(\ell-1)$.
By Proposition~\ref{prop:Mnk}, the complement must consist of $\ell$ isolated, fully connected clusters of $k$ nodes each.
Therefore, 
the network is UCM and hence is globally optimal by 
the result proved in 
Sec.~\ref{sec:opt-undirected}
[and we have $\lambda_2=k(\ell-1)$].

\smallskip\noindent
{\it Case 2}: $M(n,k)-\frac{n}{2}<m_c<M(n,k)$. In this case, $\frac{1}{2}n(n-1)-M(n,k)<m<\frac{1}{2}n(n-1)-M(n,k)+\frac{n}{2}$, which implies that $k(\ell-1)<\frac{2m}{n}<k(\ell-1)+1$. Thus, $\lfloor 2m/n\rfloor=k(\ell-1)$.
On the other hand, the definition of 
$k_{n,m_c}$
implies that $k_{n,m_c}\leq k$, and 
hence $k(\ell-1)\geq\lambda_2=k\ell-k_{n,m_c}\geq k(\ell-1)$. 
This shows that $\lambda_2=k(\ell-1)$, which coincides with the upper bound and therefore is the global maximum value.

\smallskip\noindent
{\it Case 3}: $M(n,k)<m_c\leq M(n,k+1)$. In this case $m=\frac{1}{2}n(n-1)-m_c<\frac{1}{2}n(n-1)-M(n,k)=\frac{1}{2}k^2\ell(\ell-1)$, therefore the upper bound $\lfloor 2m/n \rfloor\leq k(\ell-1)-1$. On the other hand, by the definition of 
$k_{n,m_c}$,
we have
$k_{n,m_c}=k+1$, and thus $\lambda_2=n-(k+1)=k(\ell-1)-1$, which equals the upper bound and hence is the global maximum value.
\end{proof}

We now prove the local optimality of the MCC networks for $1-\Phi(n,n-1) < \phi < 1-\Phi(n,3)$, which covers all the other cases.
For such $\phi$ we have $m > n-1$ and $k_{n,m_c} \ge 4$.
To avoid heavy notation, we define $k_* := k_{n,m_c}$ in the rest of this section.
Our proof consists of two steps. The first step is to prove that at least one of the following holds for the complement of any such MCC network:
\begin{enumerate}
\item At least one connected component has at least $\frac{(k_* - 1)(k_* - 2)}{2} + 2$ links (and thus is necessarily of size $k_*$). 
\item At least two connected components have at least $\frac{(k_* - 1)(k_* - 2)}{2} + 1$ links each (and thus are both of size $k_*$). 
\item One component is of size $k_*$ and has exactly $\frac{(k_* - 1)(k_* - 2)}{2} + 1$ links and all the other components are fully connected clusters of size $k_* - 1$.
\end{enumerate}
The second step is to show that, under each of these conditions, rewiring one link cannot increase $\lambda_2$.

We prove the first step by contradiction. Suppose that a given MCC network satisfies none of the conditions~1--3.
As shown in the proof of $\lambda^c_n \ge k_*$ in subsection~B, there exists a connected component in its complement which has at least $\frac{(k_* - 1)(k_* - 2)}{2} + 1$ links.
Since condition~2 does not hold, this component is unique, and each of the other components has at most $\frac{(k_* - 1)(k_* - 2)}{2}$ links.
Since condition~1 does not hold, this component actually has exactly $\frac{(k_* - 1)(k_* - 2)}{2} + 1$ links.
Also, it must be of size $k_*$, 
because a
smaller component 
cannot
have that many links.
Since all the other components have at most $\frac{(k_* - 1)(k_* - 2)}{2}$ links each, their links can be rewired to ensure that their sizes are all $\le k_*-1$.
We thus
see that the total number of links in 
these components 
[consisting of a total of ($n - k_*$) nodes] is at most $M(n - k_*, k_* - 1)$.
This gives
\begin{equation}\label{eqn:si-inequality}
m_c \le \frac{(k_* - 1)(k_* - 2)}{2} + 1 + M(n - k_*, k_* - 1).
\end{equation}

Our strategy now is to rewrite $M(n - k_*, k_* - 1)$ in Eq~\eqref{eqn:si-inequality} in terms of $M(n, k_* - 1)$ and use the relation $m_c > M(n, k_* - 1)$ to show that 
there is
a contradiction.
Let us use the notation $\ell(n,k) := \lfloor n/k \rfloor$ and $n_r(n,k) := n - k \cdot \ell(n,k)$ to make the $n$- and $k$-dependence explicit for $\ell$ and $n_r$ introduced 
in Appendix~\ref{si:sec:construction_MCC}.
We have five different cases:
\begin{itemize}
\item Case 1: $n \ge 2(k_*-1)$,\, $n_r(n, k_* - 1) \ge 1$.
\item Case 2: $n \ge 2(k_*-1)$,\, $n_r(n, k_* - 1) = 0,\, \ell(n, k_* - 1) \ge 3$.
\item Case 3: $n \ge 2(k_*-1)$,\, $n_r(n, k_* - 1) = 0,\, \ell(n, k_* - 1) = 2$.
\item Case 4: $n < 2(k_*-1)$,\, $n_r(n, k_* - 1) \ge 1$.
\item Case 5: $n < 2(k_*-1)$,\, $n_r(n, k_* - 1) = 0$.
\end{itemize}%
\noindent
Together they cover all possible cases, since we have $\ell(n, k_* - 1) \ge 2$ whenever $n \ge 2(k_*-1)$.
Below we prove each of these cases.

\medskip\noindent
\textbf{Case 1:} $n \ge 2(k_*-1)$,\, $n_r(n, k_* - 1) \ge 1$.
Since $\ell(n, k_* - 1) - 1 \ge 1$ and $n_r(n, k_* - 1) - 1 \ge 0$ in this case, we can write
\begin{equation}
\begin{split}
n - k_* &= [(k_* - 1)\cdot\ell(n, k_* - 1) + n_r(n, k_* - 1)] - k_*\\ 
&= (k_* - 1)[\ell(n, k_* - 1) - 1] + n_r(n, k_* - 1) - 1,
\end{split}
\end{equation}
giving
\begin{equation}
\begin{split}
\ell(n - k_*, k_* - 1) &= \ell(n, k_* - 1) - 1,\\
n_r(n - k_*, k_* - 1) &= n_r(n, k_* - 1) - 1,
\end{split}
\end{equation}
and hence $\ell(n - k_*, k_* - 1) \ge 1$.
Thus, Eq.~\eqref{eqn:si-inequality} becomes
\begin{equation}\label{eqn:S35}
\begin{split}
m_c &\le \frac{(k_* - 1)(k_* - 2)}{2} + 1 \\
 &\quad+ [\ell(n, k_* - 1) - 1]\cdot \frac{(k_* - 1)(k_* - 2)}{2}\\
 &\quad+ \frac{[n_r(n, k_* - 1) - 1][n_r(n, k_* - 1) - 2]}{2}\\
 &= M(n, k_* - 1) - n_r(n, k_* - 1) + 2\\
 &< m_c - n_r(n, k_* - 1) + 2,
\end{split}
\end{equation}
which implies $n_r(n, k_* - 1) < 2$, and hence $n_r(n, k_* - 1) = 1$ [since we assumed $n_r(n, k_* - 1) \ge 1$ at the outset].
In this case, 
Eq.~\eqref{eqn:S35} reads
\begin{equation}
m_c \le \ell(n, k_* - 1)\cdot\frac{(k_* - 1)(k_* - 2)}{2} + 1 < m_c + 1,
\end{equation}
and hence 
\begin{equation}
m_c = \ell(n, k_* - 1)\cdot\frac{(k_* - 1)(k_* - 2)}{2} + 1.
\end{equation}
This implies that condition~3 holds, contradicting with our initial assumption that it does not.\qed

\medskip\noindent
\textbf{Case 2:} $n \ge 2(k_*-1)$,\, $n_r(n, k_* - 1) = 0,\, \ell(n, k_* - 1) \ge 3$.
Since $\ell(n, k_* - 1) - 2 \ge 1$ in this case, and since we have $k_* - 2 \ge 0$ from the assumption $k_* \ge 4$, we can write
\begin{equation}
\begin{split}
n - k_* &=(k_* - 1)\cdot\ell(n, k_* - 1) - k_*\\ 
&= (k_* - 1)[\ell(n, k_* - 1) - 2] + k_* - 2,
\end{split}
\end{equation}
giving
\begin{equation}
\begin{split}
\ell(n - k_*, k_* - 1) &= \ell(n, k_* - 1) - 2,\\
n_r(n - k_*, k_* - 1) &= k_* - 2.
\end{split}
\end{equation}
Thus, Eq.~\eqref{eqn:si-inequality} becomes
\begin{equation}
\begin{split}
m_c &\le \frac{(k_* - 1)(k_* - 2)}{2} + 1 \\
 &\quad+ [\ell(n, k_* - 1) - 2]\cdot \frac{(k_* - 1)(k_* - 2)}{2}\\
 &\quad+ \frac{(k_* - 2)(k_* - 3)}{2}\\
 &= M(n, k_* - 1) - k_* + 3\\
 &< m_c - k_* + 3,
\end{split}
\end{equation}
which implies $k_* < 3$, violating the assumption $k_* \ge 4$.\qed

\medskip\noindent
\textbf{Case 3:} $n \ge 2(k_*-1)$,\, $n_r(n, k_* - 1) = 0,\, \ell(n, k_* - 1) = 2$.
In this case, we have
\begin{equation}
n = (k_* - 1)\cdot\ell(n, k_* - 1) + n_r(n, k_* - 1) = 2k_* - 2,
\end{equation}
which makes $M(n - k_*, k_* - 1)$ undefined, since the size of the largest connected component in the complement, $k_* - 1$, is larger than the number of nodes, $n - k_* = k_* - 2$.
However, noting that we have $(k_* - 2)$ nodes outside the component of size $k_*$ with $\frac{(k_* - 1)(k_* - 2)}{2} + 1$ links, we see that
\begin{equation}
\begin{split}
m_c &\le \frac{(k_* - 1)(k_* - 2)}{2} + 1 + \frac{(k_* - 2)(k_* - 3)}{2}\\
 &= M(n, k_* - 1) - k_* + 3\\
 &< m_c - k_* + 3,
\end{split}
\end{equation}
which again implies $k_* < 3$, violating the assumption $k_* \ge 4$.\qed

\medskip\noindent
\textbf{Case 4:} $n < 2(k_*-1)$,\, $n_r(n, k_* - 1) \ge 1$.
Since $n < 2(k_* - 1)$ in this case, we have 
\begin{equation}
\ell(n, k_* - 1) = \left\lfloor \frac{n}{k_* - 1} \right\rfloor 
\le \frac{n}{k_* - 1} < 2,
\end{equation}
implying $\ell(n, k_* - 1) = 1$, and hence
\begin{equation}\label{eqn:S44}
\begin{split}
n &= (k_* - 1)\cdot\ell(n, k_* - 1) + n_r(n, k_* - 1)\\
 &= k_* + n_r(n, k_* - 1) - 1.
\end{split}
\end{equation}
While the component of size $k_*$ has $\frac{(k_* - 1)(k_* - 2)}{2} + 1$ links, the remaining $[n_r(n, k_* - 1) - 1]$ nodes can only have at most 
$\frac{[n_r(n, k_* - 1) - 1][n_r(n, k_* - 1) - 2]}{2}$ 
links.
Thus,
\begin{equation}\label{eqn:S45}
\begin{split}
m_c &\le \frac{(k_* - 1)(k_* - 2)}{2} + 1 + 
\frac{[n_r(n, k_* - 1) - 1][n_r(n, k_* - 1) - 2]}{2}\\
 &= M(n, k_* - 1) - n_r(n, k_* - 1) + 2\\
 &< m_c - n_r(n, k_* - 1) + 2.
\end{split}
\end{equation}
This implies $n_r(n, k_* - 1) < 2$, and hence 
$n_r(n, k_* - 1) = 1$. Substituting this into Eqs.~\eqref{eqn:S44} and \eqref{eqn:S45}
leads to $m = n - 2$, violating the assumption $m > n-1$.\qed

\medskip\noindent
\textbf{Case 5:} $n < 2(k_*-1)$,\, $n_r(n, k_* - 1) = 0$.
As in Case 4, we have $\ell(n, k_* - 1) = 1$, and in this case we have $n = k_* - 1 < k_*$.
This is impossible because the connected component size cannot be larger than the total number of nodes.\qed

\medskip
Having dealt with all five cases, we have proved that at least one of the conditions 1--3 holds for any MCC network for which $1-\Phi(n,n-1) < \phi < 1-\Phi(n,3)$.
We now show that after rewiring one arbitrary link in any such network, we have $\lambda_2 \le n - k_*$, or equivalently, $\lambda^c_n \ge k_*$.

First note that regardless of which of the conditions 1--3 are satisfied, there is a connected component of size $k_*$ with at least $\frac{(k_* - 1)(k_* - 2)}{2} + 1$ links.
By Proposition~\ref{prop:n}, the largest Laplacian eigenvalue of this component is $k_*$, which is not affected by rewiring links within this or any other component.
Also, this eigenvalue can only increase by the rewiring of a link outside of this component (even if the link is then connected to this component or moves entirely into this component), since the subgraph formed by the links originally in this component remains unchanged and the largest Laplacian eigenvalue of the network is lower-bounded by that of a subgraph~\cite{brouwer2012spectra}.  Then, by Proposition~\ref{prop:union}, we have
$\lambda^c_n \ge k_*$.
It remains to show that $\lambda^c_n \ge k_*$ holds after rewiring a link from  
the inside to 
outside of this component.

If condition~1 holds, the component would still have at least $\frac{(k_* - 1)(k_* - 2)}{2} + 1$ links, so the largest Laplacian eigenvalue remains $k_*$, and we therefore have $\lambda^c_n \ge k_*$.
If condition~2 holds, at least one of the two components that have $\frac{(k_* - 1)(k_* - 2)}{2} + 1$ links would still have that many links after the rewiring.
We therefore have $\lambda^c_n \ge k_*$ also in this case.
If condition~3 holds, since all the other connected components are fully connected, we are guaranteed that the rewired link connects to one of these components, forming a subgraph having the largest Laplacian eigenvalue $k_*$, which implies $\lambda^c_n \ge k_*$ (because the largest Laplacian eigenvalue is lower-bounded by that of a subgraph~\cite{brouwer2012spectra}).
Thus, we have proved $\lambda^c_n \ge k_*$ under all three conditions, and therefore have established the local optimality of the MCC networks among all networks with the same $n$ and $m$.

\subsection{\texorpdfstring{Dependence of {\small$\boldsymbol{\lambda_2}$} on {\small$\boldsymbol{\phi}$}}{Dependence of lambda2 on phi}}

We first note that for an arbitrary positive integer $k$, we can use the definition of the function $\Phi(n,k)$ in subsection C and the relation $n_r = n - k\ell$ to write 
\begin{equation}
\Phi(n,k) = 1 - \frac{f\bigl(\tfrac{k}{n}\bigr)}{1-\tfrac{1}{n}},
\end{equation} 
where $f$ is a piecewise quadratic function defined on the interval $(0,1]$ by 
\begin{equation}
f(x) := \ell x[2 - (\ell+1)x],
\end{equation}
with positive integer $\ell$ determined uniquely by 
$\frac{1}{\ell+1} < x \le \frac{1}{\ell}$ 
for each $x$ in the interval $(0,1]$.
The function $f$ can be shown to be strictly decreasing in its domain, and hence is invertible.
From the assumption $\phi < 1$, which is equivalent to $k_{n,m_c} \ge 2$, we can characterize $k_{n,m_c}$ as the unique integer satisfying $1-\Phi(n,k_{n,m_c}) \le \phi < 1-\Phi(n,k_{n,m_c}-1)$.
This relation can be expressed using $f$ as
\begin{equation}
f\bigl(\tfrac{k_{n,m_c}}{n}\bigr) \le \bigl(1-\tfrac{1}{n}\bigr)\phi < f\bigl(\tfrac{k_{n,m_c}-1}{n}\bigr).
\end{equation}
Since $f$ is strictly decreasing and invertible, we can apply $f^{-1}$ (and reverse the direction of the inequalities) to obtain
\begin{equation}
\frac{k_{n,m_c}-1}{n} < x_n \le \frac{k_{n,m_c}}{n},
\end{equation}
where $x_n$ is defined uniquely by $f(x_n) = \bigl( 1-\frac{1}{n} \bigr) \phi$.
From this and $\lambda_2 = n - k_{n,m_c}$ (established in subsection A above), it follows that $\lambda_2 \le C_{\ell,n}(\phi) \cdot n < \lambda_2 + 1$, which is equivalent to Eq.~\eqref{eqn:lambda2} of the main text, with the definition $C_{\ell,n}(\phi) := 1-x_n$.
Since $f$ is a quadratic function on each interval $\bigl(\frac{1}{\ell+1}, \frac{1}{\ell}\bigr]$, we can explicitly invert $f(x_n) = \bigl( 1-\frac{1}{n} \bigr) \phi$, which gives the expression for $C_{\ell,n}(\phi)$ in 
Eq.~\eqref{eqn:C_ell_n}.

%:SI References

\end{bibunit}

%\clearpage
\vspace{10mm}

\section*{Supplemental Figures}

\renewcommand{\thefigure}{S\arabic{figure}}
\setcounter{figure}{0}

%: Fig. S1
\begin{figure*}[ht]
\begin{center}
\includegraphics[width=2.9in]{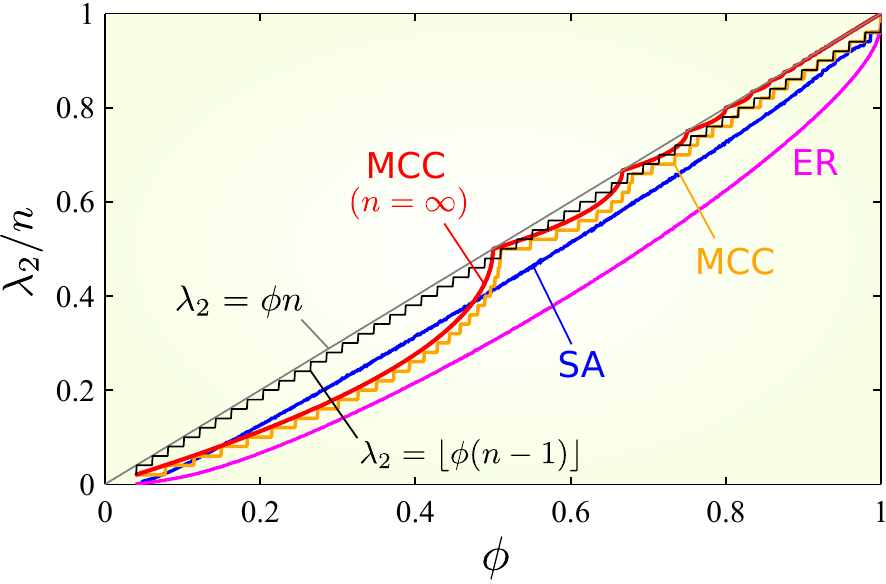}
\rule{22pt}{0pt}
\end{center}\vspace{-20pt}
\caption{\label{fig:comparisonA}
The Laplacian eigenvalue $\lambda_2$ normalized by the network size $n$ as a function of the link density $\phi$ for the same networks used in Fig.~\ref{fig:comparison}.
Color coding for the curves is the same as in Fig.~\ref{fig:comparison}.}
\end{figure*}

\vspace{10mm}

%: Fig. S2
\begin{figure*}[ht]
\begin{center}
\includegraphics[width=0.71\textwidth]{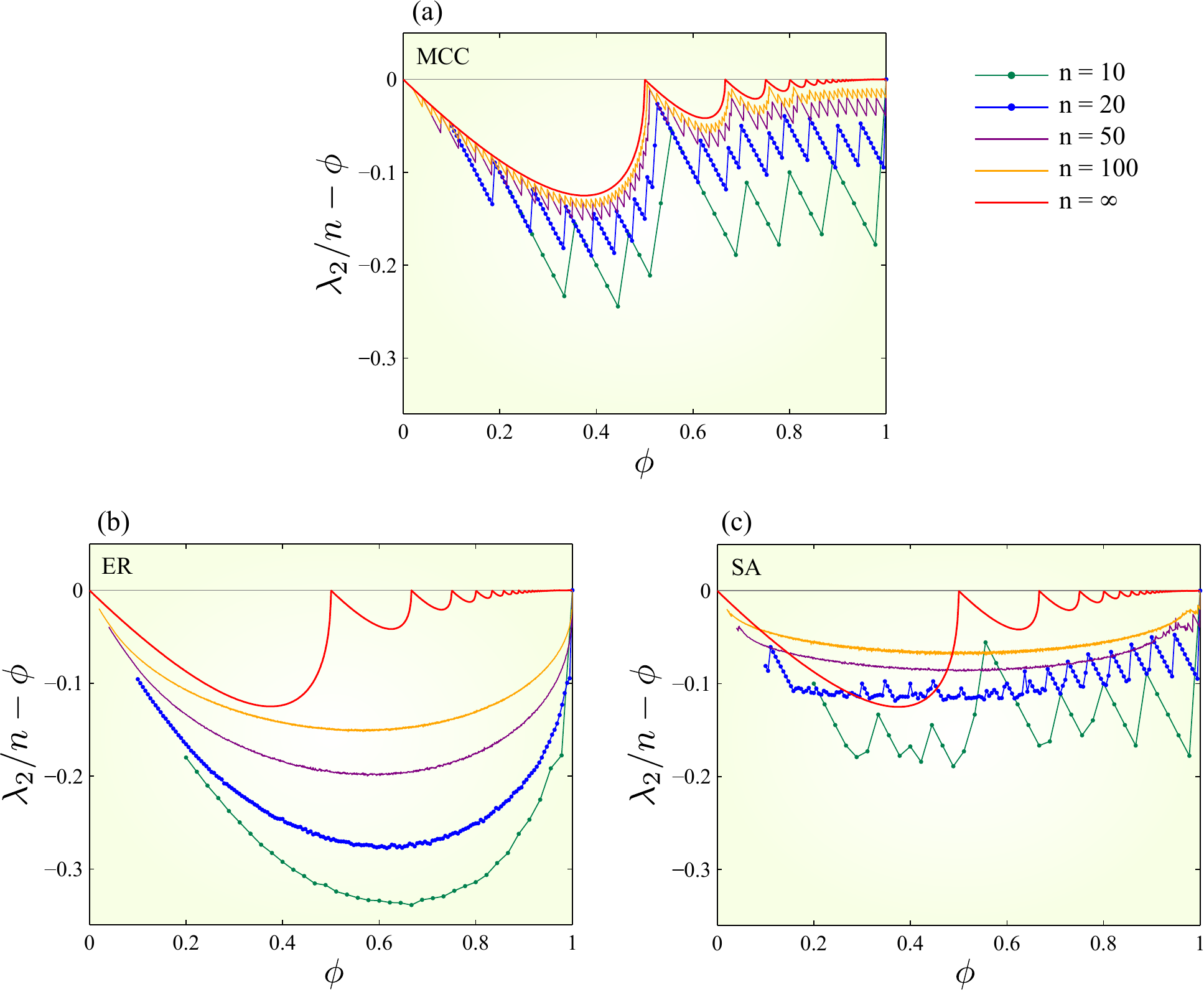}
\rule{22pt}{0pt}
\end{center}\vspace{-16pt}
\caption{\label{fig:convergence}
Convergence of $\lambda_2/n$ toward the asymptotic upper bound $\phi$ as network size $n$ increases.
For each value of $n$, we plot $\lambda_2/n - \phi$ as a function of the link density $\phi$ for the MCC networks (a), ER random networks (b), and networks obtained by simulated annealing~(c).
For reference, a red curve in each panel shows the values for the MCC networks in the limit $n \to \infty$, given by Eq.~\eqref{eqn:lambda2inf} of the main text.
For the ER random networks, 
each data point is an average
over $1{,}000$ realizations.
For SA with $n=10$ and $20$, we took the networks that maximize $\lambda_2$ over $1{,}000$ independent SA runs starting from randomly chosen initial networks, while we used $20$ runs for $n=50$ and a single run for $n=100$.
The raggedness of the $n=10$ curve for the networks obtained by SA is likely to reflect the true nature of the maximum $\lambda_2$ as a function of $\phi$, since a majority of these networks do achieve the upper bound $\lambda_2 = \lfloor \phi(n-1) \rfloor$, and hence have the true maximum $\lambda_2$.}
\end{figure*}

%: Fig. S3
\begin{figure*}[ht]
\begin{center}
\includegraphics[width=2.9in]{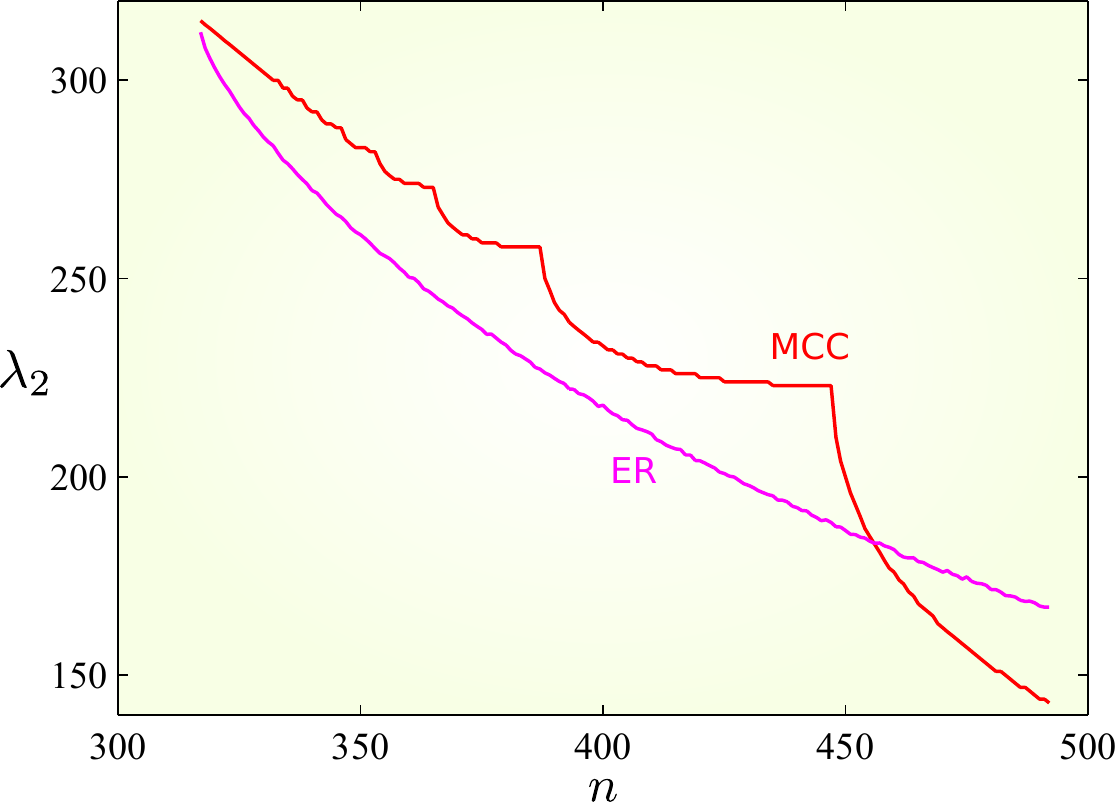}
\rule{22pt}{0pt}
\end{center}\vspace{-12pt}
\caption{\label{fig:n-dep-fixed-m}
Dependence of $\lambda_2$ on network size $n$ for fixed $m=5\times 10^4$.
The red curve indicates $\lambda_2$ for the MCC networks, given in Eq.~\eqref{eqn:lambda2}, considered as a function of $n$.
Note that, in addition to the explicit dependence on $n$, the function $C_{\ell,n}(\phi)$ in Eq.~\eqref{eqn:C_ell_n} depends on $n$ also through $\phi = 2m/[n(n-1)]$.
The non-smoothness of the curve between the singularity points is not a numerical artifact but instead reflects the fact that $\lambda_2$ always takes an integer value according to Eq.~\eqref{eqn:lambda2}.
Each point on the magenta curve is the average of $\lambda_2$ over $100$ realizations of the ER random networks.
}
\end{figure*}


\begin{thebibliography}{10}

\bibitem{Newman2010}
M.~E.~J. Newman,
{\it Networks: An Introduction} 
(Oxford University Press, Oxford, 2010).

\bibitem{Chen:2015}
G. Chen, X. Wang, and X. Li,
{\it Fundamentals of Complex Networks: Models, Structures and Dynamics}
(John Wiley \& Sons, Singapore, 2015).

\bibitem{Strogatz:2001il}
S.~H. Strogatz, 
{\it Exploring complex networks}, 
Nature {\bf 410}, 268 (2001).

\bibitem{barrat2008dynamical}
A. Barrat, M. Barth{\'e}lemy, and A. Vespignani,
{\it Dynamical Processes on Complex Networks} 
(Cambridge University Press, Cambridge, 2008).

\bibitem{Porter:2016}
M.~A.~Porter and J.~P.~Gleeson,
{\it Dynamical Systems on Networks}
(Springer International Publishing, Switzerland, 2016).

\bibitem{Nishikawa:2003xr}
T. Nishikawa,  A.~E. Motter, Y.~C. Lai, and F.~C. Hoppensteadt, 
{\it Heterogeneity in oscillator networks{\rm :} Are smaller worlds easier to synchronize?}
Phys. Rev. Lett. {\bf 91}, 014101 (2003).

\bibitem{Restrepo2004}
J.~G.~Restrepo, E.~Ott, and B.~R.~Hunt,
{\it Spatial patterns of desynchronization bursts in networks},
Phys. Rev. E {\bf 69}, 066215 (2004).

\bibitem{PhysRevLett.93.254101}
H. Kori and A.~S. Mikhailov, 
{\it Entrainment of randomly coupled oscillator networks by a pacemaker}, 
Phys. Rev. Lett. {\bf 93}, 254101 (2004).

\bibitem{Belykh:2006qr}
I.~Belykh, V.~Belykh, and M.~Hasler,
{\it Generalized connection graph method for synchronization in asymmetrical networks},
Physica D {\bf 224}, 42 (2006).

\bibitem{Restrepo2006}
J.~G.~Restrepo, E.~Ott, and B.~R.~Hunt,
{\it Emergence of coherence in complex networks of heterogeneous dynamical systems},
Phys. Rev. Lett. {\bf 96}, 254103 (2006).

\bibitem{Wiley:2006fk}
D.~A. Wiley, S.~H. Strogatz, and M. Girvan, 
{\it The size of the sync basin}, 
Chaos {\bf 16}, 015103 (2006).

\bibitem{PhysRevE.74.066115}
H. Kori and A.~S. Mikhailov, 
{\it Strong effects of network architecture in the entrainment of coupled oscillator systems}, 
Phys. Rev. E {\bf 74}, 066115 (2006).

\bibitem{Arenas2008}
A. Arenas, A. D\'{i}az-Guilera, J. Kurths, Y. Moreno, and C. Zhou,
{\it Synchronization in complex networks}, 
Phys. Rep. {\bf 469}, 93 (2008).

\bibitem{PhysRevLett.110.174102}
V. Nicosia, M. Valencia, M. Chavez, A. D\'{i}az-Guilera, and V. Latora, 
{\it Remote synchronization reveals network symmetries and functional modules}, 
Phys. Rev. Lett. {\bf 110}, 174102 (2013).

\bibitem{Pecora:2014zr}
L.~M. Pecora, F. Sorrentino, A.~M. Hagerstrom, T.~E. Murphy, and R. Roy, 
{\it Cluster synchronization and isolated desynchronization in complex networks with symmetries}, 
Nat. Commun. {\bf 5}, 4079 (2014).

\bibitem{Skardal2014}
P.~S. Skardal, D. Taylor, and J. Sun, 
{\it Optimal synchronization of complex networks}, 
Phys. Rev. Lett. {\bf 113}, 144101 (2014).

\bibitem{Colizza:2007uq}
V. Colizza, R. Pastor-Satorras, and A. Vespignani, 
{\it Reaction-diffusion processes and metapopulation models in heterogeneous networks}, 
Nat. Phys. {\bf 3}, 276 (2007).

\bibitem{PhysRevLett.110.028701}
S. G\'omez, A. D\'{i}az-Guilera, J. G\'omez-Garde\~nes, C.~J. P\'erez-Vicente, Y. Moreno, and A. Arenas, 
{\it Diffusion dynamics on multiplex networks}, 
Phys. Rev. Lett. {\bf 110}, 028701 (2013).

\bibitem{Youssef:2013}
M. Youssef, Y. Khorramzadeh, and S. Eubank,
{\it Network reliability{\rm :} The effect of local network structure on diffusive processes},
Phys. Rev. E {\bf 88}, 052810 (2013).

\bibitem{PhysRevE.89.020801}
S. Hata, H. Nakao, and A.~S. Mikhailov, 
{\it Advection of passive particles over flow networks}, 
Phys. Rev. E {\bf 89}, 020801 (2014).

\bibitem{Pomerance:2009fk}
A. Pomerance, E. Ott, M. Girvan, and W. Losert, 
{\it The effect of network topology on the stability of discrete state models of genetic control}, 
Proc. Natl. Acad. Sci. USA {\bf 106}, 8209 (2009).

\bibitem{Bunimovich2012}
L.~A. Bunimovich and B.~Z. Webb, 
{\it Isospectral graph transformations, spectral equivalence, and global stability of dynamical networks}, 
Nonlinearity {\bf 25}, 211 (2012).

\bibitem{PhysRevE.75.046103}
F. Sorrentino, M. di~Bernardo, F. Garofalo, and G. Chen, 
{\it Controllability of complex networks via pinning}, 
Phys. Rev. E {\bf 75}, 046103 (2007).

\bibitem{Whalen:2015}
A.~J. Whalen, S.~N. Brennan, T.~D. Sauer, and S.~J. Schiff,
{\it Observability and controllability of nonlinear networks{\rm :} The role of symmetry},
Phys. Rev. X {\bf 5}, 011005 (2015).

\bibitem{PhysRevE.75.036105}
S. Sreenivasan, R. Cohen, E. L\'opez, Z. Toroczkai, and H.~E. Stanley, 
{\it Structural bottlenecks for communication in networks}, 
Phys. Rev. E {\bf 75}, 036105 (2007).

\bibitem{Sun2015}
J. Sun, D. Taylor, and E.~M. Bollt, 
{\it Causal network inference by optimal causation entropy}, 
SIAM J. Appl. Dyn. Syst. {\bf 14}, 73 (2015).

\bibitem{PhysRevE.87.032909}
C. Fu, Z. Deng, L. Huang, and X. Wang, 
{\it Topological control of synchronous patterns in systems of networked chaotic oscillators}, 
Phys. Rev. E {\bf 87}, 032909 (2013).

\bibitem{PhysRevE.81.036101}
Y. Qian, X. Huang, G. Hu, and X. Liao, 
{\it Structure and control of self-sustained target waves in excitable small-world networks}, 
Phys. Rev. E {\bf 81}, 036101 (2010).

\bibitem{PhysRevLett.96.208701}
T. Gross, C.~J.~D. D'Lima, and B. Blasius, 
{\it Epidemic dynamics on an adaptive network}, 
Phys. Rev. Lett. {\bf 96}, 208701 (2006).

\bibitem{RisauGusman200952}
S. Risau-Gusman and D.~H. Zanette, 
{\it Contact switching as a control strategy for epidemic outbreaks},
J. Theor. Biol. {\bf 257}, 52 (2009).

\bibitem{Hagberg:2008wd}
A. Hagberg and D.~A. Schult, 
{\it Rewiring networks for synchronization}, 
Chaos {\bf 18}, 037105 (2008).

\bibitem{Nishikawa:2010fk}
T. Nishikawa and A.~E. Motter, 
{\it Network synchronization landscape reveals compensatory structures, quantization, and the positive effect of negative interactions}, 
Proc. Natl. Acad. Sci. USA {\bf 107}, 10342 (2010).

\bibitem{Watanabe2010lsg}
T. Watanabe and N. Masuda, 
{\it Enhancing the spectral gap of networks by node removal}, 
Phys. Rev. E {\bf 82}, 046102 (2010).

\bibitem{Hart:2015}
J.~D. Hart, J.~P. Pade, T. Pereira, T.~E. Murphy, and R. Roy,
{\it Adding connections can hinder network synchronization of time-delayed oscillators}, 
Phys. Rev. E {\bf 92}, 022804 (2015). 

\bibitem{PhysRevLett.110.064106}
N.~A.~M. Ara\'ujo, H. Seybold, R.~M. Baram, H.~J. Herrmann, and J.~S. Andrade, 
{\it Optimal synchronizability of bearings}, 
Phys. Rev. Lett. {\bf 110}, 064106 (2013).

\bibitem{PhysRevLett.107.034102}
B. Ravoori, A.~B. Cohen, J. Sun, A.~E. Motter, T.~E. Murphy, and R. Roy, 
{\it Robustness of optimal synchronization in real networks}, 
Phys. Rev. Lett. {\bf 107}, 034102 (2011).

\bibitem{PhysRevLett.108.214101}
M. Nixon et~al., 
{\it Controlling synchronization in large laser networks}, 
Phys. Rev. Lett. {\bf 108}, 214101 (2012).

\bibitem{Dobson:2007}
I. Dobson, B.~A. Carreras, V.~E. Lynch, and D.~E. Newman, 
{\it Complex systems analysis of series of blackouts{\rm :} Cascading failure, critical points, and self-organization}, 
Chaos {\bf 17}, 026103 (2007).

\bibitem{Yamins10062014}
D.~L.~K. Yamins et~al., 
{\it Performance-optimized hierarchical models predict neural responses in higher visual cortex}, 
Proc. Natl. Acad. Sci. USA {\bf 111}, 8619 (2014).

\bibitem{Buzsaki:2004uq}
G. Buzs{\'a}ki, C. Geisler, D.~A. Henze, and X.~J. Wang, 
{\it Interneuron diversity series{\rm :} Circuit complexity and axon wiring economy of cortical interneurons}, 
Trends Neurosci. {\bf 27}, 186 (2004).

\bibitem{Tononi24051994}
G. Tononi, O. Sporns, and G.~M. Edelman, 
{\it A measure for brain complexity{\rm :} relating functional segregation and integration in the nervous system}, 
Proc. Natl. Acad. Sci. USA {\bf 91}, 5033 (1994).

\bibitem{Bullmore:2009fk}
E. Bullmore and O. Sporns, 
{\it Complex brain networks{\rm :} Graph theoretical analysis of structural and functional systems}, 
Nat. Rev. Neurosci. {\bf 10}, 186 (2009).

\bibitem{10.1371/journal.pcbi.0030017}
S. Achard and E. Bullmore, 
{\it Efficiency and cost of economical brain functional networks}, 
PLoS Comput. Biol. {\bf 3}, e17 (2007).

\bibitem{Golub:2013}
G.~H. Golub and C.~F. Van Loan,
{\it Matrix Computations}
(The Johns Hopkins University Press, Baltimore, 2013).

\bibitem{Eslami:1994}
M. Eslami, {\it Theory of Sensitivity in Dynamic Systems} (Springer-Verlag, Berlin, 1994).

\bibitem{Chui:1997}
C. K. Chui and G. Chen, {\it Discrete $H^{\infty}$ Optimization} (Springer-Verlag, Berlin, 1997).

\bibitem{Ottino-Loffler:2016}
B. Ottino-L\"offler and S.~H. Strogatz,
Comparing the locking threshold for rings and chains of oscillators,
Phys. Rev. E {\bf 94}, 062203 (2016).

\bibitem{MacArthur:2009}
B.~D. MacArthur and R.~J. S\'{a}nchez-Garc\'{i}a, 
{\it Spectral characteristics of network redundancy},
Phys. Rev. E {\bf 80}, 026117 (2009).

\bibitem{Pecora:1998zp}
L.~M. Pecora and T.~L. Carroll, 
{\it Master stability functions for synchronized coupled systems}, 
Phys. Rev. Lett. {\bf 80}, 2109 (1998).

\bibitem{4140748}
W. Ren, R. Beard, and E. Atkins, 
{\it Information consensus in multivehicle cooperative control}, 
IEEE Control Syst. Mag. {\bf 27}, 71 (2007).

\bibitem{4700861}
P. Yang, R. Freeman, and K. Lynch, 
{\it Multi-agent coordination by decentralized estimation and control}, 
IEEE Trans. Automat. Control {\bf 53}, 2480 (2008).

\bibitem{Nakao:2010fk}
H. Nakao and A.~S. Mikhailov, 
{\it Turing patterns in network-organized activator-inhibitor systems}, 
Nat. Phys. {\bf 6}, 544 (2010).

\bibitem{maas1987transportation}
C. Maas, 
{\it Transportation in graphs and the admittance spectrum}, 
Discrete Appl. Math. {\bf 16}, 31 (1987).

\bibitem{Motter:2013fk}
A.~E. Motter, S.~A. Myers, M. Anghel, and T. Nishikawa, 
{\it Spontaneous synchrony in power-grid networks}, 
Nat. Phys. {\bf 9}, 191 (2013).

\bibitem{kuramoto1984chemical}
Y. Kuramoto, 
{\it Chemical Oscillations, Waves, and Turbulence}
(Springer-Verlag, Berlin, 1984).

\bibitem{PhysRevE.61.5080}
K.~S. Fink, G. Johnson, T. Carroll, D. Mar, and L. Pecora,
{\it Three coupled oscillators as a universal probe of synchronization stability in coupled oscillator arrays}, 
Phys. Rev. E {\bf 61}, 5080 (2000).

\bibitem{Nishikawa:2006fk}
T. Nishikawa and A.~E. Motter, 
{\it Synchronization is optimal in non-diagonalizable networks}, 
Phys. Rev. E {\bf 73}, 065106 (2006).

\bibitem{sm}
Supplemental Material contains 
the description of several example systems and processes 
(Sec.~\ref{si:sec:example-systems}),
proofs of key properties of the 
MCC networks (Sec.~\ref{si:sec:prop}), 
and
supplemental figures (Figs.~\ref{fig:comparisonA}--\ref{fig:n-dep-fixed-m}).

\bibitem{Sun:2009hc}
J. Sun, E.~M. Bollt, and T. Nishikawa, 
{\it Master stability functions for coupled nearly identical dynamical systems}, 
Europhys. Lett. {\bf 85}, 60011 (2009).

\bibitem{PhysRevLett.95.188701}
L. Donetti, P.~I. Hurtado, and M.~A. Mu\~noz, 
{\it Entangled networks, synchronization, and optimal network topology}, 
Phys. Rev. Lett. {\bf 95}, 188701 (2005).

\bibitem{Nishikawa:2006kx}
T. Nishikawa and A.~E. Motter, 
{\it Maximum performance at minimum cost in network synchronization}, 
Physica D {\bf 224}, 77 (2006).

\bibitem{Wang:2007kx}
B. Wang, T. Zhou, Z.~L. Xiu, and B.~J. Kim, 
{\it Optimal synchronizability of networks}, 
Eur. Phys. J. B {\bf 60}, 89 (2007).

\bibitem{PhysRevE.81.025202}
M. Brede, 
{\it Optimal synchronization in space}, 
Phys. Rev. E {\bf 81}, 025202 (2010).

\bibitem{6561538}
Q. Xinyun, W. Lifu, G. Yuan, and W. Yaping, 
{\it The optimal synchronizability of a class network}, 
in {\it 25th Chinese Control and Decision Conference}
(IEEE, Guiyang, China, 2013), p.~3414.

\bibitem{Alon:1986uq}
N. Alon, 
{\it Eigenvalues and expanders}, 
Combinatorica {\bf 6}, 83 (1986).

\bibitem{Lubotzky:1988fk}
A. Lubotzky, R. Phillips, and P. Sarnak,
{\it Ramanujan graphs}, 
Combinatorica {\bf 8}, 261 (1988).

\bibitem{Friedman:1989:SER:73007.73063}
J. Friedman, J. Kahn, and E. Szemer{\'e}di,
{\it On the second eigenvalue of random regular graphs}, 
in {\it STOC '89 Proceedings of the Twenty-first Annual ACM Symposium on Theory of Computing},
edited by D.~S. Johnson (ACM, New York, 1989), p.~587.

\bibitem{Kolokolnikov:2015}
T. Kolokolnikov,
{\it Maximizing algebraic connectivity for certain families of graphs},
Linear Algebra Appl. {\bf 471}, 122 (2015).

\bibitem{MR2159259}
R. Diestel, 
{\it Graph Theory}
(Springer-Verlag, Berlin, 2005).

\bibitem{Duan:2008fk}
Z. Duan, C. Liu, and G. Chen, 
{\it Network synchronizability analysis{\rm :} The theory of subgraphs and complementary graphs}, 
Physica D {\bf 237}, 1006 (2008).

\bibitem{brouwer2012spectra}
A. Brouwer, 
{\it Spectra of Graphs}
(Springer, New York, 2012).

\bibitem{Milo:2002}
R.~Milo, S.~Shen-Orr, S.~Itzkovitz, N.~Kashtan, D.~Chklovskii, and U.~Alon,
{\it Network motifs{\rm :} Simple building blocks of complex networks},
Science {\bf 298}, 824 (2002).

\bibitem{Sporns:2004}
O.~Sporns and R.~K\"{o}tter,
{\it Motifs in brain networks},
PLoS Biol. {\bf 2}, e369 (2004).

\bibitem{Kaluza:2007}
P.~Kaluza, M.~Ipsen, M.~Vingron, and A.~S.~Mikhailov,
{\it Design and statistical properties of robust functional networks{\rm :} A model study of biological signal transduction},
Phys. Rev. E {\bf 75}, 015101 (2007).

\bibitem{software}
\url{https://github.com/tnishi0/mcc-networks}

\bibitem{Achlioptas:2009ys}
D. Achlioptas, R.~M. D'Souza, and J. Spencer, 
{\it Explosive percolation in random networks}, 
Science {\bf 323}, 1453 (2009).

\bibitem{PhysRevLett.105.255701}
R.~A. da~Costa, S.~N. Dorogovtsev, A.~V. Goltsev, and J.~F.~F. Mendes,
{\it Explosive percolation transition is actually continuous}, 
Phys. Rev. Lett. {\bf 105}, 255701 (2010).

\bibitem{Riordan:2011kx}
O. Riordan and L. Warnke, 
{\it Explosive percolation is continuous}, 
Science {\bf 333}, 322 (2011).

\bibitem{Nagler:2011aa}
J. Nagler, A. Levina, and M. Timme, 
{\it Impact of single links in competitive percolation}, 
Nat. Phys. {\bf 7}, 265 (2011).

\bibitem{DSouza:2015fk}
R.~M. D'Souza and J. Nagler, 
{\it Anomalous critical and supercritical phenomena in explosive percolation}, 
Nat. Phys. {\bf 11}, 531 (2015).

\bibitem{Rozenfeld:2010uq}
H. Rozenfeld, L. Gallos, and H. Makse, 
{\it Explosive percolation in the human protein homology network}, 
Eur. Phys. J. B {\bf 75}, 305 (2010).

\bibitem{PhysRevLett.112.155701}
W. Chen, M. Schr\"oder, R.~M. D'Souza, D. Sornette, and J. Nagler, 
{\it Microtransition cascades to percolation}, 
Phys. Rev. Lett. {\bf 112}, 155701 (2014).

\bibitem{comment}
We note that directed networks can exhibit sensitive dependence on the network structure with respect to different objective functions (see, e.g., Ref.~\cite{Nishikawa:2010fk}).

\bibitem{Showalter2015}
A.~F. Taylor, M.~R. Tinsley, and K. Showalter, 
{\it Insights into collective cell behaviour from populations of coupled chemical oscillators}, 
Phys. Chem. Chem. Phys. {\bf 17}, 20047 (2015).

\bibitem{Kiss2002}
I.~Z. Kiss, Y. Zhai, and J.~L. Hudson, 
{\it Emerging coherence in a population of chemical oscillators}, 
Science {\bf 296}, 1676 (2002).

\bibitem{MATLAB:2010}
MATLAB, version 7.10.0 (R2010a), The MathWorks Inc., 2010.

\bibitem{golub2013matrix}
G. Golub and C. Van~Loan, 
{\it Matrix Computations}
(Johns Hopkins University Press, Baltimore, 2013).

\bibitem{Brualdi:2008}
R. A. Brualdi and D. Cvetkovic,
{\it A Combinatorial Approach to Matrix Theory and Its Applications}
(CRC Press, Boca Raton, 2008).

\bibitem{trefethen2005spectra}
L. Trefethen and M. Embree, 
{\it Spectra and Pseudospectra{\rm :} The Behavior of Nonnormal Matrices and Operators}
(Princeton University Press, Princeton, 2005).

\bibitem{Milanese2010:kgd}
A. Milanese, J. Sun, and T. Nishikawa, 
{\it Approximating spectral impact of structural perturbations in large networks}, 
Phys. Rev. E {\bf 81}, 046112 (2010).

\bibitem{kitano2004biological}
H. Kitano, 
{\it Biological robustness}, 
Nat. Rev. Genet. {\bf 5}, 826 (2004).

\bibitem{keio_collection}
T. Baba et al., 
{\it Construction of Escherichia coli K-12 in-frame, single-gene knockout mutants{\rm :} The Keio collection}, 
Mol. Syst. Biol. {\bf 2}, 2006.0008 (2006).

\bibitem{Stelling:2004}
J. Stelling, U. Sauer, Z. Szallasi, F.~J. Doyle, III, and J. Doyle,
{\it Robustness of cellular functions},
Cell {\bf 118}, 675 (2004). 

\bibitem{Yang:2015}
L. Yang, S. Srinivasan, R. Mahadevan, and W.~R. Cluett,
{\it Characterizing metabolic pathway diversification in the context of perturbation size},
Metab. Eng. {\bf 28}, 114 (2015).

\bibitem{Glass2006}
J.~I. Glass, et al., 
{\it Essential genes of a minimal bacterium}, 
Proc. Natl. Acad. Sci. USA {\bf 103}, 425 (2006).

\bibitem{comment2}
A complementary example of interplay between selection optimization and robustness is described in Ref.~\cite{Parvinen:2013lr}, which shows that optimization can drive a population to extinction through a global bifurcation in the structure of the state space. 

\bibitem{Parvinen:2013lr}
K. Parvinen and U. Dieckmann, 
{\it Self-extinction through optimizing selection}, 
J. Theor. Biol. {\bf 333}, 1 (2013).

\bibitem{Bar-Yam30032004}
Y. Bar-Yam and I.~R. Epstein, 
{\it Response of complex networks to stimuli}, 
Proc. Natl. Acad. Sci. USA {\bf 101}, 4341 (2004).

\bibitem{carlson1999highly}
J.~M. Carlson and J. Doyle, 
{\it Highly optimized tolerance{\rm :} A mechanism for power laws in designed systems}, 
Phys. Rev. E {\bf 60}, 1412 (1999).

\bibitem{OttChaosBook}
E. Ott, 
{\it Chaos in Dynamical Systems}, 2nd ed.
(Cambridge University Press, Cambridge, 2002).

\bibitem{Lorenz63}
E.~N. Lorenz, 
{\it Deterministic nonperiodic flow}, 
J. Atmos. Sci. {\bf 20}, 130 (1963).

\bibitem{Babtie1:2014}
A.~C. Babtie, P. Kirk, and M.~P.~H. Stumpf,
{\it Topological sensitivity analysis for systems biology},
Proc. Natl. Acad. Sci. USA {\bf 111}, 18507 (2014).

\bibitem{Filippov1971}
A.~F. Filippov, 
{\it A short proof of the theorem on reduction of a matrix to Jordan form},
Vestnik Moskov. Univ. Ser. I Mat. Meh. {\bf 26}, 18 (1971).

\bibitem{Horn1985}
R.~A. Horn and C.~R. Johnson,
{\it Matrix Analysis}
(Cambridge University Press, Cambridge, 
1985).

\end{thebibliography}

\begin{thebibliography}{10}

\bibitem{Anderson:1986fk}
P.~M. Anderson and A.~A. Fouad, {\it Power System Control and Stability}, 2nd ed. (IEEE Press, Piscataway, 2003).

\bibitem{Motter:2013fk}
A.~E. Motter, S.~A. Myers, M. Anghel, and T. Nishikawa, 
{\it Spontaneous synchrony in power-grid networks}, 
Nat. Phys. {\bf 9}, 191 (2013).

\bibitem{PhysRevE.71.016215}
B.~R. Trees, V. Saranathan, and D. Stroud, 
{\it Synchronization in disordered Josephson junction arrays{\rm :} Small-world connections and the Kuramoto model}, 
Phys. Rev. E {\bf 71}, 016215 (2005).

\bibitem{PhysRevLett.110.218701}
P. Ji, T.~K.~D. Peron, P.~J. Menck, F.~A. Rodrigues, and J. Kurths,
{\it Cluster explosive synchronization in complex networks}, 
Phys. Rev. Lett. {\bf 110}, 218701 (2013).

\bibitem{Filatrella2008}
G. Filatrella, A.~H. Nielsen, and N.~F. Pedersen, 
{\it Analysis of a power grid using a Kuramoto-like model}, 
Eur. Phys. J. B {\bf 61}, 485 (2008).

\bibitem{fd-fb:09z}
F. D\"orfler and F. Bullo, 
{\it Synchronization and transient stability in power networks and non-uniform Kuramoto oscillators}, 
SIAM J. Control. Optim. {\bf 50}, 1616 (2012).

\bibitem{hoppensteadt1997weakly}
F. Hoppensteadt and E. Izhikevich, 
{\it Weakly Connected Neural Networks}
(Springer-Verlag, 1997).

\bibitem{Pecora:1998zp}
L.~M. Pecora and T.~L. Carroll, 
{\it Master stability functions for synchronized coupled systems}, 
Phys. Rev. Lett. {\bf 80}, 2109 (1998).

\bibitem{PhysRevLett.107.034102}
B. Ravoori, A.~B. Cohen, J. Sun, A.~E. Motter, T.~E. Murphy, and R. Roy,
{\it Robustness of optimal synchronization in real networks}, 
Phys. Rev. Lett. {\bf 107}, 034102 (2011).

\bibitem{Nakao:2010fk}
H. Nakao and A.~S. Mikhailov,
{\it Turing patterns in network-organized activator-inhibitor systems}, 
Nat. Phys. {\bf 6}, 544 (2010).

\bibitem{4140748}
W. Ren, R. Beard, and E. Atkins, 
{\it Information consensus in multivehicle cooperative control}, 
IEEE Control Syst. Magazine {\bf 27}, 71 (2007).

\bibitem{maas1987transportation}
C. Maas, 
{\it Transportation in graphs and the admittance spectrum}, 
Discrete Appl. Math. {\bf 16}, 31 (1987).

\bibitem{brouwer2012spectra}
A. Brouwer, {\it Spectra of Graphs} (Springer, New York, 2012).

\end{thebibliography}
\end{document}